\documentclass[a4paper,twoside,12pt]{amsart}
\usepackage{amsmath,amsthm}
\usepackage{amssymb}
\usepackage{bbm}
\usepackage{graphics}
\usepackage[all,cmtip]{xy}
\pagestyle{plain}
%
%
%
\voffset 0truecm  \topmargin 1.2truecm  \headheight 0pt  \headsep 0pt
\topskip 0pt  \textheight 20.5truecm \footskip 1.5cm
%
%
\hoffset 0truecm  \oddsidemargin 1.3truecm  \evensidemargin 0.45truecm 
\textwidth 14.1truecm
\numberwithin{equation}{section}%
\newcommand{\R}{\mathbbm{R}}
\newcommand{\N}{\mathbbm{N}}
\newcommand{\C}{\mathbbm{C}}
%
\newtheorem{Satz}{Theorem}[section]
\newtheorem{Def}[Satz]{Definition}
\newtheorem{Lemma}[Satz]{Lemma}
\newtheorem{Kor}[Satz]{Corollary}

\newtheorem{Hyp}{Hypothesis}
\newcommand{\disp}{\eta}
\newcommand{\dispo}{\eta_0}
\newcommand{\tensor}{\otimes}

\newcommand{\Fock}{\mathcal{F}}
\newcommand{\Tr}{\operatorname{Tr}}

\newcommand{\cl}{\operatorname{cl}}
\newcommand{\dom}{\operatorname{dom}}

\newcommand{\ovl}{\overline}
\newcommand{\unl}{\underline}
%
\newcommand{\Hel}{\mathcal{H}_{el}}

\newcommand{\Lel}{\mathcal{L}_{el}}
\newcommand{\Hael}{H_{el}}
\newcommand{\Hosc}{H_{osc}}

\newcommand{\tauel}{\tau^{el}}

\newcommand{\We}{W}
\newcommand{\const}{\mathfrak{c}}
%

\newcommand{\Af}{\mathfrak{A}_f}

\newcommand{\omf}{ \omega_f}

\newcommand{\Hf}{ \mathcal{H}_{f} }
\newcommand{\Haf}{ H_f }
\newcommand{\Kf}{ \mathcal{K}_f }
\newcommand{\Omf}{\Omega_{f}}
\newcommand{\Lf}{ \mathcal{L}_f}
\newcommand{\LQ}{ \mathcal{L}_Q}

\newcommand{\Mf}{\mathfrak{M}_f}
\newcommand{\Mfa}{{\mathfrak{M}_f^a}}
\newcommand{\pig}{\pi_0}
\newcommand{\Ag}{\mathfrak{A}}
\newcommand{\Qg}{Q}
\newcommand{\Jg}{\mathcal{J}}
\newcommand{\Lg}{\mathcal{L}_h}
\newcommand{\Lo}{\mathcal{L}_0}
\newcommand{\Hg}{H_{h}}
\newcommand{\Ho}{H_0}
\newcommand{\Kg}{\mathcal{K}}
\newcommand{\Hig}{\mathcal{H}}

\newcommand{\omo}{\omega_0}
\newcommand{\Mg}{\mathfrak{M}}

\newcommand{\linhull}{\operatorname{LH}}
\newcommand{\Hil}{\mathcal{H}}
\newcommand{\h}{\mathfrak{h}}
\newcommand{\g}{\mathfrak{g}}
\newcommand{\f}{\mathfrak{f}}
\newcommand{\one}{\mathbbm{1}}

\newcommand{\slim}{\operatorname{s-lim}}
\renewcommand{\leq}{\leqslant}
\renewcommand{\le}{\leqslant}
\renewcommand{\geq}{\geqslant}
\renewcommand{\ge}{\geqslant}
\renewcommand{\Re}{\textrm{Re}\,}
\renewcommand{\Im}{\textrm{Im}\,}
\renewcommand{\imath}{\operatorname{i}}
\newcommand{\W}[1]{\mathcal{W}[{#1}]}
\newcommand{\taug}{\tau}

\newcommand{\tauf}{ \tau^f}
\newcommand{\Omo}{\Omega_0}
\newcommand{\Omg}{\Omega}
\newcommand{\Omp}{\Omega_Q}
\newcommand{\omg}{\omega}
\newcommand{\omp}{\omega_Q}
\newcommand{\Aobs}{\tilde{\mathfrak{A}}}
\newcommand{\Ao}{\mathfrak{A}_0}
\newcommand{\tev}[2]{\alpha_{#1}({#2})}
\newcommand{\tevabb}[1]{\alpha_{#1}}
\newcommand{\piaw}{\pi_{AW}}
\newcommand{\tevf}[2]{\alpha^f_{#1}({#2})}
\newcommand{\taup}[2]{\alpha^Q_{#1}({#2})}

\newcommand{\wlim}{\operatorname{w-lim}}
\begin{document}
\title[Return to Equilibrium]{Return to Equilibrium for an Anharmonic Oscillator
coupled to a Heat Bath}
\author{Martin K\"onenberg}
\address{
Fakult\"at f\"ur Mathematik und Informatik\\
FernUniversit\"at Hagen\\
L\"utzowstra{\ss}e 125\\
D-58084 Hagen, Germany.}
\email{martin.koenenberg@fernuni-hagen.de}
\keywords{harmonic oscillator, KMS states,  thermal equilibrium, Weyl algebra,
$W^*$-dynamical system}
\date{\today}
\begin{abstract}
We study a $C^*$-dynamical system describing a particle
coupled to an infinitely extended heat bath at positive temperature. 
For small coupling constant we prove return to equilibrium exponentially fast in time.
The novelty in this context is to model the
particle by a harmonic or anharmonic oscillator, respectively. 
The proof is based on explicit formulas for
the time evolution of Weyl operators in the harmonic oscillator
case. In the anharmonic oscillator case, a
Dyson's expansion for the dynamics is essential. Moreover, 
we show in the harmonic oscillator case, that $\R$ 
is the absolute continuous spectrum of the Standard Liouvillean 
and that zero is a unique eigenvalue.
\end{abstract}
\maketitle
\section{ Introduction}
In this paper, we study an interacting system of
a single particle coupled to a heat bath, which 
is infinitely extended and near its thermal 
equilibrium at inverse temperature $\beta$.
The heat bath consists in infinitely many bosons with a
momentum density given by Planck's law for the
black body radiation. The particle is confined 
by an increasing potential, which inhibits 
an escape to infinity. In this
situation one expects that the interacting system is driven
to a joint equilibrium state at inverse temperature $\beta$
as time tends to infinity. This
behavior is called `return to equilibrium'. \\
\indent
The mathematical model is formulated in the framework 
of a quantum dynamical system. Here the observables are
modeled by a Weyl algebra and the automorphism group is
implemented by conjugating with a group of unitaries,
which is generated by the Hamiltonian of the interacting 
system. In this paper the particle Hamiltonian 
is either a harmonic or an anharmonic
oscillator and the interaction with the heat bath is given by a
dipole expression.\\
\indent In the last decade small systems coupled to
a heat bath were subject of extensive mathematical 
research. In 
\cite{JaksicPillet1996a} and \cite{JaksicPillet1996b} 
an approach is established that traces back the ergodic properties
of a certain $W^*$-dynamical system to the spectral 
characteristics of the so-called Standard Liouvillean. 
Moreover, the return to equilibrium occurs exponentially fast, 
the rate of decay is obtained by Fermi's golden rule for
the Standard Liouvillean. 
In \cite{BachFroehlichSigal2000} the spectral analysis of the Standard Liouvillean 
is studied by a renormalization group technique. The methods in
\cite{JaksicPillet1996a,JaksicPillet1996b} and \cite{BachFroehlichSigal2000}
require analytic form factor. The
assumptions on the singularity of the form factors at zero
are, however, less restrictive in \cite{BachFroehlichSigal2000}. \\
\indent In \cite{DerezinskiJaksic2001,DerezinskiJaksic2003} the
Liouvillean is studied by means of the limiting absorption principle
and the Feshbach method. In \cite{Merkli2001,FroehlichMerkliSigal2004} 
the positive commutator theorem and a viral theorem are applied to the Liouvillean.
In all papers mentioned above the Hilbert space representing the small system  is finite dimensional, 
and the interaction is small. In summary we call the strategy used by all authors
above \emph{Liouvillean approach}.\\
\indent The harmonic oscillator that interacts with the heat bath
by a dipole expression is a quadratic operator in annihilation and
creation operators, this allows to define a $\ast$-automorphism group $\taug$
on a Weyl algebra $\Ag$. Moreover, for any inverse temperature $\beta$
we can define a KMS state $\omg$ on the Weyl algebra $\Ag$, see Theorem \ref{Satz:Isom} below. 
This is a special property of the harmonic oscillator case. 
We benefit from very explicit results for the harmonic oscillator coupled
to a Boson field at temperature
zero, studied in \cite{Arai1981a,Arai1981c}, or in
\cite{Spohn1997} for a closely related model.\\
\indent We prove in Theorem \ref{Satz:RetEq} that the return to equilibrium 
occurs exponentially fast for a large class of states and observables for small interaction. 
The rate of decay is related to Fermi's golden rule for the Hamiltonian at temperature zero.
This is different from the approach using Liouvilleans, where the
rate of decay is deduced from Fermi's golden rule for the 
Standard Liouvillean, which is larger.\\
\indent We recall that the harmonic oscillator case is not
covered by the methods of the Liouvillean approach, yet. 
However, we can formulate another model  
analog to the Liouvillean approach. In this model we  
define a Standard Liouvillean for each
$\beta$, which generates the time evolution.
One can show existence of a KMS state on some $W^*$-algebra $\Mg$, 
see \cite{Koenenberg2009b}. 
However, we prove in Corollary \ref{Cor5.3} that
the spectrum of the Standard Liouvillean covers $\R$, it is absolutely 
continuous except in zero, and zero  a unique, non-degenerate eigenvalue. Moreover,
the model in the Liouvillean approach is an extension
of the quantum dynamical system formulated on the Weyl algebra $\Ag$ in
the following sense:\\
\indent The  Weyl algebra $\Ag$ can be embedded into $\Mg$ and the
dynamics generated by the Standard Liouvillean extends $\taug$, which
is defined on the Weyl algebra. This is stated in Theorem \ref{Thm5.1}.\\
\indent Moreover, we perturb the harmonic potential
by a potential $V$, which is the Fourier-transform of a complex measure.
The perturbed  Hamiltonian is called anharmonic oscillator.
The class of perturbations is adopted from \cite{Maassen1983},
where the Langevin equation is studied, see also \cite{Spohn1997} and \cite{FidaleoLiverani1999}.
The model is the following, we fix an inverse temperature $\beta$,
and construct a GNS representation associated with $\omg$.
The theory of KMS states ensures the existence of a KMS state $\omp$ 
for the perturbed dynamics in the Hilbert space of the GNS representation.
To obtain convergence to $\omp$ for large times we use Dyson's expansion 
for the perturbed dynamics. In Theorem \ref{Lem5.4}
and Corollary \ref{Kor1g} return to equilibrium is proved for the anharmonic
oscillator model with an exponential rate of decay for small 
coupling and small $V$. The strategy for the proof is based on an estimate,
we learned form \cite{Maassen1983}, it is based on the fact that
certain integrals that occur in Dyson's expansion decay exponential fast
in $t$. Recently, in \cite{BotvichMaassen2009} a combinatorial argument was
found, that relaxes this assumption on the decay. One could hope that
this result can be used to prove return to return to equilibrium for
the anharmonic oscillator, whether no analycity of the form factor is assumed, c.f.
Hypothesis \ref{Hyp1}.
\subsection{Organization of the Paper}
In the subsequent subsection we recall the formalism of second quantization.
We give here the definition of the Hamiltonian in the harmonic oscillator case
and formulate the Hypothesis on the form factors. Moreover, we
give definitions in the context of quantum dynamical
systems. 
In Section \ref{Sec4} we 
define the so-called analytic states and the analytic observables, for which 
the return to equilibrium is exponentially fast. We formulate
and prove Theorem \ref{Satz:Isom} and Theorem \ref{Satz:RetEq}.
Section \ref{Sec5} is devoted to the Liouvillean approach
in the harmonic oscillator case. In this section we prove
Theorem \ref{Thm5.1}, Theorem \ref{Thm5.2}
and Corollary \ref{Cor5.3}. 
The anharmonic oscillator 
coupled to a heat bath is studied in Section \ref{Sec6}, 
where we prove Theorem \ref{Lem5.4} and Corollary
\ref{Kor1g}.
The paper has two appendices: In the first we quote some definitions
and results given in \cite{Arai1981a,Arai1981c}, in the second we recall an estimate, which is
important for the proof of Theorem \ref{Lem5.4}, it was proved originally in \cite{Maassen1983}. 
\subsection{Notation and Definition}
The starting point is the state space $\Hil$,
which represents the coupled system of 
a particle and the bosons at temperature zero,
\begin{equation*}
\Hil =  L^2(\R) \tensor \mathcal{F}_b[\h].
\end{equation*}
The Hilbert space $L^2(\R)$ contains the states
for an isolated particle, which is for simplicity
assumed to be one-dimensional. The bosonic Fock space, 
\begin{equation*}
\mathcal{F}_b[\h]
 =  \C\Omega_{\h} 
\oplus \bigoplus_{n = 1}^\infty \mathcal{S}_n \bigotimes_{k=1}^n \h
\end{equation*}
is modeled over the one boson space $\h:= L^2(\R^3)$. 
Later on we will also use a Fock space over $\C\oplus \h$ or $\C\oplus \C\oplus \h \oplus \h$.
In this context $\Omega_{\h} $ is a fixed normed vector called the vacuum and
$\mathcal{S}_n$ is the projection onto the subspace of totally symmetric tensors. 
The vectors in $\mathcal{F}_b[\h]$ are sequences 
$\psi =  ( \psi_n )_{n = 0}^\infty$ such that
$\psi_n\in \mathcal{S}_n \bigotimes_{k=1}^n \h$ 
for $n \geq  1$ and $\psi_0\in \C\Omega_{\h}$.
The \emph{annihilation and creation operators} on $\mathcal{F}_b[\h]$ 
are denoted by $a(f),  a^*(g)$ for $f, g \in \h$. They satisfy the \emph{Canonical Commutator
Relations} (CCR),
\begin{gather}\label{Eq:CCR}
[  a(f),  a(g)  ] = 0,
\quad [  a^*(f),  a^*(g)  ] = 0,\\ \nonumber
[  a(f),  a^*(g) ]  = \langle  f | g \rangle_{\h}
\end{gather}
and $a(f) \Omega_{\h} = 0$. Furthermore, we define the \emph{field operators} by
\begin{equation*}
\Phi(f) =  \frac{1}{\sqrt{2}} \big(  a(f) + a^*(f)  \big).
\end{equation*}
$a(f), a^*(f)$ and $\Phi(f)$ are defined on the dense subspace of finite sequences
$(\psi_n)_{n=0}^\infty$ of $\mathcal{F}_b[\h]$. All three operators are
closable, we denote their closures by the same symbol. As suggested
in the notation $a^*(f)$ is the adjoint operator of $a(f)$. Moreover,
$\Phi(f)$ is self-adjoint, the exponential $\W{f}:= e^{\imath \Phi(f)}$
is called \emph{Weyl operator}. From \eqref{Eq:CCR} we deduce the CCR for Weyl operators
\begin{equation}\label{Eq:WeylCCR}
\W{f} \W{g}=e^{-\imath \Im\langle f | g\rangle_{\h}/2}  \W{f+g}.
\end{equation}
For a definition of the formalism of second quantization, $C^*$-algebras and related
topic we refer the reader to the textbooks \cite{BratteliRobinson1987,BratteliRobinson1996}.\\
\indent On the Schwartz space $\mathcal{S}(\R)\subset L^2(\R)$ we define by 
$(q \psi)(q) =  q\cdot \psi(q)$ the  position operator
and  by $(p \psi)(q)  = -\imath  \frac{d\psi}{d q}(q)$ the  momentum operator 
of the particle. Annihilation and creation operators in $L^2(\R)$ are given by
\begin{equation*}
A^* = \frac{1}{\sqrt{2}} ( q - \imath  p ),
\quad  A = \frac{1}{\sqrt{2}}( q + \imath p ).
\end{equation*}
They satisfy the CCR
\begin{equation*}
[ A, A ] = [ A^*, A^* ] = 0 ,\quad [ A, A^* ] = \one_{S(\R)},
\end{equation*}
in addition we have $A \Omega_{\C} = 0$ for $\Omega_{\C}(q) =  \pi^{-1/4}  e^{-q^2/2}$.
It is known that 
$
\cl_{L^2(\R)} \linhull \big\{  ( A^* )^n \Omega_{\C}\in L^2(\R) : n\in \N_0 \big\} = L^2(\R),
$
see for instance \cite{BratteliRobinson1996}.
The letter $\cl_{L^2(\R)}$ denotes the closure in the topology of $L^2(\R)$
and $\linhull$ denotes the linear hull. Thus we can identify $L^2(\R)$ with $\mathcal{F}_b[\C]$.
As a consequence we obtain $\Hil  \cong  \mathcal{F}_b[\C\oplus \h]$ 
and
\begin{gather*}
\Omega_{\C\oplus\h}  \cong  \Omega_{\C}\tensor \Omega_{\h},\quad
a(c\oplus h) \cong  \ovl{c}  A\oplus a(h),\\
a^*(c\oplus h) \cong  c  A^*\oplus a^*(h).
\end{gather*}
Moreover, we  field operators $\Phi(c\oplus f)$ and Weyl operators
$\W{c\oplus h}$ in $\Hig$ are given by
\begin{align} \label{Eq:DefFeldOp}
\Phi(c\oplus h)  &=  \frac{1}{\sqrt{2}}\big(a^*(c\oplus h) + a(c\oplus h)\big)\\
\W{c\oplus h}  &= \W{c}\tensor \W{h}.
\end{align}
For these operators the CCR in Equation \eqref{Eq:CCR} and
Equation \eqref{Eq:WeylCCR} are satisfied with 
$\langle\cdot | \cdot\rangle_{\h}$  replaced by 
$\langle\cdot | \cdot\rangle_{\C\oplus\h}$.
In this model the particle is confined by  the potential
$(1/2) x^2$, which inhibits an escape to infinity of the particle.
The harmonic oscillator is
\begin{equation*}
\Hosc  =  (1/2) (p^2 + q^2).
\end{equation*}
We remark, that a large class of potentials should ensure
return to equilibrium, but the choice of the harmonic potential
is essential for our analysis. The reason is, that
we can write the Hamiltonian in the formalism of second
quantization, i.e. $H_{osc}=A^*A+1/2$. This allows us to perform 
calculations explicitly.\\
\indent Throughout this paper the bosons are massless,
being modeled in the momentum space with the dispersion
relation $|k|$. The free Hamiltonian is defined by $\Haf\Omega_\h:=0$
and
\begin{equation}\label{DefHf}
(\Haf\psi)_n:=
\big(\sum_{j=1}^n \one \tensor\ldots\tensor 
         \underbrace{h_{ph}}_{j}\tensor\ldots\tensor \one\big) \psi_n,
\end{equation}
where $h_{ph}$ is the multiplication with $|k|$ in $\h$.
In the following we will not introduce an extra symbol
for multiplication operators, so $h_{ph}$ is just $|k|$.\\
\noindent The interaction operator is given by
\begin{equation*} 
H_I  = \lambda   q\cdot \Phi(|k|^{-1/2} \hat{\rho}),
\end{equation*}
where $\Phi(|k|^{-1/2} \hat{\rho}) := \Phi(0\oplus|k|^{-1/2} \hat{\rho})$.
The parameter $\lambda\not=0$ is the coupling constant, which
models the strength of the interaction.
The Hamiltonian for the interacting system is
\begin{equation}\label{Eq:DefHg}
\Hg  =  H_{osc} \tensor \one +  \one \tensor \Haf  + H_I + 
R q^2\tensor \one,
\end{equation}
where $R:= (\lambda^2/2)  \||k|^{-1} \hat{\rho}\|_\h^2$.
The operator $H_I$ and the additional potential 
$R\cdot q^2$, $\hat{\rho}\in \h$ is the coupling function,
form the so-called dipole approximation, which arises from
a model with a minimally coupled Hamiltonian by replacing $e^{\imath kq}$ by $1+ kq$
and by applying a unitary transform, see \cite{Spohn1997}. 
The crucial simplification is, that $\Hg$ is quadratic in annihilation and creation
operators. Note, that we have the representation $\Haf= \int |k|  a^*(k) a(k) d^3k$,
where $a(k)$ and $a^*(k)$ are annihilation
and creation operators for sharp momentum $k$, see
\cite{ReedSimonII1980}. Thus $\Haf$ is quadratic as well.
\begin{Hyp}\label{Hyp1} We assume
\begin{enumerate}
\item $\hat{\rho}(k) > 0,\ k\in \R^3$.
\item $\hat{\rho}$ is rotation invariant.
\item In polar coordinates: $(0, \infty)\ni r\mapsto \hat{\rho}(r) $ 
      has an analytic continuation on $\{ z\in \C :  |\Im  z| \leq  2 \pi  \beta^{-1}\}$,
      also denoted by $\hat{\rho}$.
\item $\sup_{|s| \leq  2 \pi \beta^{-1}} \int |  
       \hat{\rho}(r + \imath s) |^2 (1+|r|^3) dr  < \infty$
\item For the analytic continuation 
      we have $\hat{\rho}(r) = \hat{\rho}(-r),\ r\in \R$.
\end{enumerate}
\end{Hyp}
We write for a functions $f$ with domain in $\R^3$, 
$f(r,\Theta)$ with $r>0$ and $\Theta\in S^2$ if it is
expressed in polar coordinates. Moreover, we write 
$f(r)$, if $f$ is rotation invariant. $S^2$ denotes
in this context the unit sphere.\\ 
\indent The Hamiltonian in \eqref{Eq:DefHg} was analyzed in \cite{Arai1981a,Arai1981c}.
Among other things therein is shown, that 
$\dom(\Hg)= \dom(\Hosc\tensor \one + \one \tensor \Haf)$,
where $\dom$ denotes  the domain of an operator.
An explicit formula for the asymptotic annihilation and creation operators
is given. It can be deduced from \cite{Arai1981a}, that up to a unitary isomorphism
we have $\Hg\cong \Haf+E_{gs}$, where $E_{gs}$ is the ground state
energy of $\Hg$. This unitary isomorphism is a Bogoliubov transform,
that means, it can be defined by introducing new annihilation and creation
operators and a new vacuum vector, see \cite{Berezin1966}. In the following, we will
make use of some definitions and lemmata in \cite{Arai1981a,Arai1981c}, which
are reformulated in Appendix \ref{Sec3} below.\\
\indent We define the dynamics in the Heisenberg picture
on the algebra of observables 
\begin{equation*}
\Ag:= \cl_{\mathcal{B}(\Hig)}
\linhull\big\{ \W{c\oplus f} :  c\oplus f\in \C\oplus \f \big\},
\end{equation*}
with $\f:= \{ f\in \h :  (1+|k|^{-1/2})f\in \h\}$. Note,
that we can extend regular states on $\Ag$ to annihilation- and creation
operators, which are the actual observables of interest.
The time-evolution is given by
\begin{equation*}
A \mapsto \tau_{t}(A)\in \Ag,\quad \tau_{t}(A)
:= e^{\imath t \Hg}A e^{-\imath t \Hg},\quad A\in \Ag.
\end{equation*}
For fixed $t\in\R$ $\taug_t$ is a map from $\Ag$ to $\Ag$, which is
linear and multiplicative, and obeys $\tau_{t}(A^*)=\tau_{t}(A)^*$. Moreover,
$\tau=(\tau_t)_{t\in\R}$ is a group since
$
\tau_t\circ \tau_s=\tau_{t+s},\quad \tau_0=\one.
$
In the sequel this is called a $\ast$-automorphism group.\\
A state is a positive, normed, linear functional on $\Ag$. This definition
covers the vector states in $\Hig$. Furthermore, it allows
to define equilibrium states for positive temperature $\beta^{-1}$.
Since $\Hg$ has continuous spectrum, it is not possible to define
a Gibbs state for $\Hg$. Thus, we follow the lines of Haag, Hugenholz
and Winnink \cite{HaagHugenholzWinnink1967}, who characterized 
equilibrium states by means of the $(\taug,\beta)$-KMS property, 
see Definition \ref{Def:ClustMixEqu} below. An extensive representation
of the theory of operator algebras can be found in the textbooks
\cite{BratteliRobinson1987,BratteliRobinson1996}.\\    	 	
\indent Next, we give definitions in the context of
 operator algebras, see also \cite{Attal2006}. 
\begin{Def}[Quantum Dynamical System]\label{Def:DynSys}
\begin{enumerate}
\item Let $\mathcal{A}$ be a unital $C^*$-algebra, $\tau$ a $\ast$-automorphism group
and $\omega$ a state. The triple $(\mathcal{A}, \tau, \omega)$
is a quantum dynamical system, if
$
\R\ni t\mapsto \omega(A^*\tau_t(A))
$
is continuous for every $A\in \mathcal{A}$.
\item 
Let $\h$ be any Hilbert space and $\mathcal{M}\subset\mathcal{B}(\h)$ be a
unital $C^*$-algebra. $\mathcal{M}$ is a $W^*$-algebra, if it is weakly closed.
$(\mathcal{M},\tau)$ is called a $W^*$-dynamical system, if in addition   
$
\R\ni t \mapsto \tau_t(A),\quad A\in \mathcal{M}
$
is continuous, while $\mathcal{M}$ carries the $\sigma$-weak topology of $\mathcal{B}(\h)$.
\end{enumerate}
\end{Def}
The next definition fixes some dynamical
properties, which are essential for the following.
\begin{Def}[Mixing, Clustering,  Equilibrium State]\label{Def:ClustMixEqu}
Let $(\mathcal{A}, \tau, \omega)$  be a quantum dynamical system.
\begin{enumerate}
\item  $(\mathcal{A}, \tau, \omega)$ is mixing, if
$
\lim_{t\to \infty} \mu(\tau_t(A))= \omega(A)
$
for any $\omega$-normal state $\mu$ and any $A\in \mathcal{A}$.
\item $(\mathcal{A}, \tau, \omega)$ is clustering, if
$
\lim_{t\to\infty}\omega(A \tau_t(B) C) = \omega(A C) \omega(B)
$
for $A, B, C\in \mathcal{A}$.
\item $\omega$ is an equilibrium state for $(\mathcal{A}, \tau)$
      at inverse temperature $\beta$,
      if for any $A, B\in \mathcal{A}$ there is a function
      $F_\beta(\cdot,A,B)$ being analytic in the strip $\{ z\in \C :  0<\Im z< \beta\},$
      continuous on its closure and taking the boundary conditions
      \begin{gather*}\label{Eq:KMSCond}
       F_\beta(A, B, t) = \omg(A \taug_t(B)),\\
       F_\beta(A, B, t + \imath  \beta) = \omg(\taug_t(B) A).
      \end{gather*} 
      In this case $\omega$ is also called a $(\tau, \beta)$-KMS state.
\end{enumerate}
\end{Def}
We will say that a dynamical system has the property of return to equilibrium, 
if it is mixing. Let $(h_\omega,\pi_\omega,\Omega_\omega)$ be the GNS-triple, see
\cite{BratteliRobinson1987}. For $\mu$ being $\omega$-normal means, that
there is a sequence of positive numbers $(p_n)_n$
with $\sum_{n=1}^\infty p_n=1$ and an orthonormal system of vectors
$(\phi_n)_n\subset \h_\omega$ satisfying
$
\mu(A)= \sum_{n=1}^\infty p_n \langle \phi_n
 | \pi_{\omega}[A] \phi_n \rangle_{\h_\omega},\ \forall A\in \Ag.
$
It can be seen using an approximation argument that $(\mathcal{A}, \tau, \omega)$ is
mixing if it is clustering. Let $\disp:= 1+2\varrho$. 
$\varrho(k)= (e^{\beta |k|}-1)^{-1}$ is the momentum density 
in Planck's law.
\begin{Def}[Bose gas system]\label{Def:IntSyst}
Let $\Af:= \We(\f)$. $\tauf$ is defined by
\begin{equation*}
\tauf_t(W):= e^{\imath t \Haf}  W e^{-\imath t \Haf},\quad W\in \Af
\end{equation*}
An equilibrium state is established by
\begin{equation*}
\omf(\W{f}) =\exp( -1/4 \langle f |  \disp f\rangle_\h).
\end{equation*}
We call the quantum dynamical system $(A_f,\tauf,\omf)$ 
Bose gas system.
\end{Def}
Note that $\tauf_t(\W{f})=\W{e^{\imath t|k|}f}$ and $\tauf_t(\Phi(f))=\Phi(e^{\imath t|k|}f)$.
%
%
\section{Return to Equilibrium for the Harmonic Oscillator}\label{Sec4}
\subsection{Statements and Discussions}
In the next theorem we compare the quantum dynamical system
$(\Ag,\tau,\omg)$ with the Bose gas system. We prove that
both are equivalent up to Bogoliubov transform.
\begin{Satz}[ Isomorphism of the Dynamical Systems]\label{Satz:Isom}
There is a unique symplectic map $v : \C \oplus \f\rightarrow \f$,
such that
\begin{equation*}
\gamma :  \Ag\rightarrow \Af,\ \quad \W{c\oplus f}\rightarrow \W{v(c\oplus f)}
\end{equation*}
is a $\ast$-isomorphism and $\omega:= \omf \circ \gamma$ is
an equilibrium state for $(\Ag,\tau)$. Moreover, we have that
\begin{equation}
\omega(\W{c\oplus f})= \exp(-1/4 \langle v(c\oplus f)|\eta\,v(c\oplus f)\rangle)
\end{equation}
 The following diagram
\begin{equation}\label{Eq:Diag2}
\xymatrix{
\Ag     \ar[r]^{\tau_t} \ar[d]^\gamma &\Ag  \ar[d]^\gamma   \ar[r]^{\omg}&\C \\
\Af \ar[r]^{\tauf_t}             &   \Af \ar[ru]_{\omf}\\
}
\end{equation}
is commutative. We call the quantum dynamical system $(\Ag, \taug, \omega)$
the interacting system. Moreover, there is for $t\in \R$ a unique symplectic
map $w_t : \C \oplus \f \rightarrow \C \oplus \f$, such that $\tau_t(\W{c\oplus f})=\W{w_t(c\oplus f)}$
and $w_{t+s}= w_t\circ w_s$.
\end{Satz}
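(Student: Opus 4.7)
The plan is to reduce everything to Arai's diagonalization of $\Hg$: this produces a Bogoliubov unitary, and we transport the structure of the Bose gas system back through it.

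First, I would invoke the diagonalization result of \cite{Arai1981a,Arai1981c}, reformulated in Appendix \ref{Sec3}: there is a unitary $U:\Hil\to\mathcal{F}_b[\h]$ implementing a Bogoliubov transformation with $U\Hg U^{*}=\Haf+E_{gs}\one$. That $U$ is a Bogoliubov transform says, on the one-particle level, that there is a real-linear map $v:\C\oplus\f\to\h$ with $U\Phi(c\oplus f)U^{*}=\Phi(v(c\oplus f))$ for every $c\oplus f\in\C\oplus\f$, and that preservation of the CCR under conjugation by $U$ is equivalent to $v$ preserving $\Im\langle\cdot|\cdot\rangle_{\h}$, i.e.\ to $v$ being symplectic. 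The explicit formulas for the asymptotic field operators, combined with the weighted integrability and analyticity assumed in Hypothesis \ref{Hyp1} (especially the $|k|^{-1/2}$-tameness of $\hat\rho$), will ensure that $v$ actually takes values in $\f$ and not only in $\h$. Verifying this last inclusion is where I expect the main technical work.

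Exponentiating the identity for $\Phi$ yields $U\W{c\oplus f}U^{*}=\W{v(c\oplus f)}$; setting $\gamma(A):=UAU^{*}$ and extending by linearity and norm-continuity gives a $\ast$-isomorphism from $\Ag$ onto $\Af$ of the claimed form. Uniqueness of $v$ is immediate from the injectivity of the Weyl correspondence $g\mapsto\W{g}$. Commutativity of the left square of the diagram then follows from $Ue^{\imath t\Hg}U^{*}=e^{\imath tE_{gs}}e^{\imath t\Haf}$: the scalar phase cancels in the conjugation, yielding
\begin{equation*}
\gamma\circ\tau_{t}(\W{c\oplus f})=e^{\imath t\Haf}\W{v(c\oplus f)}e^{-\imath t\Haf}=\tauf_{t}\circ\gamma(\W{c\oplus f}).
\end{equation*}
The right triangle is the definition $\omg:=\omf\circ\gamma$, and the $(\tau,\beta)$-KMS property of $\omg$ is inherited from the $(\tauf,\beta)$-KMS property of the quasi-free state $\omf$ for the Bose gas system (which is classical, cf.\ \cite{BratteliRobinson1996}) by transporting the analytic function $F_{\beta}$ through $\gamma$ via the diagram. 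The explicit formula for $\omg(\W{c\oplus f})$ is then direct substitution into Definition \ref{Def:IntSyst}.

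Finally, using $\tauf_{t}(\W{g})=\W{e^{\imath t|k|}g}$, commutativity of the diagram forces
\begin{equation*}
\tau_{t}(\W{c\oplus f})=\gamma^{-1}(\W{e^{\imath t|k|}v(c\oplus f)})=\W{v^{-1}(e^{\imath t|k|}v(c\oplus f))},
\end{equation*}
so $w_{t}:=v^{-1}\circ e^{\imath t|k|}\circ v$ does the job. This $w_{t}$ is symplectic as the composition of the symplectic map $v$, the unitary-and-hence-symplectic multiplier $e^{\imath t|k|}$ on $\h$, and the symplectic inverse of $v$; the group law $w_{t+s}=w_{t}\circ w_{s}$ reduces to $e^{\imath(t+s)|k|}=e^{\imath t|k|}e^{\imath s|k|}$. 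Uniqueness of $w_{t}$ is once more the injectivity of the Weyl correspondence.
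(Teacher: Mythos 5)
Your overall skeleton---transport the Bose gas system back through a Bogoliubov diagonalization of $\Hg$---is consistent with how the paper describes the situation in its introduction, but as written it has a genuine gap: the input you invoke is not available in the form you cite it, and the part you defer is precisely the content of the theorem. Appendix \ref{Sec3} does not contain a diagonalization theorem; what it quotes from Arai are the explicit formulas for the asymptotic fields $a^{\#}_{in}(f)$ (Lemma \ref{lem1.5}) and the algebraic properties of $W_\pm$, $Q_\pm$ (Lemmas \ref{lem1.3}, \ref{lem1.4}). The statement $U\Hg U^{*}=\Haf+E_{gs}$ together with $U\Phi(c\oplus f)U^{*}=\Phi(v(c\oplus f))$ for a symplectic $v:\C\oplus\f\to\f$ is exactly what has to be extracted from those formulas: one must show $\Phi(c\oplus h)=\Phi_{in}(v(c\oplus h))$ for an explicit $v$ (in the paper $v(c\oplus h)=\ovl{W}_+^{*}h+c\,\ovl{Q}_+-\ovl{W}_-^{*}\ovl{h}-\ovl{c}\,\ovl{Q}_-$, obtained from the identities \eqref{Ident1} and Lemma \ref{lem1.4}), and to build $U$ at all one additionally needs asymptotic completeness (cyclicity of the ground state for the $a^{*}_{in}$), which is nowhere quoted. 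Saying ``$U$ is a Bogoliubov transform, hence there is a real-linear $v$'' does not identify $v$, does not show its range lies in $\f$, and---this you omit entirely---does not show $v$ is \emph{onto} $\f$. Surjectivity is needed twice: for $\gamma$ to be a $\ast$-isomorphism onto $\Af$ rather than merely into it, and for your definition $w_t=v^{-1}\circ e^{\imath t|k|}\circ v$ to make sense on all of $\C\oplus\f$. In the paper surjectivity is proved by exhibiting an explicit preimage via Lemma \ref{lem1.4}, symplecticity via the CCR for $\Phi$ and $\Phi_{in}$, and injectivity then follows.

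Once these points are supplied, the remainder of your argument (the commuting square from the intertwining of the Hamiltonians, the KMS property of $\omf\circ\gamma$ by transporting the analytic function $F_\beta$ through $\gamma$, the formula for $\omg(\W{c\oplus f})$, symplecticity and the group law for $w_t$, and uniqueness of $w_t$ from injectivity of $g\mapsto\W{g}$) is sound and reaches the same conclusions as the paper; the paper gets there more economically by never constructing $U$, working directly with $\Phi_{in}$ on $\Hig$ and with the relation $e^{\imath t|k|}v(c\oplus h)=v(w_t(c\oplus h))$. Note also that the explicit formula for $v$ is not optional for the paper's purposes: the definitions of $\mathcal{R}$ and $\mathfrak{g}$ and the proofs of Lemmas \ref{Lemma:Density} and \ref{Lemma:Decay} use it, so an abstract, unidentified $v$ would not support the rest of Section \ref{Sec4}.
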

Let $\mathfrak{k}, \mathfrak{k}'$ be  complex pre-Hilbert spaces.
We say that $s : \mathfrak{k}\rightarrow \mathfrak{k}'$ is symplectic,
whenever $s$ is real linear and 
$\Im \langle s(f) | s(f')\rangle_{\mathfrak{k}'}
=\Im\langle  f |  f'\rangle_{\mathfrak{k}},\  f, f'\in \mathfrak{k}.$
Of special interest is the subalgebra $\We( \C\oplus 0)$,
it contains the observables for the particles. In the context
of open quantum systems it is the small system. The excited states
of the small system are represented as states of $\Ag$ defined by
$
\mu(A)=\omega(B^*AB),\quad A\in \Ag
$
for some $B\in \We( \C\oplus 0)$. 
For $f\in L^2(\R^3)$ we denote by
$\tilde{f}$ a function from
$\R\times S^2$ to $\C$, defined by
\begin{equation}\label{Deftilde}
\tilde{f}(r, \Theta)=
\begin{cases}
f(r, \Theta),\ &r \geq 0\\
\ovl{f(-r, \Theta)},&\ r < 0.
\end{cases}
\end{equation}
Let $H^2(\kappa)$ be the real vector space of functions $f : (0, \infty)\rightarrow L^2(S^2)$,
such that $\tilde{f}$ has an analytic continuation on the strip 
$\{ z\in \C :  |\Im z|<\kappa\}$ and obeying
\begin{equation*}
\int_\R  \sup_{\eta\in \R : |\eta|<\kappa}\|\tilde{f}(r+\imath \eta)\|^2_{L^2(S^2)}(1+|r|^3)dr<\infty.
\end{equation*}
$H^2(\kappa)$ is also invariant with respect to complex conjugation.
We also can regard each $f\in H^2(\kappa)$ as a function in $\f$,
in the sense that
$ k\mapsto f(|k|,k\cdot |k|^{-1})$
belongs to $\f$.
Let $ \mathfrak{g}$ be the real vector space of functions $h\in \C\oplus \f$,
such that $v(h)$ belongs to 
\begin{equation}\label{DefR}
\mathcal{R}= \{ \big(a \imath|k|^{1/2}+  b |k|^{-1/2}\big) f \in \f :a,b\in\R, f\in H^2(\kappa)\},
\end{equation}
which is dense in the norm $\|g\|'_+:= \|g\|_{\h}+\||k|^{-1/2}g\|_{ \h}$.
This is proved in Lemma \ref{DensityR} below.
The positive constant $\kappa$ 
will be fixed in Theorem \ref{Satz:RetEq}. $v$ is introduced in Theorem \ref{Satz:Isom}
\begin{Def}[Analytic states and analytic observables]\label{Def:SmallSys} Let
$\Ao:= \linhull\{\W{g}\in \We(\C\oplus \f) : g\in \mathfrak{g}\}\subset \Ag$  be
the subalgebra of analytic observables and let $\mathcal{S}_0$ be 
the analytic states of $\Ag$ defined by
\begin{equation*}
\mathcal{S}_0=\{ \mu  :  \exists B\in \Ao 
\forall A\in \Ag\ \mu(A)=\omg(B^*AB)\}.
\end{equation*}
\end{Def}
The next theorem states the property of return to equilibrium,
and that the decay in time is exponentially fast for states in $\mathcal{S}_0$
and observables in $\Ao$. The proof is given in the subsequent subsection.
Here and in what follows, $\const$ denotes a constant independent of $t$,
its values may change from one line to the other.
\begin{Satz}[ Return to Equilibrium]\label{Satz:RetEq}
For small $\lambda\not=0$ the interacting system
$(\Ag, \tau, \omega)$ is clustering. Moreover,
for $A\in \Ao$ and $\mu\in \mathcal{S}_0$ one has
\begin{equation}\label{ExpDecay}
|\mu(\tau_t(A))-\omg(A)|\leq \const e^{-\kappa  t}
\end{equation}
for some constant and for  $0<\kappa<\hat{\kappa}$. $\hat{\kappa}$ is a fixed decay rate 
$2\pi\beta^{-1}>\hat{\kappa}>0$ depending only on $\lambda$.
$\hat{\kappa}$ is defined in Lemma \ref{Lemma:Zeros} below.
\end{Satz}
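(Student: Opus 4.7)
The plan is to reduce the estimate to a clustering-type inequality for Weyl operators with arguments in $\g$, transport it to the Bose gas system via the isomorphism $\gamma$ of Theorem \ref{Satz:Isom}, and then extract the exponential decay from a contour-shift argument in the strip $|\Im z|<\hat\kappa$. Since any $\mu\in\mathcal{S}_0$ has the form $\mu(\cdot)=\omg(B^*\cdot B)$ with $B\in\Ao$ and $\omg(B^*B)=\mu(\one)=1$, and $\Ao$ is the linear hull of $\W{g}$ with $g\in\g$, the required estimate \eqref{ExpDecay} is an immediate consequence of the three-point clustering bound
\begin{equation*}
\bigl|\omg(\W{g_1}\taug_t(\W{g})\W{g_3}) - \omg(\W{g_1}\W{g_3})\,\omg(\W{g})\bigr| \leq \const\,e^{-\kappa t}
\end{equation*}
for $g_1,g,g_3\in\g$. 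The clustering statement for arbitrary $A,B,C\in\Ag$ then follows by a routine $\varepsilon/3$ density argument using $\|\taug_t(A)\|=\|A\|$, and mixing is a consequence of clustering.

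Using the commutative diagram \eqref{Eq:Diag2} together with the explicit free dynamics $\tauf_t(\W{f})=\W{e^{\imath t|k|}f}$, uniqueness of $w_t$ forces $v(w_t h)=e^{\imath t|k|}v(h)$. Two applications of the Weyl CCR \eqref{Eq:WeylCCR}, combined with the Gaussian formula $\omf(\W{f})=\exp(-\tfrac14\langle f|\disp f\rangle)$ and $\omg=\omf\circ\gamma$, yield a closed-form expression
\begin{equation*}
\omg(\W{g_1}\taug_t(\W{g})\W{g_3}) = \omg(\W{g_1}\W{g_3})\,\omg(\W{g})\,\exp\bigl(-Q(t)+\imath\,\phi(t)\bigr),
\end{equation*}
where $Q(t)$ and $\phi(t)$ are real linear combinations of oscillatory inner products
\begin{equation*}
I_\chi(t) := \langle v(g_i)\mid \chi\,e^{\imath t|k|}\,v(g_j)\rangle_{\f},\qquad \chi\in\{1,\disp\},
\end{equation*}
with $i,j$ ranging over the three slots ($g_2:=g$). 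Expanding $\exp(-Q+\imath\phi)-1$ to first order reduces the task to proving $|I_\chi(t)|\leq\const\,e^{-\kappa t}$ for each such integral.

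To extract this decay, I would pass to polar coordinates, use the reflection \eqref{Deftilde} to rewrite $I_\chi(t)=\int_\R F(r)\,e^{\imath t r}\,dr$ with $F$ assembled from the analytic continuations of $v(g_i),v(g_j)$ and of $\chi$, and shift the contour from $\R$ to $\R+\imath\kappa$. By the definition of $\g$ together with $\mathcal{R}$ in \eqref{DefR} and $H^2(\hat\kappa)$, the vectors $v(g_i),v(g_j)$ admit analytic continuations on $|\Im z|<\hat\kappa$ with the prescribed weighted $L^2$ bound, while $\disp(r)=\coth(\beta r/2)$ is meromorphic with its first poles at $\pm 2\pi\imath\beta^{-1}$, which lie outside the strip since $\hat\kappa<2\pi\beta^{-1}$. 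The $|k|^{\pm 1/2}$ prefactors in \eqref{DefR} contribute only an integrable endpoint at $r=0$, absorbed by the weight $(1+|r|^3)$ built into $H^2(\hat\kappa)$. The shifted integral is thus absolutely convergent uniformly in $t$ and carries the prefactor $e^{-\kappa t}$ for every $0<\kappa<\hat\kappa$, yielding $|I_\chi(t)|\leq\const\,e^{-\kappa t}$ as required.

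The main obstacle is the step tacitly invoked in the previous paragraph: verifying that the Bogoliubov vector $v(g)$ genuinely inherits the analyticity of the underlying $f\in H^2(\hat\kappa)$ attached to $g\in\g$. This is not automatic, because $v$ comes from Arai's diagonalisation of $\Hg$ and involves division by a dressed dispersion symbol whose zeros, encoded in Lemma \ref{Lemma:Zeros}, could a priori invade the natural strip $|\Im z|<2\pi\beta^{-1}$ set by Planck's factor. The role of the smallness of $\lambda$ is precisely to keep these zeros near the bare oscillator frequency on the real axis, leaving a genuinely nontrivial strip of width $\hat\kappa>0$ in which $v(g)$ remains analytic. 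Once this analytic-continuation property of $v$ is granted, the contour shift above closes the proof.
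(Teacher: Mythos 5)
Your route to the exponential estimate \eqref{ExpDecay} is essentially the paper's: reduce to the three-point function $\omg(\W{g_1}\taug_t(\W{g})\W{g_3})$ for $g_1,g,g_3\in\g$, evaluate it in closed form via the Weyl CCR and the Gaussian formula for $\omf\circ\gamma$, and extract the decay of the oscillatory scalar products by a contour shift in the strip (this is exactly Lemma \ref{Lemma:Decay}). However, your deduction of clustering of $(\Ag,\tau,\omega)$ from that bound by a ``routine $\varepsilon/3$ density argument'' fails: $\Ao$ is \emph{not} norm-dense in $\Ag$. The density of $\g$ in $\C\oplus\f$ (Lemma \ref{Lemma:Density}) holds only in the test-function norm $\|\cdot\|_+$, and $f\mapsto\W{f}$ is norm-discontinuous ($\|\W{f}-\W{g}\|$ is bounded below by a positive universal constant whenever $f\neq g$), so the norm closure of $\linhull\{\W{g}:g\in\g\}$ is far smaller than $\Ag$; an $\varepsilon/3$ argument with $\|\taug_t(A)\|=\|A\|$ needs the three-point convergence on a norm-dense set. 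The paper closes this by proving the (non-exponential) limit $\omg(\W{f_1}\taug_t(\W{f_2})\W{f_3})\to\omg(\W{f_1}\W{f_3})\,\omg(\W{f_2})$ for \emph{arbitrary} $f_1,f_2,f_3\in\C\oplus\f$, using the same closed formula together with $\wlim_{t\to\infty}e^{\imath t|k|}v_2=0$ (Riemann--Lebesgue), and only then invoking density of $\linhull\{\W{f}:f\in\C\oplus\f\}$, which is norm-dense in $\Ag$ by the very definition of $\Ag$. You have all the ingredients for this step, but as stated your argument does not give the first assertion of the theorem.

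Second, the ``main obstacle'' you leave merely ``granted''---that $v(g)$ inherits the analyticity of the $H^2$ datum---is not an obstacle in the paper's setup: $\g$ is \emph{defined} as the set of $h$ with $v(h)\in\mathcal{R}$, so $v(g)=(a\imath|k|^{1/2}+b|k|^{-1/2})f'$ with $f'\in H^2(\kappa)$ holds by definition and the contour shift needs no further justification. Where Lemma \ref{Lemma:Zeros} and the smallness of $\lambda$ really enter is in showing that $\g$ is nontrivial and large: e.g. $v(c\oplus 0)=a|k|^{-1/2}\ovl{Q}+\imath b|k|^{1/2}\ovl{Q}$ lies in $\mathcal{R}$ precisely because $Q=-\lambda\hat{\rho}/D_+(k^2)$ has an analytic, zero-free continuation of the denominator on the strip of width $\kappa<\hat{\kappa}$ (Lemmas \ref{DensityR} and \ref{Lemma:Density}), which is what makes $\Ao$ and $\mathcal{S}_0$ contain the particle observables and excited states. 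So your instinct about the zeros is right, but it concerns the largeness of the analytic classes rather than the validity of \eqref{ExpDecay}; since you stop at ``once this is granted,'' that part of your write-up remains incomplete, although the fix is only to quote the definition of $\g$ and use Lemma \ref{Lemma:Zeros} at the point just described.
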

\subsection{Auxiliary Statements}
Note, that $D_+$ and  $D$ are defined in Definition \ref{Def1} below. 
In the next lemma the zeros of the analytic continuation of
$\R\ni r \mapsto  D_+(r^2)$ on the strip around the real axis
are determined. The imaginary part of the zeros gives
the  decay rate in Theorem \ref{Satz:RetEq}. The function
$[0,\infty)\mapsto D(r^2)$  already occurs in the analysis
in \cite{Arai1981a,Arai1981c} to determine the life time of resonances for
the temperature zero Hamiltonian $\Hg$. The zeros are
related to Fermi's golden rule for $\Hg$.

\begin{Lemma}\label{Lemma:Zeros}
For $\lambda\not=0$ small enough, there is an analytic continuation
of 
$\R\ni r \mapsto  D_+(r^2)$
on $\{  z\in\C :  |\Im  z| <  |\Im \hat{\kappa}(\lambda)| \}$
that has no zeros. Furthermore, $\hat{\kappa}(\cdot)$ is even, analytic and
\begin{equation*}
\hat{\kappa}(\lambda) =  1 + \kappa_2 \lambda^2 + \kappa_4\lambda^4 + \ldots,
\end{equation*}
where $\Im  \kappa_2 =  2 \pi^2  \hat{\rho}^2(1)$. 
\end{Lemma}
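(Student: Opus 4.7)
The plan is to work from the explicit formula for $D_+$ quoted from \cite{Arai1981a,Arai1981c} in Appendix \ref{Sec3}. In the present setting, after exploiting rotation invariance of $\hat{\rho}$ (Hypothesis \ref{Hyp1}(2)) to reduce the $\R^3$-integration to a radial one, $r\mapsto D_+(r^2)$ can be put in the schematic form
\[
D_+(r^2) \;=\; 1 - r^2 + \lambda^2 F_+(r^2),
\]
where $F_+$ is a Cauchy-type integral of a positive weight constructed from $|\hat{\rho}|^2$, taken as the boundary value from the upper half-plane. The first step is to produce an analytic continuation of $r\mapsto D_+(r^2)$ into a complex strip about $\R$. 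This is done by deforming the contour of integration in $F_+$, which is legitimate because Hypothesis \ref{Hyp1}(3)--(4) furnishes an analytic extension of $\hat{\rho}$ on a strip of width $2\pi\beta^{-1}$ together with a weighted $L^2$-bound keeping the deformed integral absolutely convergent; the symmetry $\hat{\rho}(r)=\hat{\rho}(-r)$ from Hypothesis \ref{Hyp1}(5) is used to write the integrand as an even function, and the $+\imath 0$ prescription becomes a pole contribution when the contour is pushed across the real point $r=1$.

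At $\lambda = 0$ the polynomial $1 - r^2$ has a simple zero at $r = 1$, so the analytic implicit function theorem applied to $D_+(r^2) = 0$ near $(r,\lambda) = (1,0)$ yields a unique analytic function $\hat{\kappa}(\lambda)$ with $\hat{\kappa}(0) = 1$ satisfying $D_+(\hat{\kappa}(\lambda)^2) = 0$ for small $\lambda$. Because the coupling enters $\Hg$ linearly in $\lambda$, the first nontrivial contribution of $H_I$ to $D_+$ appears at second order, and only even powers of $\lambda$ survive; hence $D_+$ is a function of $\lambda^2$, and uniqueness of the root forces $\hat{\kappa}(-\lambda) = \hat{\kappa}(\lambda)$. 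This gives the expansion $\hat{\kappa}(\lambda) = 1 + \kappa_2 \lambda^2 + \kappa_4 \lambda^4 + \cdots$.

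Differentiating $D_+(\hat{\kappa}(\lambda)^2) = 0$ twice in $\lambda$ at $\lambda=0$ and using $\partial_r(1 - r^2)|_{r=1} = -2$ isolates $\kappa_2 = F_+(1)/2$. The imaginary part of $F_+(1)$ is then computed by Sokhotski--Plemelj: the $+\imath 0$ boundary value of the radial integral contributes $\pi$ times the density of states at the oscillator frequency $r=1$. Once the normalization constants from Appendix \ref{Sec3} are tracked, this produces $\Im F_+(1) = 4\pi^2 \hat{\rho}(1)^2$, and hence $\Im \kappa_2 = 2\pi^2 \hat{\rho}(1)^2$, which is strictly positive by Hypothesis \ref{Hyp1}(1). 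Thus $\hat{\kappa}(\lambda)$ leaves the real axis for small $\lambda\neq 0$.

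Finally, to conclude that $D_+(r^2)$ has \emph{no} zero on the open strip $\{|\Im r| < |\Im \hat{\kappa}(\lambda)|\}$, I would combine a Rouch\'e-type comparison with the implicit-function argument: on compact subsets of the strip the perturbation $\lambda^2 F_+(r^2)$ is uniformly small, so any zero must sit near one of the zeros $\pm 1$ of $1 - r^2$, where the implicit function theorem gives uniqueness; the symmetry $r\mapsto -r$ combined with evenness of $\hat{\kappa}$ disposes of the zero near $-1$. The weighted $L^2$-control on $\hat{\rho}$ from Hypothesis \ref{Hyp1}(4) provides the decay of $F_+(r^2)$ needed to exclude zeros with $|\Re r|$ large. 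The main obstacle I anticipate is the bookkeeping in this third step: reconciling the $+\imath 0$ prescription, the contour deformation across the real axis, and the angular integration so as to produce the precise constant $2\pi^2 \hat{\rho}(1)^2$ in $\Im \kappa_2$ rather than merely a positive multiple of $\hat{\rho}(1)^2$.
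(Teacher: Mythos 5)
Your proposal follows essentially the same route as the paper's proof: reduce to the radial formula $D_+(r^2)=1-r^2+\lambda^2 F_+(r^2)$ with a Cauchy-type integral, continue analytically into a strip by shifting the contour (using Hypothesis \ref{Hyp1}(3)--(5)) and collecting the pole contribution, apply the analytic implicit function theorem at $(r,\lambda)=(1,0)$ where $\partial_r(1-r^2)=-2\neq 0$, get evenness of $\hat{\kappa}$ from the fact that only $\lambda^2$ enters, exclude zeros elsewhere in the strip by the uniform smallness of the $O(\lambda^2)$ perturbation, and extract $\Im\kappa_2$ from the boundary value of the Cauchy integral. This is exactly the paper's construction of $G(z,\lambda)$ and its use of the estimate $\sup_{|\Im z|<\eta'}|G(z,0)-G(z,\lambda)|\leq C_{\eta'}|\lambda|^2$ together with $\kappa_2=-\tfrac{1}{2}\,\partial_\lambda^2 G(1,0)/\partial_z G(1,0)$. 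The one step you do not actually establish is the constant, and it is the step you yourself flag: you assert $\Im F_+(1)=4\pi^2\hat{\rho}(1)^2$, but the Sokhotski--Plemelj computation you sketch, carried out in your own normalization (angular factor $4\pi$ and the Jacobian $\tfrac{1}{2}$ from $\delta(1-r^2)$), gives $|\Im F_+(1)|=2\pi^2\hat{\rho}(1)^2$ with the boundary value taken from above carrying a minus sign, hence $|\Im\kappa_2|=\pi^2\hat{\rho}(1)^2$ rather than the stated $2\pi^2\hat{\rho}(1)^2$; the sign is immaterial since only $|\Im\hat{\kappa}(\lambda)|$ enters the lemma and Theorem \ref{Satz:RetEq}, but the numerical factor is precisely the bookkeeping that must be reconciled between the full residue picked up at the moving pole $w=z$ when shifting the contour and the half-residue reappearing when the shifted integral is rewritten as a principal value at $z=1$. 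The paper settles this by differentiating its explicitly continued $G$ (residue term plus integral over $\R+\imath\eta$) at $(1,0)$, so to complete your argument you would need to carry out that same explicit differentiation rather than quote the value of $\Im F_+(1)$.
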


\begin{Lemma}\label{DensityR}
$\mathcal{R}$ is dense in $(\f,\|\cdot\|)$, where $\|g\|'_+:= \|g\|_{\h}+\||k|^{-1/2}g\|_{ \h}$.
\end{Lemma}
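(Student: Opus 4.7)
The set $\mathcal{R}$ contains the two natural subfamilies
$\mathcal{R}_1:=\{|k|^{-1/2}f:f\in H^2(\kappa)\}$ (take $a=0$, $b=1$) and
$\mathcal{R}_2:=\{\imath|k|^{1/2}f:f\in H^2(\kappa)\}$ (take $a=1$, $b=0$),
so it suffices to show that the sum $\mathcal{R}_1+\mathcal{R}_2\subseteq\linhull\mathcal{R}$ is already dense in $(\f,\|\cdot\|'_+)$. The strategy is to split a target $g\in\f$ into a piece supported near the origin, which will be approximated by $\mathcal{R}_1$, and a piece supported away from the origin, which will be approximated by $\mathcal{R}_2$.

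Fix a smooth radial cut-off $\chi$ with $\chi(r)=1$ for $r\leq 1$ and $\chi(r)=0$ for $r\geq 2$, and write $g=g_0+g_\infty$ with $g_0:=\chi(|k|)g$, $g_\infty:=(1-\chi(|k|))g$. In polar coordinates $k=r\Theta$ the norm $\|\cdot\|'_+$ is equivalent to
\[
g\mapsto\Bigl(\int_0^\infty (r^2+r)\,\|g(r,\cdot)\|^2_{L^2(S^2)}\,dr\Bigr)^{1/2},
\]
so both pieces lie in $\f$ with $\|g_0\|'_++\|g_\infty\|'_+\leq\const\|g\|'_+$. Set $u_0:=|k|^{1/2}g_0$ and $u_\infty:=-\imath|k|^{-1/2}g_\infty$. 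Because of the cut-off, both $u_0$ and $u_\infty$ lie in the weighted space $\Lambda:=L^2((0,\infty);L^2(S^2);(1+r^3)\,dr)$: $u_0$ has compact radial support, while $r^3\|u_\infty(r,\cdot)\|^2=r^2\|g_\infty(r,\cdot)\|^2$ on the support $r\geq 1$, which integrates to a bounded multiple of $\|g_\infty\|_\h^2$.

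Granted—as I justify below—that $H^2(\kappa)$ is dense in $\Lambda$, pick $f_0,f_\infty\in H^2(\kappa)$ close to $u_0,u_\infty$ in $\Lambda$. Since $|k|^{-1/2}f_0-g_0=|k|^{-1/2}(f_0-u_0)$ and $\imath|k|^{1/2}f_\infty-g_\infty=\imath|k|^{1/2}(f_\infty-u_\infty)$, the polar rewriting of $\|\cdot\|'_+$ gives
\[
\|g_0-|k|^{-1/2}f_0\|'_+ + \|g_\infty-\imath|k|^{1/2}f_\infty\|'_+ \;\leq\; \const\,\bigl(\|u_0-f_0\|_\Lambda+\|u_\infty-f_\infty\|_\Lambda\bigr).
\]
Hence $|k|^{-1/2}f_0+\imath|k|^{1/2}f_\infty\in\mathcal{R}_1+\mathcal{R}_2$ approximates $g$ to any desired accuracy.

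The key ingredient—and the main obstacle—is the density of $H^2(\kappa)$ in $\Lambda$, as this is the only point where the specific analytic structure of $H^2(\kappa)$ must be used. Given $u\in\Lambda$, the argument is to extend it to $\R$ via $\tilde u(-r):=\overline{u(r)}$, convolve with an even real Gaussian $G_\varepsilon$ of width $\varepsilon$, and truncate in $r$ by a smooth bump adapted to the weight $(1+r^3)\,dr$. By the Paley--Wiener theorem, $\tilde u*G_\varepsilon$ extends to an entire function whose restriction to any strip $|\Im z|<\kappa'<\kappa$ has Gaussian-controlled growth, which secures the sup-of-strip integrability in the definition of $H^2(\kappa)$; the evenness of $G_\varepsilon$ preserves the symmetry $\tilde f(-r)=\overline{f(r)}$. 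A diagonal argument in the mollification and truncation parameters, together with $\kappa'\uparrow\kappa$, produces the required approximating sequence in $H^2(\kappa)$ and closes the proof.
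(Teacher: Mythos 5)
Your core analytic input is sound: the density of $H^2(\kappa)$ in the weighted space $L^2\big((0,\infty);L^2(S^2);(1+r^3)dr\big)$ (which is the paper's $\hat{\mathcal{H}}=L^2(\R^3,(|k|^{-2}+|k|)d^3k)$ written radially), obtained by symmetric reflection, Gaussian mollification and truncation, together with the continuity of multiplication by $a\imath|k|^{1/2}+b|k|^{-1/2}$ from that space into $(\f,\|\cdot\|'_+)$. This is essentially the same mechanism the paper uses; the paper proves the same auxiliary density statement by a tensor-product reduction and a Fourier/Schwartz-class argument, and your direct vector-valued mollification is a legitimate variant (the limit $\kappa'\uparrow\kappa$ is not even needed, since the mollified functions are entire and the Gaussian bound holds on the full strip).

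The genuine gap is the reduction step. The lemma asserts that $\mathcal{R}$ itself is dense, and an element of $\mathcal{R}$ is of the form $(a\imath|k|^{1/2}+b|k|^{-1/2})f$ with a \emph{single} $f\in H^2(\kappa)$; this single-$f$ structure is exactly what is used afterwards (definition of $\g$, proof of Lemma \ref{Lemma:Decay}). You approximate $g$ by $|k|^{-1/2}f_0+\imath|k|^{1/2}f_\infty$, i.e.\ by an element of $\mathcal{R}_1+\mathcal{R}_2\subseteq\linhull\mathcal{R}$, and density of the linear hull does not imply density of $\mathcal{R}$, since sums of two elements of $\mathcal{R}$ do not visibly lie in $\mathcal{R}$. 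The gap is repairable, but not for free: one checks that $|k|^{-1/2}f_0+\imath|k|^{1/2}f_\infty=(\imath|k|^{1/2}+|k|^{-1/2})h$ with $h(r)=\big(f_0(r)+\imath r f_\infty(r)\big)/(1+\imath r)$, and $h\in H^2(\kappa)$ provided $\kappa<1$, because the multipliers $1/(1+\imath z)$ and $\imath z/(1+\imath z)$ are then bounded, analytic on the strip and preserve the reflection symmetry; for $\kappa\geq 1$ the pole of $1/(1+\imath z)$ at $z=\imath$ obstructs this, and nothing in your write-up (nor in the statement of the lemma) fixes the size of $\kappa$. Alternatively, the splitting can be avoided altogether: for the single choice $a=b=1$ one has $|\imath|k|^{1/2}+|k|^{-1/2}|^2=|k|+|k|^{-1}$, hence $u:=(\imath|k|^{1/2}+|k|^{-1/2})^{-1}g$ satisfies $\|u\|_{\hat{\mathcal{H}}}\leq\|g\|'_+$ for \emph{every} $g\in\f$ (radially, $r(1+r^3)/(1+r^2)\leq r+r^2$); approximating this one function $u$ in $\hat{\mathcal{H}}$ by $H^2(\kappa)$ and multiplying back produces a single element of $\mathcal{R}$, which is in spirit how the paper argues (it first approximates $g$ inside $C_c^\infty(\R^3\setminus\{0\})$ and then divides by the weight for one fixed $(a,b)\neq(0,0)$).
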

This Lemma is used to show, that
the algebra $\Ao$ and
the set of states $\mathcal{S}_0$
are large in the following sense:
\begin{Lemma}\label{Lemma:Density}
$\mathfrak{g}$ is dense in $\C\oplus \f$,
with respect to the norm
$
\|g\|_+:= \|g\|_{\C\oplus \h}+\||k|^{-1/2}g\|_{\C\oplus \h}.
$
Moreover, the image of $\mathfrak{g}$ under $v$ is dense in $(\f,\|\cdot\|'_+)$.
\end{Lemma}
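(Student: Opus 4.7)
The plan is to derive both assertions from Lemma \ref{DensityR} via the symplectic bijection $v:\C\oplus\f\to\f$ of Theorem \ref{Satz:Isom}. Since $\gamma$ is a $\ast$-isomorphism of the Weyl algebras $\Ag$ and $\Af$, the underlying symplectic map $v$ must be a real-linear bijection, so the definition $\mathfrak{g}=\{h\in\C\oplus\f : v(h)\in\mathcal{R}\}$ yields $v(\mathfrak{g})=\mathcal{R}$. The second assertion is then just Lemma \ref{DensityR}.

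For the first assertion, the strategy is to establish a norm equivalence
\[
c_1\|h\|_+\le \|v(h)\|'_+\le c_2\|h\|_+,\qquad h\in\C\oplus\f,
\]
so that $v$ is a homeomorphism between the normed spaces $(\C\oplus\f,\|\cdot\|_+)$ and $(\f,\|\cdot\|'_+)$. Given this equivalence, density of $\mathcal{R}=v(\mathfrak{g})$ in $(\f,\|\cdot\|'_+)$ transports under $v^{-1}$ to density of $\mathfrak{g}$ in $(\C\oplus\f,\|\cdot\|_+)$, which is the claim.

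To prove the norm equivalence, I would invoke the explicit description of $v$ recalled in Appendix \ref{Sec3} following \cite{Arai1981a,Arai1981c}: schematically, $v$ sends $c\oplus f$ to the sum of a bounded multiplication operator (in $|k|$) applied to $f$ and a resolvent-type $\f$-vector built from $|k|^{\pm 1/2}\hat{\rho}$ and the scalar $c$. Boundedness of $v$ and $v^{-1}$ in the plain $L^2$ norm is routine from Hypothesis \ref{Hyp1}; the nontrivial ingredient is the infrared half of the norm, i.e.\ the two-sided bound $\||k|^{-1/2}v(h)\|_{\h}\sim \||k|^{-1/2}h\|_{\C\oplus\h}$. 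Here the main obstacle is precisely the infrared bookkeeping: one must verify that the Bogoliubov coefficients defining $v$ are regular in $k$ and do not worsen the $|k|^{-1/2}$ singularity, which uses the integrability conditions on $\hat{\rho}$ (including $(1+|r|^3)$-weighted bounds) in Hypothesis \ref{Hyp1}. Once this equivalence of graph-type norms is established, the density transfer through $v^{-1}$ is immediate and the lemma follows.
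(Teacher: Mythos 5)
Your proposal is correct, and it rests on the same two pillars as the paper's proof: Lemma \ref{DensityR} and the mapping properties of the Bogoliubov operators from \cite{Arai1981a,Arai1981c} quoted in Appendix \ref{Sec3}. The difference is one of packaging. The paper never establishes (nor needs) the upper half of your norm equivalence, i.e.\ boundedness of $v$ from $(\C\oplus\f,\|\cdot\|_+)$ to $(\f,\|\cdot\|'_+)$: it only uses that $v$ maps into $\f$ (already part of Theorem \ref{Satz:Isom}) together with continuity of the explicit inverse, $v^{-1}(g)=\big(\langle \ovl{Q}_+|g\rangle_\h+\langle Q_-|\ovl{g}\rangle_\h\big)\oplus\big(\ovl{W}_+g+W_-\ovl{g}\big)$, which is exactly Lemma \ref{lem1.4} (boundedness of $W_\pm$ on $M_0$ and $M_{-1/2}$, plus $Q_\pm\in\h$); in addition it notes $v(c\oplus 0)=a|k|^{-1/2}\ovl{Q}+\imath b|k|^{1/2}\ovl{Q}\in\mathcal{R}$, so that the scalar component can be adjusted exactly, and then approximates only the $\f$-component. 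Your homeomorphism formulation subsumes that last step (the scalar part of $v^{-1}$ is a bounded functional), which is a small gain in cleanliness; the price is that the upper bound $\|v(h)\|'_+\le c_2\|h\|_+$ as you state it requires bounds on $\ovl{W}_\pm^*$ (equivalently on $T$) in $M_{-1/2}$, which are not among the quoted statements (Lemma \ref{lem1.3} gives $T,T^*$ on $M_{1/2},M_0,M_{-1}$ only) and would need, e.g., an interpolation argument — also, $v$ is not a multiplication operator plus a rank-type term, but involves the wave-operator-like maps $\ovl{W}_\pm^*$ of \eqref{Defv}. Since only continuity of $v^{-1}$ is actually needed for the density transfer, I recommend dropping the two-sided equivalence and citing Lemma \ref{lem1.4} directly, at which point your argument coincides with the paper's.
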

The following Lemma is an ingredient in the proof of
Theorem \ref{Satz:RetEq}, ensuring the exponential
decay. The special choice of form factors in $\mathfrak{g}$
allows to express the real- and the imaginary part
of the scalar products in  \eqref{Eq:ScalarPrDecay} and \eqref{eq2.6.2} below as an integral over $\R$.
The decay is obtained by shifting the contour of integration
in the upper complex half plane. Since in the definition of 
$f, g\in\mathfrak{g}$ the inverse of $D_+(r^2)$ (analytically
continued) occurs, the decay rate is determined by the
zeros of $D_+(r^2)$.
\begin{Lemma}\label{Lemma:Decay}
For $f, g\in \mathfrak{g}$ one has
\begin{gather}\label{Eq:ScalarPrDecay}
|\Re \langle v(f)|\disp  e^{\imath t |k|} v(g) \rangle_\h |
\leq \const e^{-\kappa t}\\ \label{eq2.6.2}
|\Im \langle v(f)|e^{\imath t |k|} v(g)\rangle_\h |\leq \const e^{-\kappa t}.
\end{gather}
$\const $ is a positive constant uniform in $t$.
\end{Lemma}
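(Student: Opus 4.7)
The plan is to exploit the analyticity encoded in $\mathfrak{g}$ and $H^2(\kappa)$ through a contour shift in the upper half plane. For $h \in \mathfrak{g}$ we have by definition $v(h) \in \mathcal{R}$, so $v(h)(k) = (a\,\imath|k|^{1/2} + b|k|^{-1/2})\phi(k)$ with $\phi \in H^2(\kappa)$, and $r \mapsto \tilde{\phi}(r)$ extends holomorphically to the strip $\{|\Im z| < \kappa\}$. The decay rate ultimately comes from shifting the $r$-contour by $\kappa$, which is legitimate so long as $\kappa < \hat{\kappa}$ (to avoid zeros of $D_+(r^2)$ implicit in $\phi$ through $v$) and $\kappa < 2\pi\beta^{-1}$ (to avoid poles of $\coth(\beta r/2)$).

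First I would pass to polar coordinates. Using Hypothesis \ref{Hyp1}(2) and the fact that $v$ preserves rotation invariance, the scalar products reduce to one-dimensional integrals
\begin{equation*}
\langle v(f)\,|\,\eta\,e^{\imath t|k|}v(g)\rangle_\h = 4\pi \int_0^\infty \overline{V_f(r)}\,V_g(r)\,\eta(r)\,e^{\imath t r}\,r^2\,dr,
\end{equation*}
and the analogous identity without $\eta$ for \eqref{eq2.6.2}, where $V_f(r) = (a_f\imath r^{1/2} + b_f r^{-1/2})\phi_f(r)$ and $\eta(r) = \coth(\beta r/2)$. Next I would symmetrize the integrals over $(0,\infty)$ into integrals over $\R$. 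For the real part, using $\overline{\langle v(f)|\eta e^{\imath t|k|}v(g)\rangle} = \langle v(g)|\eta e^{-\imath t|k|}v(f)\rangle$, a substitution $r \mapsto -r$ on the conjugate integral, the oddness of $\eta$, the tilde-extension \eqref{Deftilde} of $\phi_f,\phi_g$, and the real-linearity of $V_f, V_g$ in $a,b$, one obtains an expression of the shape
\begin{equation*}
\Re\langle v(f)\,|\,\eta\,e^{\imath t|k|}v(g)\rangle_\h = \const\int_\R G_1(r)\,e^{\imath t r}\,dr,
\end{equation*}
and analogously $\Im\langle v(f)\,|\,e^{\imath t|k|}v(g)\rangle_\h = \const\int_\R G_2(r)\,e^{\imath t r}\,dr$. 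The branch behaviour of $r^{\pm 1/2}$ at $r=0$ is absorbed because the antisymmetric part under $r \mapsto -r$ is what survives, and the combinations $\eta(r)r^2$ and $r^2$ are even and smooth enough near the origin to kill the apparent $r^{-1/2}$ singularity.

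Third, I would check that $G_1, G_2$ extend holomorphically to the strip $\{|\Im z| \leq \kappa\}$: the factors coming from $\phi_f, \phi_g$ are holomorphic there by the definition of $H^2(\kappa)$, the hidden $1/D_+(z^2)$ factors from $v$ are holomorphic there by Lemma \ref{Lemma:Zeros}, and $\coth(\beta z/2)$ is holomorphic in the strip because $\kappa < 2\pi\beta^{-1}$. Moreover the $L^2$-bound weighted by $(1+|r|^3)$ from the definition of $H^2(\kappa)$ guarantees integrability uniformly over horizontal lines in the strip, as well as vanishing of the vertical contributions at $\pm R$ as $R \to \infty$. A straightforward application of Cauchy's theorem then replaces $\R$ by $\R + \imath\kappa$, whereupon the factor $e^{\imath t(r+\imath\kappa)} = e^{\imath t r}e^{-\kappa t}$ pulls out the desired exponential decay and the remaining integral is bounded uniformly in $t$ by the Cauchy--Schwarz inequality with weight $(1+|r|^3)^{-1}$.

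The main obstacle is the bookkeeping in the symmetrization step: one must track how the half-integer powers $r^{\pm 1/2}$, the odd factor $\eta(r)$, the weight $r^2$, and the complex conjugation inherent in $\tilde{\phi}$ combine, in particular across $r=0$ where $r^{-1/2}$ and $\eta(r) \sim 2/(\beta r)$ individually diverge. It is the cancellation of antisymmetric pieces combined with the $r^2$ factor from the polar measure that makes the symmetrized integrand genuinely holomorphic on the full strip rather than merely meromorphic, and verifying this is where the argument is most delicate.
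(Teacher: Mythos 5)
Your proposal follows essentially the same route as the paper's proof: write $v(f),v(g)$ in the $\mathcal{R}$-form $(\imath a|k|^{1/2}+b|k|^{-1/2})\cdot(H^2(\kappa)\text{-function})$, reduce to a one-dimensional integral after integrating over the sphere, symmetrize to an integral over $\R$ via the tilde-extension \eqref{Deftilde} together with the oddness of $\coth(\beta r/2)$, and shift the contour upward by $\kappa$ using the analyticity on the strip and the $(1+|r|^3)$-weighted $L^2$ bound, exactly as in the paper. The only cosmetic deviations are that no rotation invariance of the $H^2(\kappa)$ parts is needed (the sphere integral $h(r)=\int_{S^2}\ovl{f'(r,\Theta)}g'(r,\Theta)\,d\Theta$ replaces your pointwise product) and that no hidden $1/D_+(z^2)$ factor enters this particular lemma, since $v(f)\in\mathcal{R}$ holds by the very definition of $\mathfrak{g}$; the constraint $\kappa<\hat{\kappa}$ matters only for showing $\mathfrak{g}$ is large.
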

\subsection{Proofs}
\begin{proof}[Proof of Theorem \ref{Satz:Isom}]
The proof uses intensively results in \cite{Arai1981a,Arai1981c}, quoted in  Appendix \ref{Sec3} below.
From Lemma \ref{lem1.5} follows that
$\Phi_{in}(f) := \frac{1}{\sqrt{2}} \big( a_{in}(f) + a_{in}^*(f) \big)$
is equal to
\begin{eqnarray*}
 \Phi\big( (\langle  \ovl{Q}_+ | f \rangle_{\h}
 + \langle  f | \ovl{Q}_- \rangle_{\h} )\oplus( \ovl{W}_+ f + W_- \ovl{f})\big)
\end{eqnarray*}
$a_{in}(f)$ and $a^*_{in}(f)$ are the incoming annihilation and creation operators.
$a_{in}(f), a^*_{in}(f)$ as well as $W_+,  W_-,  Q_-,  Q_+$ are defined in Appendix \ref{Sec3}. 
Thus we have
$$e^{\imath  t \Hg} \Phi_{in}(f) e^{-\imath  t \Hg} 
= \Phi_{in}(e^{\imath  t |k|} f).$$
Combining Lemma \ref{lem1.3}(\ref{lem1.3d}) and \ref{lem1.3}(\ref{lem1.3e}), such as Equation ~\eqref{Def2e},~\eqref{Def2f} and \eqref{Def1}, we obtain 
\begin{equation}\label{Ident1}
\ovl{W}_+  \ovl{Q}_+=W_-Q_-,\ \ovl{W}_-\ovl{Q}_+=W_+Q_-,\ \|Q\|_\h=1.
\end{equation}
A simple but lengthly calculation using the Identities in \eqref{Ident1} and Lemma \ref{lem1.4} yields 
$\Phi(c\oplus h) =  \Phi_{in}( v(c\oplus h) )$. Here,
$v$ is a real linear operator from $\C\oplus \f$ to $\f$ defined by
\begin{equation}\label{Defv}
v(c\oplus h) := \ovl{W}_+^*h + c \ovl{Q}_+ - \ovl{W}_-^* \ovl{h} - \ovl{c}  \ovl{Q}_-.\\
\end{equation}
By Lemma \ref{lem1.4} the operator $v$ is surjective since 
$
h =  v\big( (\langle  \ovl{Q}_+ | h \rangle_\h 
+ \langle  Q_- | \ovl{h} \rangle_\h )\oplus (\ovl{W}_+h + W_-\ovl{h})\big).
$
Moreover, $v$ is symplectic, since 
\begin{align*}
\Im &\langle  c\oplus h | c'\oplus h' \rangle_{\C\oplus \h} 
 =  -\imath  [ \Phi(c\oplus h), \Phi(c'\oplus h') ]\\ \nonumber
&= -\imath  [ \Phi_{in}( v(c\oplus h) ),\Phi_{in}( v(c'\oplus h') ) ] \\
 &= \Im \langle  v(c\oplus h) | v(c'\oplus h') \rangle _{\h}.
\end{align*}
We deduce that $v$ is injective. From the definition of $\Phi_{in}$ follows that
\begin{equation*}
\taug_t(\Phi(c\oplus h ))
 =  \Phi_{in}( e^{\imath  t |k|} v(c\oplus h) ) =  \Phi( w_t(c\oplus h)  )
\end{equation*}
for the real linear, time dependent operator $w_t$ defined by
\begin{align}\label{Defwt}
w_t&(c\oplus h)=\nonumber
\big( \langle  \ovl{Q}_+| e^{\imath t|k|} v\rangle_{\h}
+\langle e^{\imath t|k|} v|\ovl{Q}_- \rangle_{\h} \big)\\ 
&\oplus 
\big( \ovl{W}_+e^{\imath ,t|k|}v
+W_-e^{-\imath t|k|}\ovl{v}\big),
\end{align}
where $v:=v(c\oplus h)$.
Since $\Phi_{in}( e^{\imath  t |k|} v(c\oplus h) ) = \Phi_{in}( v( w_t(c\oplus h) ) )$
we have that
\begin{equation}\label{Eq4.6}
e^{\imath  t |k|} v(c\oplus h) = v( w_t(c\oplus h) ).
\end{equation}
Since $v$ is symplectic, we may define a $\ast$-isomorphism $\gamma$
by
$
\gamma :  \Ag \rightarrow \Af,\quad \W{c\oplus h}\mapsto \W{v(c\oplus h)}.
$
By Equation \eqref{Eq4.6}, Definition \ref{Def:ClustMixEqu} and Definition \ref{Def:IntSyst} 
we conclude  that $\omg :=  \omf  \circ  \gamma$
is a $(\taug, \beta)$-KMS state over $\Ag$. 
\end{proof}
\begin{proof}[Proof of Theorem \ref{Satz:RetEq}:] First, we prove the exponential decay.
Let $v_i :=  v( f_i)$ for $f_i\in \mathfrak{g},\ i = 1, 2, 3$. Using 
$\taug_t(\W{f_2})=\W{w_t(f_2)}$, the CCR relation for Weyl operators in \eqref{Eq:WeylCCR},
such as \ref{Satz:Isom} and \eqref{Eq4.6}, we obtain
\begin{align*}
\omg\big(&\W{f_1}\taug_t(\W{f_2})\W{f_3}\big)\\ \nonumber
=& \exp\big(-1/4\|\disp^{1/2} 
(v_1+e^{\imath t|k|}v_2+v_3)\|^2_\h\big)\cdot\\ \nonumber
&
\exp\big(-(\imath/2)\Im\{\langle v_1|e^{\imath t|k|}v_2\rangle_\h
+\langle v_1+e^{\imath t|k|}v_2|v_3\rangle_\h\}\big).
\end{align*}
Because of
\begin{align*}
\omg(\W{ f_1}\W{ f_3})
 =& \exp(-(1/4)\|\disp^{1/2}(v_1+v_3)\|^2_\h) \\ \nonumber
&\phantom{\exp}
\cdot \exp(-(\imath /2) \Im \langle v_1|v_3\rangle_h)
\end{align*}
we get
\begin{align}\label{Eq4.24c}
\omg(&\W{f_1})\taug_t(\W{ f_2})\W{ f_3})\\ \nonumber
=& \omg(\W{ f_1}\W{ f_3})\omg(\W{ f_2})\\ \nonumber
%
&\cdot\exp\big(-(1/2)\Re\langle v_1+v_3|\disp
 e^{\imath t|k|}v_2\rangle_\h\big)\\ \nonumber
&
\cdot\exp\big(-(\imath/2)\Im\langle v_1-v_3|e^{\imath t|k|}v_2\rangle_\h\big).
\end{align}
Lemma \ref{Lemma:Decay} implies
\begin{align}\label{Eq:ExpDec}
\big|&\omg\big(\W{f_1})\taug_t(\W{ f_2})\W{ f_3})\\ \nonumber
&\phantom{\big|\omg\big(\W{f_1})}
-\omg(\W{ f_1}\W{ f_3})\omg(\W{ f_2})\big|
\leq\const e^{-\kappa t}.
\end{align}
By definition of $\Ao$ and $\mathcal{S}_0$ the exponentially decay
in \eqref{ExpDecay} follows. To prove the first 
statement, we assume $f_1,  f_2,  f_3\in \f$. As
before we obtain \eqref{Eq4.24c}, but with $v_1,  v_2,  v_3\in \f$.
Since $\wlim_{t\to \infty}e^{\imath t|k|}v_2=0$, we get
\begin{align*}
\lim_{t\rightarrow \infty}\omg(\W{ f_1}\taug_t(\W{ f_2})&\W{ f_3})\\
&= \omg(\W{ f_1}\W{ f_3})\omg(\W{ f_2}).
\end{align*}
Here, $\wlim_{t\to \infty}$ is the weak limit in the Hilbert space sense.
From this we obtain
$
\lim_{t\rightarrow \infty} \omg(A \taug_t(B) C)
 =  \omg(A C) \omg(B)
$,
for $A, B, C\in \linhull\{ \W{f} :  f\in \f\}$. A
density argument yields, that $(\Ag,\tau,\omega)$ is
clustering, and thus mixing.
\end{proof}
\begin{proof}[Proof of Lemma \ref{Lemma:Zeros}]
Let $G$ be defined for $|\Im{z}| < \eta$ and $\lambda \in \C$ by
\begin{align*} \nonumber
G(z,\lambda):=& -z^2+1+\||k|^{-1} \hat{\rho}\|^2_\h \lambda^2+4 \pi^2 \imath \lambda^2 \hat{\rho}^2(z)z\\
&+2\pi
\lambda^2 \int_{-\infty}^\infty \frac{\hat{\rho}^2
(r+\imath \eta)(r+\imath \eta)}{z -(r+\imath \eta)}dr
\end{align*}
Recall that $\hat{\rho}(r)$ is an even function. By definition 
of $D(z)$ in \ref{Def1} we obtain for $\Im z>0$, that
\begin{equation*}
D(z^2)=-z^2+1+\||k|^{-1} \hat{\rho}\|^2_\h\lambda^2 
+2\pi \lambda^2 \int_{-\infty}^\infty \frac{\hat{\rho}^2(r)r}{z-r}dr.
\end{equation*}
The residue theorem yields that $G_{\lambda}(\cdot) := G(\cdot, \lambda)$ 
is an analytic continuation of $D(z^2)$ on the lower half plane. By uniqueness,
and since $D(z^2)$ is even, we get
\begin{equation*}
G(z, \lambda) = G(-z, \lambda) = G(z, -\lambda).
\end{equation*}

Let $s \geq  0$. We can choose
$p_{\epsilon}(s)$, such that 
$ s^2+\imath \epsilon
=p_{\epsilon}(s)^2,\ \Re p_{\epsilon}(s)\geq 0$ 
and $\Im p_{\epsilon}(s)>0$, then
\begin{align*}
G(s,\lambda)&=\lim_{\epsilon \to  0+}G(p_{\epsilon}(s),\lambda)
=\lim_{\epsilon\to 0+}D(s^2+\imath \epsilon)\\
&= D_{+}(s^2).
\end{align*}
Thus $G_\lambda$ is an analytic continuation of $D_+(r^2)$.
For any $0 < \eta' < \eta$ we have 
\begin{equation}\label{Eq4.16}
\sup_{ \{z :  |\Im z|<\eta'\}} | G(z, 0) - G(z, \lambda)|
 \leq  C_{\eta'} |\lambda|^2.
\end{equation}
Since $\partial_z G(\pm  1, 0) =  \mp 2$ 
the implicit function theorem yields two
analytic functions $\kappa_{\pm1}$ in a neighborhood
of zero, with $\kappa_{\pm1}(0)=\pm 1$ and
\begin{equation}\label{Eq4.17}
G(z, \lambda) = 0 \Leftrightarrow  z = \kappa_{\pm1}(\lambda)
\end{equation}
for any  $z $ in a complex neighborhood of $1$ or $-1$, respectively.
By \eqref{Eq4.16} there is a neighborhood of zero, such that \eqref{Eq4.17}
holds for all $z\in \{ z\in \C :  |\Im z| \leq \eta'\}$ for some $\eta'>0$
independent of $\lambda$.
By symmetry of $G$ and uniqueness of $\kappa_{\pm 1}$ we have
$\kappa_{-1}(\lambda) = -\kappa_{+1}(\lambda)$ and $\kappa_{+1}(\lambda) = \kappa_{+1}(-\lambda)$, 
in particular $\partial_{\lambda}^{(2n+1)} \kappa_{+1}(0) = 0$. For the second
derivative we have
\begin{align*}
\partial_{\lambda}^2\kappa_+(0) &=  
-\frac{(\partial_\lambda^2 G)(1, 0)}{(\partial_z G)(1, 0)}
 = \||k|^{-1}  \hat{\rho}\|^2_\h \\
&+ 2 \pi  \mathcal{P} \int_{-\infty}^{\infty}\frac{\hat{\rho}^2(r) r}{1 - r}dr 
 + 4 \pi^2  \hat{\rho}^2(1) \imath,
\end{align*}
where $\mathcal{P} \int_{-\infty}^{\infty}$ 
is the Cauchy-principal value. This completes the proof.
\end{proof}
\begin{proof}[Proof of Lemma \ref{DensityR}]
We will first prove an auxiliary statement: 
\begin{center}
$H^2(\kappa)$ is dense in $\hat{\mathcal{H}}:= L^2( \R^3, (|k|^{-2}+|k|)d^3k).$
\end{center}
\indent For $f\in \hat{\mathcal{H}}$ we have that
\begin{equation}
\int_0^\infty \int_{S^2} |f(r,\Theta)|^2 d\Theta  (1+r^3)dr<\infty.
\end{equation}
Let $\hat{\mathcal{K}}=L^2( (0,\infty), (1+r^3)dr )$. Since
$\hat{\mathcal{H}}=\hat{\mathcal{K}}\tensor L^2(S^2)$ we
may restrict ourselves the case where $f=f_1\cdot f_2$.
Assume first that $f_2$ is real-valued. There is a smooth function $\phi$
with compact support in $(0,\infty)$, and
$\|f_1-\phi\|_{\hat{\mathcal{K}}}\le \epsilon.$
Next, we define a continuation $\psi$ of $\phi$ to $\R$ by $\psi(0):=0$ and $\psi(r):=\ovl{\phi(-r)}$
for $r<0$. Since $\phi$ vanishes near zero, we get $\psi\in C_c^\infty(\R,\C)$.
The Fourier transform $\hat{\psi}$ is a real-valued Schwartz function.
Thus there is a $\eta\in  C_c^\infty(\R,\R)$ such that
$$\big(\|\eta-\hat{\psi}\|^\sim\big)^2:=  \sum_{\stackrel{\nu,\mu\in \N_0}{ \nu,\mu\le 3}} \| (1+|s|^2)^{\nu/2}  \partial^{\mu} (\eta-\hat{\psi})\|_{L^2(\R)}^2\le \epsilon^2.$$
For some universal constant we obtain for the inverse Fourier transform $\check{\eta}$ that $\|\check{\eta}-\psi\|^\sim\le \const \epsilon$.
We obtain for another constant that $\|\check{\eta}-f_1\|_{\hat{\mathcal{K}}}\le \const \epsilon$. It is elementary
to see that $\check{\eta}\cdot f_2$ belongs to $H^2(\kappa)$.
When $-\imath f_2$ is real valued, we modify the proof.
We choose $\phi$ as before. Next choose the continuation to $\R$, such
that $\psi(r)=-\ovl{\psi(-r)}$. Then we can choose $\eta\in  C_c^\infty(\R,\imath \R)$,
thus $\check{\eta(r)}=-\ovl{\check{\eta}(-r)}$. As before, $\check{\eta}\cdot f_2$ belongs to $H^2(\kappa)$.
 This proves the auxiliary statement.\\
\indent Let $g\in \mathcal{C}_c^\infty(\R^3\setminus\{0\},\C)$, then $f:= (a\imath |k|^{1/2}+ b|k|^{-1/2})^{-1}g\in \hat{\mathcal{H}}$. 
Since multiplication with $(a\imath |k|^{1/2}+ b|k|^{-1/2})$ is a continuous map
from $\hat{\mathcal{H}}$ to $(\f,\|\cdot\|_+')$, $\mathcal{C}_c^\infty(\R^3\setminus\{0\},\C)$ is dense in
$(\f,\|\cdot\|_+')$ and $H^2(\kappa)$ is dense in $\hat{\mathcal{H}}$ we conclude the proof.
\end{proof}
\begin{proof}[Proof of Lemma \ref{Lemma:Density}]
The definition of $v$ and $Q$ in \eqref{Defv} and Lemma \ref{Def1} below imply that for $c = a + \imath  b$
$$v(c\oplus 0) =  a |k|^{-1/2} \ovl{Q} + \imath  b |k|^{1/2} \ovl{Q}\in \mathcal{R}.$$
Thus $\C\oplus 0\subset \mathfrak{g}$. Let
$f\in \f$, we need to show that $0\oplus f\in \operatorname{cl}_{\|\cdot\|_+'}\mathfrak{g}$.
From \ref{lem1.4} we deduce that $f = \ovl{W}_+ g + W_- \ovl{g}$
for $g = \ovl{W}_+^* f - \ovl{W}_-^* \ovl{f}\in \f$.\\
\indent Since $\mathcal{R}$, defined in \eqref{DefR}, is dense in $(\f,\|\cdot\|_+)$,
there is a sequence $(g_\nu)_{\nu}\subset \mathcal{R} $ with 
$g_\nu  \to  g$, as $\nu  \to  \infty$. We obtain
\begin{equation*}
f_\nu  :=  \ovl{W}_+ g_\nu + W_- \ovl{g}_\nu  \to  f,\quad \textrm{as}\ \nu  \to  \infty.
\end{equation*}
On account of Lemma \ref{lem1.4} we get $g_{\nu} = v(c_\nu \oplus  f_{\nu})\in v(\mathfrak{g})$ for $c_\nu :=  \langle  \ovl{Q}_+ | g_\nu \rangle_\h  + \langle  Q_- | \ovl{g}_\nu \rangle_\h $. 
We conclude that $c_\nu\oplus  f_{\nu}\in\mathfrak{g}$ and $ 0\oplus f_{\nu} \in \mathfrak{g}$. 
This completes the proof.
\end{proof}
\begin{proof}[Proof of Lemma \ref{Lemma:Decay}]
Let
\begin{gather*}
f :=  (\imath  a |k|^{1/2} + b |k|^{-1/2} ) f'\\
g :=  (\imath  a' |k|^{1/2}  + b' |k|^{-1/2} )  g'
\end{gather*}
and $h$ be defined by
\begin{equation*}
h(r):= \int_{S^2} \ovl{f'(r,\Theta)}g'(r,\Theta)d\Theta,\quad r\geq 0,
\end{equation*}
where $d\Theta$ is the uniform measure on the sphere.
The definition of $f'$ and $g'$  implies that $\tilde{h}$ 
has an analytic continuation on the strip
$\{ z\in \C :  |\Im z|<\kappa \}$, see also \eqref{Deftilde}. 
In order to calculate $\Re  \langle  f | \eta e^{\imath  t  |k|}  g \rangle_\h$
we write the scalar product as an integral over $\R^3$.
Next, we introduce polar-coordinates. After integrating over the
sphere we obtain the following:
\begin{align*}
\Re& \langle f|\disp e^{\imath t |k|} g\rangle_\h\\ \nonumber
&=\Re \int_0^\infty \big(aa'r^3+ bb'r 
+\imath a'br^2 -\imath b'a r^2\big)\\ \nonumber
&
\phantom{=\Re \int_0^\infty \big(aa'r^3+ bb'r}
\cdot \coth(\beta r/2)e^{\imath t r}h(r)dr\\ \nonumber
&=1/2  \int_\R \big(aa'r^3+ bb'r +\imath a'br^2 -\imath b'a r^2\big)\\ \nonumber
&\phantom{=\Re \int_0^\infty \big(aa'r^3+ bb'r}
\cdot\coth(\beta r/2)e^{\imath t r}\tilde{h}(r)dr 
\end{align*}
Since $\tilde{h}$ has an analytic continuation on the strip,
we may shift the contour of integration in the upper half plane. 
This complete the proof of \eqref{Eq:ScalarPrDecay}. The proof of
\eqref{eq2.6.2} follows analogously.
\end{proof}

\section{ Comparison with the Liouvillean Approach}\label{Sec5}
\subsection{Liouvillean approach}
In this section we will sketch the Liouvillean approach.
As mentioned in the introduction, this approach is
widely used to study dynamical properties of small systems, coupled 
to a heat bath, see for instance \cite{JaksicPillet1996a,JaksicPillet1996b,BachFroehlichSigal2000,
Merkli2001,DerezinskiJaksic2001,DerezinskiJaksic2003}.  
For this approach it is not necessary that the
particle is described by a harmonic oscillator,
it is applicable to a broader class of Hamiltonians $\Hael$. 
The key ingredient is the existence of a Gibbs state. 
For this it is sufficient and necessary
that
\begin{equation}
Z_\beta :=  \Tr\{ e^{-\beta  \Hael} \} < \infty,
\end{equation}
for a fixed $\beta>0$. The starting point 
for the model underlying the Liouvillean approach is the $C^*$-algebra
\begin{equation*}
\Aobs=\mathcal{B}(\Hel)\tensor \We(\f).
\end{equation*}
The algebra $\Aobs$ is taken instead of $\Ag$,
since it is left invariant by the dynamics $\tau^0$ of the noninteracting
system for  any choice of $\Hael$. In general, 
the dynamics $\tau^0$ is given by
\begin{equation}\label{Deftau0}
\tau^0_t(A):= e^{\imath t H_0}A e^{-\imath t H_0},\quad A\in \Aobs.
\end{equation}
$H_0$ is the sum of the particle Hamiltonian $\Hael\tensor \one$ and the field Hamiltonian
$\one \tensor \Haf$ multiplied with a tensor-factor. More precisely, 
for the harmonic oscillator in the dipole approximation
$H_0$ is obtained form $\Hg$ by setting $H_I:=0$ in Equation \eqref{Eq:DefHg}
and the particle Hamiltonian is 
\begin{equation*}
\Hael  := 1/2 \big(-\Delta  
+ (1 + \lambda^2  \||k|^{-1} \hat{\rho}\|^2_\h)  q^2 \big),
\end{equation*}
acting in $L^2(\R)$.We deduce from Equation \eqref{Deftau0}, that 
\begin{equation}
\tau^0_t= \tauel_t\tensor \tauf_t,
\end{equation}
where $\tauf_t$ is the free dynamics, see Definition
\ref{Def:IntSyst}. $\tauel$ is dynamics for the particle system, given by 
$\tauel_t(B)= e^{\imath t \Hael}B e^{-\imath t \Hael},\ B\in \mathcal{B}(\Hel)$.
On $\Aobs$ define a state $\omega_0$ by
\begin{equation*}
\omega_0(B\tensor \W{f})
 :=  \big(Z_\beta^{-1}  \Tr\{B e^{-\beta  \Hael}\}\big)
\cdot \omf(\W{f}).
\end{equation*}
We recall, that the first factor is the Gibbs state for $\Hael$ at inverse 
temperature $\beta$, and the second is the equilibrium state for 
the Bose gas system.  The $(\tau^0,\beta)$-KMS property of $\upsilon_0$ in the
sense of Definition \ref{Def:ClustMixEqu} can be verified by
a short calculation. \\
\indent In a second step, one makes an explicit 
GNS-construction $(\Kg, \pi_0, \Omo)$. Let
\begin{equation*}
\Kg :=  L^2(\R)\tensor L^2(\R)\tensor 
\mathcal{F}_b[\mathfrak{h}\oplus \mathfrak{h}]
\cong \mathcal{F}_b[\C\oplus \C \oplus \mathfrak{h} \oplus \mathfrak{h}]
\end{equation*}
and for $X\in \mathcal{B}(\Kg)$
\begin{equation*}
\Omo  =  Z^{-1/2}_\beta   k_{\beta/2}\tensor \Omega_{\h\oplus\h},\qquad \upsilon_0(X):= \langle \Omo\,|\, X \Omo\rangle
\end{equation*}
where $k_{\beta/2}$ is the integral kernel of 
$e^{-\beta/2  \Hael}$ in $ L^2(\R^2)\cong L^2(\R)\tensor L^2(\R)$.
The $*$-isomorphism $\pi_0$ is given by
\begin{equation*}
\pi_0[A\tensor \W{f}]
 =  A\tensor \one \tensor \W{(1+\varrho)^{1/2} f\oplus \varrho^{1/2} \ovl{f}}.
\end{equation*}
Let $\Mg :=  \pi_0 [ \Aobs ]''$ be the 
bicommutant of $\pi_0 [ \Aobs ]$, it is the weak 
closure of $\pi_0 [ \Aobs ]$ in $\mathcal{B}(\Kg)$. 
To define automorphism groups on $\Mg$ we introduce 
the operators
\begin{align*}
\Lo&:=  \Lel\tensor \one\tensor \one  +  \one\tensor \one\tensor \Lf\\
\Lel&:= \Hael\tensor \one  - \one\tensor\Hael
\end{align*}
$\Lf$ is defined as $\Haf$ in Equation \eqref{DefHf}, with
the difference that $\h$ is replaced by $\h\oplus\h$,
and $h_{ph}$ is given by $|k|\oplus (-|k|)$.
Since $\Omo$ is cyclic and separating for $\Mg$ we can define
the modular conjugation by
\begin{equation}
\Jg  X \Omo  = X^*\Omo,\quad X\in \Mg.
\end{equation}
As a consequence of the $(\tau^0,\beta)$-KMS property we obtain 
for the commutant $\Mg' = \Jg \Mg \Jg$.\\
\indent In general, there is no reason that  $\taug$ should leave $\Aobs$
invariant. However, we define a dynamics $\alpha$ on $\Mg$ by 
\begin{equation*} 
\tev{t}{X} :=  e^{\imath  t \Lg} X e^{-\imath  t \Lg},\quad X\in \Mg.
\end{equation*}
Here, $\Lg$ is the so-called \emph{Standard Liouvillean} defined by
\begin{equation}\label{DefLg}
\Lg :=  \Lo  + Q_I -  \Jg  Q_I \Jg.
\end{equation}
$Q_I$ describes the interaction of the particle with the heat bath.
In the case of the dipole approximation $Q_I$ is given by
\begin{equation}\label{DefQI}
Q_I :=  \lambda  q\tensor \one \tensor  
\Phi( (1+\varrho)^{1/2}|k|^{-1/2} \hat{\rho}\oplus \varrho^{1/2}|k|^{-1/2} \hat{\rho}).
\end{equation}
The reason for the choice of $Q_I$ is the subsequent formal calculation
\begin{align}\label{DysonSeries}
\pig[\taug_t(B\tensor \Phi(f))]
&= \sum_{j=0}^\infty
\pig\big([ \ldots, [\imath \Hg , B\tensor \Phi(f)] \ldots]\big)\\ \nonumber
&=\sum_{j=0}^\infty[ \ldots, [\imath \Lg , \pig\big(B\tensor \Phi(f)\big)] \ldots]\\ \nonumber
&=\alpha_t\big(\pig[B\tensor \Phi(f)]\big)
\end{align}
Note, that there is no contribution in Equation \eqref{DysonSeries} from $\Jg  Q_I \Jg$, and that
$\pig\big(B\tensor \Phi(f)\big)=B\tensor \one \tensor \Phi( (1+\varrho)^{1/2}f\oplus \varrho^{1/2}\ovl{f}).$
Surely, one can define $Q_I$ that is obtained from a different interaction, as long as the calculation \eqref{DysonSeries} remains true.\\
\indent Once we have $\Lg$, one can show self-adjointness, the invariance of
$\Mg$ with respect to $\alpha_t$.
Furthermore, it can be proved 
that
\begin{equation*}
\Omg  = c   e^{-(\beta/2) (\Lo + Q_I)} \Omo,
\end{equation*}
is cyclic for $\pi_0[\Aobs]$, separating for $\Mg$, and 
normed for some $c>0$. To show that $\Omo\in \dom(e^{-(\beta/2) (\Lo + Q_I)})$
one needs  that $ |\lambda |$ is sufficiently small.
For more details see\cite{Koenenberg2009b}.\\
\indent Furthermore $\Omg$ is in the kernel of $\Lg$ and
\begin{equation}\label{Defupsilon}
\upsilon(X) :=   \langle  \Omg | X \Omg \rangle_{\Kg}
\end{equation}
and $\upsilon$ defines an $(\alpha,\beta)$-KMS-state on $\Mg$.
\indent It was shown in \cite{JaksicPillet1996a, JaksicPillet1996b}, 
that the dynamical properties of  $(\Mg,\alpha,\Omg)$ as the 
mixing property or the ergodicity are encoded in the spectrum of $\Lg$. 
But an analysis of the spectrum of $\Lg$
has not been done for the harmonic oscillator, yet. 
For a short introduction in this topic we refer the reader to \cite{Attal2006}.\\
\indent However, in the case of the harmonic oscillator in the dipole approximation,
the algebra is $\Aobs$ is left invariant by $\tau$, and the question arises
for a rigorous proof of the identity in \eqref{DysonSeries}.
Let $j$ be the canonical embedding of $\Ag$ into $\Aobs$.
We will show
\begin{Lemma}\label{LemmaExtension}
The diagram
\begin{equation}\label{Diag3}
\xymatrix{
\Ag \ar[d]^{\taug_t} \ar[r]^{j} & \Aobs  \ar[r]^{\pi_0} & \Mg \ar[d]^{\tevabb{t}} \\ 
\Ag  \ar[r]^{j} & \Aobs \ar[r]^{\pi_0} & \Mg 
}
\end{equation}
is commutative.
\end{Lemma}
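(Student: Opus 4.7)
My plan is to reduce the claim to the Weyl generators of $\Ag$ and then to exploit that both $\Hg$ and $\Lg$ are quadratic in creation and annihilation operators. Since $\Ag$ is the norm-closure of the linear span of $\{\W{c \oplus f} : c \oplus f \in \C \oplus \f\}$, since $\pi_0 \circ j : \Ag \to \Mg$ is a norm-continuous $\ast$-homomorphism, and since $\taug_t$ and $\tevabb{t}$ are isometric $\ast$-automorphisms of their respective algebras, it suffices to establish
$$\pi_0\big(j(\taug_t(\W{c \oplus f}))\big) = \tevabb{t}\big(\pi_0(j(\W{c \oplus f}))\big)$$
for every $c \oplus f \in \C \oplus \f$ and every $t \in \R$.

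For the left-hand side I would invoke Theorem~\ref{Satz:Isom} to write $\taug_t(\W{c \oplus f}) = \W{w_t(c \oplus f)}$ with $w_t$ the explicit symplectic transform from \eqref{Defwt}. Setting $c_t \oplus f_t := w_t(c \oplus f)$ and using the definition of $\pi_0$, the left-hand side becomes the Weyl operator on $\Kg \cong \mathcal{F}_b[\C \oplus \C \oplus \h \oplus \h]$ with symplectic argument $c_t \oplus 0 \oplus (1+\varrho)^{1/2} f_t \oplus \varrho^{1/2} \ovl{f_t}$. For the right-hand side I would note that $\Lg = \Lo + Q_I - \Jg Q_I \Jg$ is quadratic in creation and annihilation operators on the doubled Fock space, precisely because $\Hg$ is quadratic on $\Hig$ and $Q_I$ has the dipole form \eqref{DefQI}. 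Repeating Arai's Bogoliubov/scattering analysis (Appendix~\ref{Sec3}) in the doubled setting, I would deduce that $\tevabb{t}$ takes Weyl operators to Weyl operators, $\tevabb{t}(\W{g}) = \W{V_t(g)}$, for a symplectic one-parameter group $V_t$ on $\C \oplus \C \oplus \f \oplus \f$. The lemma then reduces to the identity
$$V_t\big(c \oplus 0 \oplus (1+\varrho)^{1/2} f \oplus \varrho^{1/2} \ovl{f}\big) = c_t \oplus 0 \oplus (1+\varrho)^{1/2} f_t \oplus \varrho^{1/2} \ovl{f_t}.$$

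To verify this identity I would match infinitesimal generators on a common core. The formal manipulation \eqref{DysonSeries} becomes rigorous once one notes that $-\Jg Q_I \Jg \in \Mg'$ commutes with $\pi_0(\Aobs)$ and therefore drops out of $[\imath \Lg, \pi_0(j(A))]$, whereas $\pi_0 \circ j$ intertwines $[\imath \Hg, \cdot\,]$ on $\Ag$ with the restriction of $[\imath(\Lo + Q_I), \cdot\,]$ to the image. Equality of generators on a common core then forces equality of the induced strongly continuous groups, hence of $V_t$ on the vectors of interest. The main obstacle is the bookkeeping around the unbounded operators $\Hg$, $\Lg$, $Q_I$: I would need to select a joint invariant core (for instance, finite-particle vectors in the doubled Fock space that are smooth and rapidly decaying in the two oscillator variables, translated by Weyl operators), verify essential self-adjointness of the relevant closures there, and carefully justify the commutator manipulations by standard domain arguments. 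Once this is settled, the matching of the two symplectic flows is automatic, since both sides solve the same first-order linear ODE with the same initial data.
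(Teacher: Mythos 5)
Your proposal follows essentially the same route as the paper's proof: reduce to Weyl (equivalently, field) operators, use the quadratic structure of $\Hg$ and $\Lg$ to obtain symplectic one-parameter groups implementing $\taug_t$ and $\tevabb{t}$, observe that $\Jg Q_I \Jg$ commutes with the image of $\pi_0\circ j$ and hence drops out of the commutator, and conclude by matching the infinitesimal generators of the two flows on the embedded subspace. The technical bookkeeping you flag (cores, domains) is handled in the paper at the level of the one-particle generators $A$ and $\tilde{A}$ rather than operator cores in Fock space, but the argument is the same.
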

This diagram illustrates, that $\alpha$ extends the dynamics
$\tau$ on $\Mg$. The proof is given in Subsection \ref{ProofsSection3}.\\
\indent Since we have for small $0<|\lambda|\ll 1$ two $(\tau,\beta)$-KMS states on $\Ag$, namely
$\omega:= \omf \circ \gamma$ and $\upsilon\circ \pi_0\circ j$, the question
arises if both coincide. We will prove
\begin{Satz}\label{Thm5.1}
For small $0<|\lambda|\ll 1$ we have that
\begin{equation*}
\omg = \upsilon\circ \pi_0 \circ j,\qquad \textrm{on } \Ag. 
\end{equation*} 
\end{Satz}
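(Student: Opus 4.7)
The plan is to recognize $\omg$ and $\omega' := \upsilon \circ \pig \circ j$ as two $(\taug,\beta)$-KMS states on $\Ag$, transport them to $\Af$ via the intertwining isomorphism $\gamma$ from Theorem \ref{Satz:Isom}, and argue they coincide there by invoking uniqueness of the quasi-free KMS state for the Bose gas system.

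First, I would verify that $\omega'$ is a $(\taug,\beta)$-KMS state on $\Ag$. By Lemma \ref{LemmaExtension}, $\pig \circ j \circ \taug_t = \tevabb{t} \circ \pig \circ j$, so for $A, B \in \Ag$:
\begin{equation*}
\omega'(A\,\taug_t(B)) = \upsilon\bigl(\pig(j(A))\,\tevabb{t}(\pig(j(B)))\bigr),
\end{equation*}
and the $(\alpha,\beta)$-KMS property of $\upsilon$ on $\Mg$ furnishes the analytic continuation and boundary values required by Definition \ref{Def:ClustMixEqu}. Since $\gamma$ intertwines $\taug$ with $\tauf$, both $\omg \circ \gamma^{-1} = \omf$ and $\omega' \circ \gamma^{-1}$ are $(\tauf,\beta)$-KMS states on $\Af$, and the task reduces to showing $\omega' \circ \gamma^{-1} = \omf$.

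Next I would argue $\omega'$ is quasi-free. The vector $\Omg$, up to normalization equal to $e^{-(\beta/2)(\Lo + Q_I)}\Omo$, is obtained by applying the exponential of an operator quadratic in the annihilation and creation operators on $\Kg \cong \mathcal{F}_b[\C \oplus \C \oplus \h \oplus \h]$ to the Gaussian vacuum $\Omo$. Diagonalizing $\Lo + Q_I$ by a symplectic (Bogoliubov) change of annihilation/creation operators on this doubled Fock space, in the spirit of the diagonalization carried out in Appendix \ref{Sec3} for the one-sided Fock space, exhibits $\Omg$ as a new Fock vacuum, hence as a Gaussian vector. Consequently, $\upsilon = \langle\Omg\,|\,\cdot\,\Omg\rangle$ is a quasi-free state on $\Mg$, and its pullback $\omega'$ to $\Ag$ is a quasi-free state on $\Ag$.

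Once quasi-freeness is in hand, the KMS condition applied to $\mu(\Phi(f)\,\tauf_t(\Phi(g)))$ on $\Af$ uniquely determines the two-point function of any quasi-free $(\tauf,\beta)$-KMS state in terms of $\disp = \coth(\beta|k|/2)$, and Wick's theorem fixes all higher moments; hence $\omega' \circ \gamma^{-1} = \omf$ and therefore $\omg = \omega'$ on $\Ag$. The main obstacle is the quasi-freeness step: one needs a clean Bogoliubov diagonalization of $\Lo + Q_I$ on the doubled Hilbert space $\Kg$, together with verification that $\Omo$ lies in the domain of $e^{-(\beta/2)(\Lo + Q_I)}$ (guaranteed in \cite{Koenenberg2009b} for small $|\lambda|$), paralleling but extending the zero-temperature analysis of Appendix \ref{Sec3}.
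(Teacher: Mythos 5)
Your strategy is genuinely different from the paper's. The paper argues dynamically: by Lemma \ref{LemmaExtension} and $\Lg\Omg=0$ the state $\upsilon\circ\pi_0\circ j$ is $\taug$-invariant, and Lemma \ref{Conv} (whose proof is the real work, via Lemma \ref{LemNonIntKMS}, the Riemann--Lebesgue lemma and the compactness statement Lemma \ref{Lem:Comm}) shows that every $\upsilon_0$-normal state converges under $\taug_t$ to $\omg$; combined with the density statement Lemma \ref{lem5.1} this forces the invariant state to be $\omg$, with no uniqueness-of-KMS input at all. You instead want a static argument: both $\omg$ and $\upsilon\circ\pi_0\circ j$ are $(\taug,\beta)$-KMS, transport by $\gamma$ from Theorem \ref{Satz:Isom}, and conclude by uniqueness of the quasi-free $(\tauf,\beta)$-KMS state. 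The first step (that the restriction of the $(\alpha,\beta)$-KMS state $\upsilon$ along $\pi_0\circ j$ is $(\taug,\beta)$-KMS) is fine given Lemma \ref{LemmaExtension}, and your route would avoid Lemmas \ref{Conv} and \ref{Lem:Comm} entirely. Note also that you correctly do not lean on Theorem \ref{Thm5.2}, whose proof in the paper presupposes Theorem \ref{Thm5.1}.

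However, the two steps that carry all the weight in your plan are left as sketches, and both hide genuine difficulties. (i) Quasi-freeness of $\upsilon$: the claim that $e^{-(\beta/2)(\Lo+Q_I)}\Omo$ is a Gaussian vector is plausible but is not available off the shelf; the paper imports from \cite{Koenenberg2009b} only that $\Omo\in\dom(e^{-(\beta/2)(\Lo+Q_I)})$. Moreover ``diagonalize $\Lo+Q_I$ and exhibit $\Omg$ as a new Fock vacuum'' is subtler than stated: the one-particle part of $\Lo$ is $|k|\oplus(-|k|)$, so the quadratic form is indefinite, $\Omg$ is not the ground state of anything, and the relevant transformation is a non-unitary (imaginary-time) Bogoliubov map with nontrivial domain questions. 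This step is a piece of work comparable in size to the paper's Lemma \ref{Conv}. (ii) Uniqueness of the quasi-free $(\tauf,\beta)$-KMS state on $\We(\f)$: as stated this is not automatic. A quasi-free KMS state may a priori carry a nonzero invariant one-point part, anomalous two-point terms, and, most delicately, an additional positive invariant form whose ``frequency content'' sits at zero (the condensate-type ambiguity); since the dynamics is effectively the ideal Bose gas at zero chemical potential with $\disp(k)\sim 2/(\beta|k|)$, this is exactly the infrared regime where KMS uniqueness is known to fail on larger test-function spaces. One can rule these out here (e.g.\ using that $t\mapsto S_0(f,e^{\imath t|k|}g)$ must be constant for any admissible addition $S_0$, together with continuity of the covariance of $\upsilon\circ\pi_0\circ j$ and the absence of eigenvalue zero for $|k|$ on $\f$), but this argument must actually be made; it cannot be quoted as ``the KMS condition fixes the two-point function.'' Until (i) and (ii) are supplied, the proposal is a plausible programme rather than a proof, whereas the paper's ergodic argument replaces both by explicit Weyl-operator computations.
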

This is illustrated in the next diagram
\begin{equation*}
\xymatrix{
\Ag \ar[r]^{\pi_0\circ j } \ar[rd]_{\omg}&\Mg \ar[d]_{\upsilon} \\
&\C
}
\end{equation*}
In Subsection  \ref{ProofsSection3} we will give a proof of Theorem \ref{Thm5.1}. 
Next, we introduce an $W^*$-dynamical system for the Bose gas.
This is essential to formulate our results.\\
\indent Let $\Kf =  \Fock_b[\h\oplus\h]$ and 
$\Omf :=  \Omega_{\h\oplus\h}$. 
The Araki-Woods-Representation  $\pi_{AW} : \Af\rightarrow \mathcal{B}(\Kf)$
is given by
\begin{equation*}
\pi_{AW}(\W{g}) =  \W{(1 + \varrho)^{1/2} g\oplus \varrho^{1/2} \ovl{g}}.
\end{equation*}
It can be shown that $(\mathcal{K}_f,\pi_{AW},\Omf)$ is a GNS-triple
for $\Af$ and $\omf$.\\
Let $\Mf :=  \pi_{AW}[\Af]''$ and
$\upsilon_f(X) =  \langle  \Omf | X \Omf\rangle_{\Kf}$,
and 
$\tevf{t}{X} = e^{\imath  t  \Lf} X e^{-\imath  t  \Lf},\ X\in \Mf$.
$\Omf$ is cyclic for $\pi_{AW}[\Af]$ and separating for $\Mf$. 
The diagram is commutative
\begin{equation*}
\xymatrix{
\Af \ar[r]^{\tauf_t} \ar[d]_{\pi_{AW}}& \Af    \ar[r]^{\omf}  \ar[d]_{\pi_{AW}}&\C \\          
\Mf \ar[r]^{\alpha^f_t}& \Mf \ar[ru]_{\upsilon_f}
}
\end{equation*}
in this setting.
We say that $(\Mf,\alpha^f)$ is the $W^*$-dynamical system for the
Bose gas. For a proof of this statements see \cite{Attal2006}.\\
\indent Moreover, we have an embedding of $(\Ag,\tau,\omg)$ into
$(\Mg,\alpha,\upsilon)$ by means of $\pi_0\circ j$ and an embedding
into $(\Mf,\alpha^f,\upsilon_f)$ by means of $\pi_{AW}\circ \gamma$.
In fact, both embeddings are unitarily equivalent. We first show 
\begin{Satz}\label{Thm5.2}
There is an isometric isomorphism $U : \Kf \rightarrow  \Kg$,
such that $U e^{\imath  t \Lf}  =   e^{\imath  t \Lg} U$ and $U \Omf =  \Omg$. Let
$\gamma_U :  \mathcal{B}(\Kf)\rightarrow \mathcal{B}(\Kg),\quad
\gamma_U(A)= U A U^{-1}$. Then $\gamma_U \circ \pi_{AW}\circ \gamma =  \pi_0 \circ j $.
\end{Satz}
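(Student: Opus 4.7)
The plan is to build $U$ from the uniqueness of the GNS construction, then upgrade the resulting isometry to a unitary by comparing double commutants, and finally deduce the Liouvillean intertwining from a separating-vector argument.

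By Theorem \ref{Satz:Isom} one has $\omg = \omf \circ \gamma$ on $\Ag$, and by Theorem \ref{Thm5.1} also $\omg = \upsilon \circ \pi_0 \circ j$ on $\Ag$. Hence $(\Kf, \pi_{AW} \circ \gamma, \Omf)$ and $(\cl(\pi_0(j(\Ag))\Omg), \pi_0 \circ j, \Omg)$ are both cyclic $*$-representations of $(\Ag, \omg)$, so the uniqueness of the GNS construction furnishes an isometry $U : \Kf \to \Kg$ satisfying
\begin{equation*}
U \pi_{AW}(\gamma(A)) \Omf = \pi_0(j(A)) \Omg, \qquad A \in \Ag,
\end{equation*}
and consequently $U \Omf = \Omg$ together with $U \pi_{AW}(\gamma(A)) U^{-1} = \pi_0(j(A))$ on the range of $U$.

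The main technical step is to show that $U$ is surjective, equivalently that $\Omg$ is already cyclic for $\pi_0[j(\Ag)]$, since the excerpt only asserts cyclicity for the larger algebra $\pi_0[\Aobs]$. The strategy is to verify $\pi_0[j(\Ag)]'' = \pi_0[\Aobs]''$ by a direct commutant computation in $\Kg \cong L^2(\R) \tensor L^2(\R) \tensor \Fock_b[\h\oplus\h]$: both algebras act trivially on the middle factor, and their images in the first factor generate, respectively, the one-mode Weyl $C^*$-algebra $\We(\C)$ (take $f=0$ in $\W{c\oplus f}$) and all of $\mathcal{B}(L^2(\R))$. Since $\We(\C)$ acts irreducibly on $L^2(\R)$ its commutant is $\C\one$, matching that of $\mathcal{B}(L^2(\R))$, so both $\pi_0[j(\Ag)]'$ and $\pi_0[\Aobs]'$ equal $\C\one \tensor \mathcal{B}(L^2(\R)) \tensor \Mf'$. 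Kaplansky density then transfers cyclicity of $\Omg$ for the common double commutant to cyclicity for $\pi_0[j(\Ag)]$, so $U$ is onto and hence unitary.

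For the Liouvillean intertwining, I combine the identity $U \pi_{AW}(\gamma(A)) U^{-1} = \pi_0(j(A))$ with the commutative diagram for $\alpha^f$ on $\Mf$ and Lemma \ref{LemmaExtension}:
\begin{equation*}
U \alpha^f_t(\pi_{AW}(\gamma(A))) U^{-1} = \pi_0(j(\tau_t(A))) = \alpha_t(\pi_0(j(A))) = \alpha_t(U \pi_{AW}(\gamma(A)) U^{-1}).
\end{equation*}
Thus $Z_t := U^{-1} e^{-\imath t \Lg} U \, e^{\imath t \Lf}$ commutes with every element of $\pi_{AW}[\Af]$, so $Z_t \in \Mf'$. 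Since $\Lg \Omg = 0$, $\Lf \Omf = 0$ and $U\Omf = \Omg$, we also have $Z_t \Omf = \Omf$; as $\Omf$ is separating for $\Mf'$, this forces $Z_t = \one$, and hence $U e^{\imath t \Lf} = e^{\imath t \Lg} U$ for every $t \in \R$. The main obstacle is the cyclicity upgrade in the third paragraph; once it is established, everything else is formal GNS bookkeeping plus the standard separating-vector argument.
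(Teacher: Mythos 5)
Your proposal is correct and follows essentially the same route as the paper: $U$ is defined on $(\piaw\circ\gamma)[\Ag]\Omf\mapsto(\pi_0\circ j)[\Ag]\Omg$ using the state identity of Theorem \ref{Thm5.1}, surjectivity rests on the equality of von Neumann algebras $(\pi_0\circ j)[\Ag]''=\pi_0[\Aobs]''=\Mg$ (exactly the content of Lemma \ref{lem5.1}, which you re-derive), and the intertwining comes from the covariance relations of Theorem \ref{Satz:Isom} and Lemma \ref{LemmaExtension} together with $\Lg\Omg=0$, $\Lf\Omf=0$. The only cosmetic differences are that the paper proves the algebra equality by strongly approximating $A\in\mathcal{B}(\Hel)$ with elements of $\We(\C)$ (via $\We(\C)''=\mathcal{B}(L^2(\R))$) rather than invoking the tensor-product commutation theorem, and it checks $e^{\imath t\Lg}U=Ue^{\imath t\Lf}$ directly on the dense subspace $(\piaw\circ\gamma)[\Ag]\Omf$ instead of your (equally valid) argument that $Z_t\in\Mf'$ fixes the separating vector $\Omf$.
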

The statement of Theorem \ref{Thm5.2} is illustrated by
\begin{equation*}
\xymatrix{
(\Kf,\Omf)  \ar[r]^{ e^{\imath t \Lf}}\ar[d]^{ U}  & (\Kf,\Omf)\ar[d]^{ U}\\
(\Kg,\Omg) \ar[r]^{ e^{\imath t \Lg}}& (\Kg,\Omg)
}\qquad
\xymatrix{
\Ag \ar[r]^{\pi_0\circ j} \ar[d]^\gamma & \mathcal{B}(\Kg) \\
\Af \ar[r]^{\pi_{AW}} &\mathcal{B}(\Kf) \ar[u]^{\gamma_U}
}
\end{equation*}
A key ingredient in the proof of Theorem \ref{Thm5.2} is
\begin{Lemma}\label{lem5.1}
$(\pi_0\circ j)[\Ag]'' = \Mg$
\end{Lemma}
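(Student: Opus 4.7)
The plan is to show that both $(\pi_0\circ j)[\Ag]$ and $\pi_0[\Aobs]$ have the same commutant in $\mathcal{B}(\Kg)$; then their bicommutants coincide as well. The key observation is that $\Kg$ carries the tensor factorisation $L^2(\R)\otimes L^2(\R)\otimes \Fock_b[\h\oplus \h]$, in which the representations act non-trivially only on the first and the third factor. For $j(\W{c\oplus f})=\W{c}\tensor \W{f}\in \Aobs$ we have
\begin{equation*}
(\pi_0\circ j)(\W{c\oplus f})=\W{c}\tensor \one \tensor \W{(1+\varrho)^{1/2}f\oplus \varrho^{1/2}\ovl{f}},
\end{equation*}
so $(\pi_0\circ j)[\Ag]$ is the norm closure of the linear span of such elementary tensors, obtained by varying $c\in \C$ and $f\in \f$.

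First I would compute the commutant $((\pi_0\circ j)[\Ag])'$ directly. An element $X\in \mathcal{B}(\Kg)$ lies in this commutant iff it commutes with $\W{c}\tensor \one \tensor \one$ for every $c\in \C$ and with $\one\tensor\one\tensor \W{(1+\varrho)^{1/2}f\oplus \varrho^{1/2}\ovl{f}}$ for every $f\in \f$ (separate both generators by specialising $f=0$ or $c=0$). The first condition forces $X$ to act trivially on the first factor by the irreducibility of the Schrödinger/Weyl representation on $L^2(\R)\cong \Fock_b[\C]$, that is by $\{\W{c}:c\in \C\}'=\C\one$; this is the standard Stone--von Neumann fact that $\{\W{c}:c\in\C\}''=\mathcal{B}(L^2(\R))$. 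The second condition forces the third factor of $X$ to lie in $\pi_{AW}[\Af]'=\Mf'$. Hence
\begin{equation*}
((\pi_0\circ j)[\Ag])'=\C\one\tensor \mathcal{B}(L^2(\R))\tensor \Mf'.
\end{equation*}

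Next I would perform the analogous calculation for $\pi_0[\Aobs]=\{A\tensor \one \tensor \W{(1+\varrho)^{1/2}f\oplus \varrho^{1/2}\ovl{f}}:A\in \mathcal{B}(\Hel),\,f\in \f\}''$. Specialising to $f=0$ shows that any $X$ in the commutant must commute with all $A\tensor\one\tensor\one$, hence act trivially on the first factor; specialising to $A=\one$ again forces the third factor to lie in $\Mf'$. Thus
\begin{equation*}
\pi_0[\Aobs]'=\C\one\tensor \mathcal{B}(L^2(\R))\tensor \Mf',
\end{equation*}
exactly the same von Neumann algebra as obtained in the previous step. Taking commutants once more yields $(\pi_0\circ j)[\Ag]''=\pi_0[\Aobs]''=\Mg$, which is the desired identity.

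The step that carries the real content is the irreducibility statement $\{\W{c}:c\in \C\}'=\C\one$, which upgrades the collection of particle Weyl operators to all of $\mathcal{B}(\Hel)$ after taking the bicommutant; without it, $(\pi_0\circ j)[\Ag]''$ could a priori be strictly smaller than $\pi_0[\Aobs]''$. Everything else is a routine commutant calculation in a tensor product, relying only on the elementary fact that the commutant of $\mathcal{B}(H_1)\tensor \C\one \tensor \C\one$ in $\mathcal{B}(H_1\tensor H_2\tensor H_3)$ equals $\C\one\tensor \mathcal{B}(H_2)\tensor \mathcal{B}(H_3)$, applied twice.
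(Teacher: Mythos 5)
Your argument is correct, and it reaches the lemma by a genuinely different (more structural) route than the paper, while resting on the same key ingredient, namely $\We(\C)''=\mathcal{B}(\Hel)$. The paper does not compute any commutants: given $X\in\Mg$ and a vector $\phi\in\Kg$, it first uses the bicommutant theorem to find $X'\in\pi_0[\Aobs]$ of the elementary form $A\tensor\one\tensor\W{(1+\varrho)^{1/2}f\oplus\varrho^{1/2}\ovl{f}}$ with $\|X'\phi-X\phi\|<\epsilon$, and then, invoking $\We(\C)''=\mathcal{B}(\Hel)$ once more, replaces $A$ by a Weyl operator $W\in\We(\C)$ on the relevant vector, so that $(\pi_0\circ j)[W\tensor\W{f}]$ approximates $X$ strongly; this avoids any tensor-product commutation theorem. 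Your commutant calculation is valid and arguably more transparent, but two remarks are in order. First, your closing claim that everything follows from the single fact $(\mathcal{B}(H_1)\tensor\C\one\tensor\C\one)'=\C\one\tensor\mathcal{B}(H_2)\tensor\mathcal{B}(H_3)$ ``applied twice'' is not quite accurate: for the third factor you need the commutant of $\C\one\tensor\C\one\tensor\Mf$, where $\Mf$ is a proper von Neumann subalgebra of $\mathcal{B}(\Fock_b[\h\oplus\h])$, i.e. the (standard, but different) identity that the commutant of $\C\one\tensor N$ is the von Neumann tensor product of $\mathcal{B}$ with $N'$. Second, the explicit identification of the common commutant as $\C\one\tensor\mathcal{B}(L^2(\R))\tensor\Mf'$ is not actually needed: both $((\pi_0\circ j)[\Ag])'$ and $\pi_0[\Aobs]'$ are equal to the intersection $\{\W{c}\tensor\one\tensor\one : c\in\C\}'\cap\{\one\tensor\one\tensor\piaw[\W{f}] : f\in\f\}'$, respectively with $\{\W{c}\tensor\one\tensor\one\}$ replaced by $\{A\tensor\one\tensor\one : A\in\mathcal{B}(\Hel)\}$, and these intersections coincide as soon as $\{\W{c}: c\in\C\}''=\mathcal{B}(\Hel)$; taking commutants once more then gives the lemma directly. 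So your approach buys a cleaner, coordinate-free proof at the price of quoting tensor-commutant facts, whereas the paper's two-step strong approximation is more hands-on and self-contained.
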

Thus all elements of $\Mg$ can be approximated by elements of $(\pi_0\circ j)[\Ag]$.
From this we obtain:%
\begin{Kor}\label{Cor5.3}
\begin{enumerate}
\item The $W^*$-dynamical systems $(\Mg, \alpha, \Omg)$
      and $(\Mf, \alpha^f, \Omf)$ are unitarily equivalent.
\item $\Lg$ is unitarily equivalent to $\Lf$ and
      $\dom(\Lg) =  U \dom(\Lf)$.
\item $\sigma(\Lg) =  \R,\ \sigma_{sc}(\Lg) = \emptyset,\ \sigma_{ac}(\Lg) = \R,\ 
      \sigma_{pp}(\Lg) = \{0\}$ and $\Omg$ is up to scalar multiples
      the only vector in the kernel of $\Lg$.           
\end{enumerate}
\end{Kor}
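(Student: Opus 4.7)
The plan is to derive everything from Theorem \ref{Thm5.2} together with Lemma \ref{lem5.1}, reducing the spectral questions to the (much simpler) free situation on $\Kf = \Fock_b[\h\oplus\h]$.

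For part (1), I would argue that the unitary $U$ of Theorem \ref{Thm5.2} already does all the work. Since $\gamma : \Ag\to\Af$ is a $\ast$-isomorphism, $(\pi_{AW}\circ \gamma)[\Ag] = \pi_{AW}[\Af]$, and the intertwining relation $\gamma_U\circ \pi_{AW}\circ \gamma = \pi_0\circ j$ gives $\gamma_U(\pi_{AW}[\Af]) = (\pi_0\circ j)[\Ag]$. Taking bicommutants on both sides and using $\Mf = \pi_{AW}[\Af]''$ together with Lemma \ref{lem5.1} yields $\gamma_U(\Mf) = \Mg$, i.e.\ $U\Mf U^{-1} = \Mg$. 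Combined with $U e^{\imath t\Lf} = e^{\imath t\Lg}U$ and $U\Omf = \Omg$ from Theorem \ref{Thm5.2}, this produces the required unitary equivalence of $(\Mg,\alpha,\Omg)$ and $(\Mf,\alpha^f,\Omf)$.

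Part (2) then follows by Stone's theorem: differentiating the intertwining identity for the unitary groups in $t$ at $t=0$ gives $U\Lf U^{-1} = \Lg$, and in particular $\dom(\Lg) = U\dom(\Lf)$.

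For part (3) the task reduces by unitary equivalence to the spectral analysis of $\Lf$. Here $\Lf$ is the second quantization on $\Fock_b[\h\oplus\h]$ of the one-particle operator $h_{ph} = |k|\oplus(-|k|)$ acting on $\h\oplus\h = L^2(\R^3)\oplus L^2(\R^3)$. Since $|k|$ on $L^2(\R^3)$ has purely absolutely continuous spectrum $[0,\infty)$ and $-|k|$ has purely absolutely continuous spectrum $(-\infty,0]$, the one-particle operator $h_{ph}$ has purely absolutely continuous spectrum equal to $\R$. On the $n$-particle sector $\mathcal{S}_n\bigotimes^n (\h\oplus\h)$ the operator $\Lf$ acts as $\sum_{j=1}^n \one\tensor\cdots\tensor h_{ph}\tensor\cdots\tensor\one$; standard results on sums of commuting self-adjoint operators with absolutely continuous spectrum show that for $n\geq 1$ this restriction has purely absolutely continuous spectrum equal to $\R$. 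On the vacuum sector $\Lf$ acts as $0$, with one-dimensional kernel spanned by $\Omf$. Collecting the sectors yields $\sigma(\Lf) = \R$, $\sigma_{ac}(\Lf) = \R$, $\sigma_{sc}(\Lf) = \emptyset$ and $\sigma_{pp}(\Lf) = \{0\}$ with eigenspace $\C\Omf$. Transporting via $U$ and using $U\Omf = \Omg$ gives the claimed spectral structure of $\Lg$ together with the uniqueness of $\Omg$ in $\ker\Lg$ up to scalars.

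The main technical obstacle is the absolute continuity of $\Lf$ on the higher $n$-particle sectors: one must verify that convolving the absolutely continuous spectral measure of $h_{ph}$ with itself $n$ times (and symmetrizing) still yields an absolutely continuous measure whose support is all of $\R$. This is a classical but non-trivial fact about second quantization of operators with purely absolutely continuous spectrum, and is the only place in the argument where a genuine spectral-theoretic input beyond bookkeeping is required.
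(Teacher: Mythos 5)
Your proposal is correct and follows essentially the same route as the paper: part (1) via the unitary $U$ of Theorem \ref{Thm5.2} together with Lemma \ref{lem5.1} (the paper phrases the passage to $\Mf$, $\Mg$ through $\sigma$-weak continuity of $\gamma_U$ rather than bicommutants, which is the same mechanism), part (2) from the intertwining of the unitary groups, and part (3) by transporting the spectral structure of $\Lf$ through $U$. The only difference is that the paper dismisses the spectral analysis of $\Lf=d\Gamma(|k|\oplus(-|k|))$ as well known, whereas you spell out the sector-by-sector argument, correctly identifying the absolute continuity on the $n$-particle sectors (convolution of spectral measures, with the a.c.\ one-particle measure forcing a.c.\ sums) as the only genuinely spectral-theoretic input.
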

\begin{proof}
\begin{enumerate}
\item Since $ \gamma(\Ag) =  \Af$ and
$ (\pi_0 \circ j) [\Ag]\subset \Mg$, we have
\begin{equation*}
\gamma_U : \pi_{AW}[\Af] \rightarrow  \Mg.
\end{equation*}
Furthermore, since $\Mf$ (resp. $\Mg$  ) is the $\sigma$-weak closure of
$\pi_{AW}[\Af]$ (resp. $(\pi_0 \circ j) [\Ag]$), and $\gamma_U, \gamma_U^{-1}$
are  $\sigma$-weakly continuous, we conclude that
$\gamma_U : \Mf\rightarrow \Mg$ is a spatial $\ast$-isomorphism.
\item follows from $e^{\imath  t \Lg} = U e^{\imath  t \Lf} U^{-1}$.
\item The statements are well known for $\Lf$ and $\Omf$.
\end{enumerate}
\end{proof}
\subsection{Proofs.}\label{ProofsSection3}
The proof of Lemma \ref{LemmaExtension} is based on the fact that both
$\Hg$ and $\Lg$ are quadratic in the field operators.
It follows that $\taug_t(\W{c\oplus f})=\W{w_t(c\oplus f)}$ and that
$\alpha_t(\W{ c\oplus c'\oplus h\oplus h})=\W{ \tilde{w}_t(c\oplus c'\oplus h\oplus h)}$
for some real linear operator $w_t$ acting on $\C\oplus \f$ and
some real linear operator $\tilde{w}_t$ acting on $\C\oplus\C\oplus \h\oplus \h$. 
The proof of this follows from a theorem of Berezin for annihilation and
creation operators, see \cite{Berezin1966}.
In the proof of Lemma \ref{LemmaExtension} we compare $\tilde{w}_t$ with $w_t$
on a subspace of $\C\oplus \C \oplus \h \oplus \h$.
\begin{proof}[Proof of Lemma \ref{LemmaExtension}]
It is sufficient, to check that 
\begin{equation}\label{EqualEq1}
(\pi_0\circ j)[\tau_t(\W{c\oplus f})]=\alpha_t((\pi_0\circ j)[\W{c\oplus f}]).
\end{equation}
By Theorem \ref{Satz:Isom} we have that
\begin{align}
(\pi_0\circ j)&[\tau_t(\W{c\oplus f})]
=\mathcal{W}(w_t(c\oplus f)^{(1)}\oplus 0 \\ \nonumber
&\oplus (1+\varrho)^{1/2}w_t(c\oplus f)^{(2)}\oplus \varrho^{1/2}\overline{w_t}(c\oplus f)^{(2)})
\end{align}
with $w_t(c\oplus f)=:w_t(c\oplus f)^{(1)}\oplus w_t(c\oplus f)^{(2)}$. Since $\pig\circ j$ is
a regular representation, we only need to check Equality \eqref{EqualEq1} for
field operators instead of Weyl operators.
Since $\Lg$ is quadratic in creation- and annihilation operators,
there is a vector $\tilde{w}_t(c\oplus c' \oplus g\oplus g')$
such that
\begin{align}\label{Equality1}
\Phi(\tilde{w}_t(c\oplus c' \oplus g\oplus g')\big)=\alpha_t \big(\Phi(c\oplus c' \oplus g\oplus g')\big)
\end{align}
Moreover, $t\mapsto \tilde{w}_t$ is a strongly continuous one-parameter
group of real linear operators in $\C \oplus\C \oplus \h \oplus \h$. The
domain of its generator $\tilde{A}$ is
$\C \oplus\C\oplus\dom(|k|)\oplus\dom(|k|).$
Moreover,
\begin{equation}\label{Commutator}
\Big[\imath \Lg , \Phi(c\oplus c' \oplus g\oplus g')\Big]
=\Phi(\tilde{A}(c\oplus c' \oplus g\oplus g')).
\end{equation}
On the other side for field operators in $\mathcal{F}_b[\C\oplus \h]$,
$t\mapsto w_t$ defines a strongly continuous one-parameter group of real linear 
operators. Let $A$ be its generator. We have
\begin{equation}
[\imath \Hg , \Phi(c\oplus f)]=\Phi(A(c\oplus f)),
\end{equation}
for $c\oplus f\in \C\oplus \dom(|k|)$. Thus it suffices to 
show that
\begin{align}\label{Identity}
\tilde{A}&(c\oplus 0 \oplus (1+\varrho)^{1/2}f\oplus \varrho^{1/2}\ovl{f})
=A(c\oplus f)^{(1)}\oplus 0 \\ \nonumber
&\qquad \qquad \oplus (1+\varrho)^{1/2}A(c\oplus f)^{(2)}\oplus \varrho^{1/2}\ovl{A(c\oplus f)}^{(2)}
\end{align}
where $A(c\oplus f)=A(c\oplus f)^{(1)}\oplus A(c\oplus f)^{(2)}$.
Since $\Jg \Qg \Jg$ makes no contribution to Equation \eqref{Commutator} for
$c'=0,\ g=(1+\varrho)^{1/2}f$ and $g'=\varrho^{1/2}f$ we can verify \eqref{Identity}
using \eqref{Eq:DefHg} and \eqref{DefLg}.
\end{proof}
The proof of Theorem \ref{Thm5.1} is the most technical 
in this section, while the idea is simple. Since
$\Omega$ is in the kernel of $\Lg$ and by Lemma \ref{LemmaExtension} 
we deduce, that
\begin{equation}
\nu\circ \pig \circ j=\nu\circ\alpha_t\circ \pig \circ j
=\nu\circ \pig \circ j\circ\taug_t.
\end{equation}
$\nu$  is normal with respect to $\nu_0$, by \eqref{Defupsilon}.
The proof of Theorem \ref{Thm5.1} will be completed, if we show
that we have $\lim_{t\to\infty}\mu\circ \alpha_t\circ \pig \circ j=\omg$ 
for all $\nu_0$-normal states $\mu$.
\indent In the next statement we deduce an explicit formula for
$\upsilon_0\circ (\pi_0\circ j)$.
\begin{Lemma}\label{LemNonIntKMS}
For $c\oplus f\in \C\oplus \f$ we have that
\begin{equation*}
\upsilon_0\big( (\pi_0\circ j)(\W{c\oplus f})\big)
=\exp\big(-1/4\|\dispo^{1/2}(c\oplus f)\|^2_\h\big),
\end{equation*}
where
$
\varrho_0(c\oplus f)=\big(\exp( \beta \alpha)-1 \big)^{-1}
\big(\Re(c)\alpha^{-1/2}+\imath\Im(c)\alpha^{1/2}\big)
\oplus \varrho f
$
and
$
\alpha^2=1+\lambda^2 \||k|^{-1}\hat{\rho}\|_\h
$
\end{Lemma}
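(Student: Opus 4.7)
The strategy is to unfold the definitions so that the computation factorizes into a particle part and a bath part, compute each separately, and collect the result.

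First, by the definition of $j$ and of $\pig$ one has
$(\pi_0\circ j)(\W{c\oplus f}) = \W{c}\tensor \one \tensor \W{(1+\varrho)^{1/2} f\oplus \varrho^{1/2}\ovl{f}}$,
and by the definition of $\Omo$ the state $\upsilon_0 = \langle \Omo | \cdot \Omo\rangle$ factorizes as a product of a Gibbs expectation with respect to $\Hael$ on $L^2(\R)$ and a vacuum expectation on $\Fock_b[\h\oplus\h]$. So I need to evaluate two independent scalar products.

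For the bath piece I use that $\langle \Omega_{\h\oplus\h}|\W{g}\Omega_{\h\oplus\h}\rangle = \exp(-\tfrac14\|g\|^2_{\h\oplus\h})$ for any $g\in \h\oplus\h$. Plugging in $g= (1+\varrho)^{1/2}f\oplus \varrho^{1/2}\ovl{f}$ one obtains $\exp(-\tfrac14\langle f|(1+2\varrho)f\rangle_\h)$, which is the $\eta$--contribution to $\dispo$ in the second summand.

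For the particle piece, recall that $\Hael = \tfrac12(p^2+\alpha^2 q^2)$ is a harmonic oscillator of frequency $\alpha$; hence $\Hael = \alpha(\tilde A^*\tilde A + 1/2)$ where $\tilde A := \tfrac1{\sqrt 2}(\sqrt{\alpha}\,q+\imath p/\sqrt{\alpha})$ are the creation/annihilation operators adapted to $\Hael$. A short algebraic check shows that $\Phi(c) = \Re(c)\,q + \Im(c)\,p$ equals $\tilde\Phi(\tilde c)$ with $\tilde c := \Re(c)\,\alpha^{-1/2} + \imath\,\Im(c)\,\alpha^{1/2}$, exactly the argument appearing in the definition of $\varrho_0$. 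Therefore $\W{c} = \exp(\imath\tilde\Phi(\tilde c))$. Now apply the standard Bose quasi-free thermal formula for a single mode at frequency $\alpha$ and inverse temperature $\beta$:
\begin{equation*}
Z_\beta^{-1}\Tr\{e^{-\beta\Hael}\,e^{\imath\tilde\Phi(\tilde c)}\}
= \exp\!\bigl(-\tfrac14\,|\tilde c|^2 \coth(\beta\alpha/2)\bigr),
\end{equation*}
which one derives either from the explicit kernel $k_{\beta/2}$ (Mehler's formula) or from the quasi-free character of Gibbs states of quadratic Hamiltonians. Noting $\coth(\beta\alpha/2) = 1 + 2/(e^{\beta\alpha}-1)$, this is the $c$--contribution to the norm $\|\dispo^{1/2}(c\oplus f)\|_\h^2$.

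Multiplying the two factors yields the asserted formula. The main technical step is the Bogoliubov identification $\Phi(c)=\tilde\Phi(\tilde c)$ together with the thermal Gaussian formula at frequency $\alpha$; the bath part is routine. Everything else is bookkeeping of the Wick-rotation between the free oscillator basis (coming from $H_{osc}$) and the renormalized basis (coming from $\Hael$), which is exactly what the prefactor $\Re(c)\alpha^{-1/2}+\imath\Im(c)\alpha^{1/2}$ in the definition of $\varrho_0$ records.
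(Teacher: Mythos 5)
Your proposal is correct and follows essentially the same route as the paper: factor the state into the Gibbs expectation for $\Hael$ and the Araki--Woods vacuum expectation, rewrite $\W{c}$ via the ladder operators adapted to the frequency-$\alpha$ oscillator (producing exactly the coefficient $\Re(c)\alpha^{-1/2}+\imath\Im(c)\alpha^{1/2}$), and conclude with the standard quasi-free thermal formula, which the paper simply cites from Bratteli--Robinson. The only difference is that you spell out the bath factor and the single-mode $\coth(\beta\alpha/2)$ computation explicitly, which the paper leaves implicit.
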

\begin{proof}
By definition of $\nu_0$ it suffices to
show that
\begin{gather*}
 (\exp( \beta \alpha)-1 ) Z_\beta^{-1}  \Tr\{\W{c} e^{-\beta  \Hael}\}
=\\
\big(\Re(c)\alpha^{-1/2}
+\imath\Im(c)\alpha^{1/2}\big).
\end{gather*}
Let us first introduce ladder operators for the harmonic
oscillator
\begin{equation*}
B^*:= \frac{\alpha^{1/2} q - \imath \alpha^{-1/2} p}{\sqrt{2}},
\quad
B:= \frac{\alpha^{1/2} q + \imath \alpha^{-1/2} p}{\sqrt{2}}
\end{equation*}
defined on the Schwartz functions. There is up to a normalization
constant unique vector  $\Omega_{\alpha}:= (\alpha  \pi^{-1})^{1/4} e^{-\alpha q^2/2}$,
such that  $B\Omega_{\alpha}=0$. We also have
$[ B, B^* ] = \one_{S(\R)}$
and
$
\Hael:=\alpha B^*B+\alpha/2.
$
A complete system of eigenvectors for $\Hael$ is given by
\begin{equation*}
(n!)^{-1/2} (B^*)^n\Omega_{\alpha},\ n=0,1,2,\ldots
\end{equation*}
corresponding to the eigenvalues $E_n=\alpha  n+\alpha/2$.
Furthermore, we obtain that
$
\W{c}= \exp(\imath (c' B^*+ \ovl{c'}B)/\sqrt{2}),
$
where 
$c':= \Re(c)\alpha^{-1/2}+\imath\Im(c)\alpha^{1/2}$. 
The lemma follows by a general theorem about Gibbs state
of operators defined by  second quantization, see for instance
\cite[Prop. 5.28]{BratteliRobinson1996}.
\end{proof}
The subsequent Lemma is essential to prove the
convergence of $\upsilon_0$-normal states to $\omega$.
\begin{Lemma}\label{Conv}
We have for $A,B\in \Ag$ that
\begin{equation}\label{eq2}
\lim_{t\to\infty}\upsilon_0\big( (\pi_0\circ j)(B^*\tau_t(A)B)\big) 
= \upsilon_0\big( (\pi_0\circ j)(B^*B)\big) \cdot \omega(A)
\end{equation}
\end{Lemma}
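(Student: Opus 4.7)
My plan is to reduce to Weyl operators by density and sesquilinearity, then use Theorem \ref{Satz:Isom} and Lemma \ref{LemNonIntKMS} to make both sides of \eqref{eq2} explicit, and finally pass to the limit by combining the relation $v(w_tg)=e^{\imath t|k|}v(g)$ from \eqref{Eq4.6} with the Riemann--Lebesgue lemma. Since $\upsilon_0\circ\pi_0\circ j$ is a continuous state on $\Ag$, finite linear combinations of Weyl operators $\W{c\oplus f}$ are norm-dense in $\Ag$, and both sides of \eqref{eq2} depend linearly on $A$ and sesquilinearly on $B$, it suffices to treat $A=\W{g}$, $B=\W{g'}$ with $g,g'\in\C\oplus\f$; the general case then follows by expanding $A,B$ into finite sums, analyzing each triple separately, and invoking continuity. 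For such Weyl $A,B$, two applications of the CCR \eqref{Eq:WeylCCR} combined with $\taug_t(\W{g})=\W{w_tg}$ give
\begin{equation*}
B^*\taug_t(A)B=e^{-\imath\,\Im\langle w_tg|g'\rangle_{\C\oplus\h}}\,\W{w_tg},\qquad B^*B=\one,
\end{equation*}
so Lemma \ref{LemNonIntKMS} yields
\begin{equation*}
\upsilon_0\bigl((\pi_0\circ j)(B^*\taug_t(A)B)\bigr)=\exp\Bigl(-\imath\,\Im\langle w_tg|g'\rangle_{\C\oplus\h}-\tfrac14\|\dispo^{1/2}w_tg\|^2_{\C\oplus\h}\Bigr),
\end{equation*}
while $\upsilon_0((\pi_0\circ j)(B^*B))=1$ and $\omg(A)=\exp(-\tfrac14\langle v(g)|\disp v(g)\rangle_\h)$ by Theorem \ref{Satz:Isom}.

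For the phase factor I use the symplectic property of $v$ together with \eqref{Eq4.6}: $\Im\langle w_tg|g'\rangle_{\C\oplus\h}=\Im\langle v(w_tg)|v(g')\rangle_\h=\Im\langle e^{\imath t|k|}v(g)|v(g')\rangle_\h$, which tends to $0$ as $t\to\infty$ by the Riemann--Lebesgue lemma, since $v(g),v(g')\in\f\subset L^2(\R^3)$. For the quadratic term I substitute the explicit formula \eqref{Defwt}: the $\C$-component of $w_tg$ is the pair of inner products $\langle\ovl{Q}_\pm|e^{\pm\imath t|k|}v(g)\rangle_\h$, which vanish in the limit by Riemann--Lebesgue and whose contribution to $\|\dispo^{1/2}w_tg\|^2$ is therefore $o(1)$; the $\h$-component of $w_tg$ equals $\ovl{W}_+e^{\imath t|k|}v(g)+W_-e^{-\imath t|k|}\ovl{v(g)}$, and since $\ovl{W}_\pm,\,e^{\pm\imath t|k|}$ and $\disp$ are pairwise commuting radial Fourier multipliers, the squared $\disp$-norm of this vector expands into two time-independent diagonal contributions plus a mixed term carrying an oscillating factor $e^{-2\imath t|k|}$, which again vanishes by Riemann--Lebesgue.

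The main obstacle is to identify the resulting time-independent limit of $\|\dispo^{1/2}w_tg\|^2$ with $\langle v(g)|\disp v(g)\rangle_\h$: concretely, this amounts to matching the noninteracting thermal weighting in the GNS space with the interacting thermal weighting on the outgoing wave side via the precise pointwise expressions of $W_\pm$, $Q_\pm$ in terms of $D_+(r^2)$ together with the Bogoliubov identities in \eqref{Ident1} and Appendix \ref{Sec3}. Once this identification is established, $\lim_{t\to\infty}\upsilon_0((\pi_0\circ j)(B^*\taug_t(A)B))=\exp(-\tfrac14\langle v(g)|\disp v(g)\rangle_\h)=\omg(A)=\upsilon_0((\pi_0\circ j)(B^*B))\cdot\omg(A)$, as required. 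For general $A,B\in\Ag$, expanding into finite linear combinations $A=\sum_i a_i\W{g_i}$, $B=\sum_j b_j\W{g'_j}$, the same argument applies term by term: the additional cross inner products $\Re\langle w_tg_i|\dispo(g'_k-g'_j)\rangle_{\C\oplus\h}$ that appear when $B^*B\neq\one$ likewise vanish as $t\to\infty$ by the same Fourier-oscillation argument, so each summand factorizes correctly, and a norm-density argument concludes the proof for arbitrary $A,B\in\Ag$.
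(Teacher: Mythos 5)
Your skeleton (reduction to Weyl operators, Lemma \ref{LemNonIntKMS}, the CCR computation, Riemann--Lebesgue for the phase and for the $\C$-component of $w_t g$) coincides with the paper's proof, but the step you yourself call ``the main obstacle'' is the heart of the lemma, and your treatment of it rests on a false premise: the operators $W_\pm$, $\ovl{W}_\pm$ are \emph{not} radial Fourier multipliers and do not commute with $\disp$ or with $e^{\pm\imath t|k|}$. By Definition \ref{Def2} they are built from $T^*$, which contains the singular integral operator $G$; in fact $W_-$ is a Hilbert--Schmidt operator with a genuinely off-diagonal kernel (Lemma \ref{lem1.4}). Hence your expansion of $\|\disp^{1/2}(\ovl{W}_+e^{\imath t|k|}v+W_-e^{-\imath t|k|}\ovl{v})\|^2_\h$ into ``two time-independent diagonal contributions plus an oscillating mixed term'' is unjustified, and the identification of the limit with $\langle v(g)|\disp\, v(g)\rangle_\h$, which you explicitly defer (``once this identification is established''), is exactly what remains to be proved. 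Note also that no exact Bogoliubov identity with the weight $\disp$ inserted is available: the relations of Lemma \ref{lem1.4} hold with weight $\one$ only, so the desired equality cannot hold pointwise in $t$; it is intrinsically an asymptotic statement.

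The paper closes this gap with Lemma \ref{Lem:Comm}: the commutators $[\disp^{1/2},W_\pm]$ extend to compact operators, and $W_-^*\disp W_-=\disp^{1/2}\mathfrak{k}_1\disp^{1/2}$, $W_+^*\disp W_+=\disp+\disp^{1/2}\mathfrak{k}_2\disp^{1/2}$ with $\mathfrak{k}_1,\mathfrak{k}_2$ compact. Since $\disp^{1/2}e^{\imath t|k|}v(g)$ tends weakly to zero and compact operators map weakly null families to norm-null ones, the $W_-$ contribution, the mixed term and all commutator corrections vanish as $t\to\infty$, while the $W_+$ contribution produces exactly $\|\disp^{1/2}v(g)\|^2_\h$ (here only $\disp$, being multiplication by a function of $|k|$, commutes with $e^{\imath t|k|}$). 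The same mechanism, not plain Riemann--Lebesgue, is also needed for the linear cross terms $\Re\langle f+h|\dispo\, w_t(g)\rangle$ arising when $B^*B\neq\one$, because $\dispo(f+h)$ need not lie in $\C\oplus\h$ for $f,h\in\C\oplus\f$, whereas $\dispo^{1/2}(f+h)$ does. To repair your argument, prove (or invoke) Lemma \ref{Lem:Comm} and replace the commutativity claim by this compactness argument; the rest of your proposal then goes through.
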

Before we will give a proof of Lemma \ref{Conv} we show
\begin{Lemma} \label{Lem:Comm}
The following operators, defined as quadratic forms, on $\f$
can be extended to compact operators on $\h$
\begin{equation*} 
[ \disp^{1/2},  W_{\pm}^*],\quad [ \disp^{1/2},  W_{\pm}].
\end{equation*}
Moreover, there exists compact operators $\mathfrak{k}_1,$ und $\mathfrak{k}_2$ with
\begin{align}\label{eq: CompEst1}
W^*_- \disp W_- &=\disp^{1/2}\mathfrak{k}_1\disp^{1/2}\\ 
W^*_+ \disp W_+ &= \disp\label{eq: CompEst2}
+ \disp^{1/2}\mathfrak{k}_2\disp^{1/2}, 
\end{align}
regarded as quadratic forms on $\f$.
\end{Lemma}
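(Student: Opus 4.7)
The plan is to exploit the explicit representations of $W_\pm$ provided by Arai's scattering analysis and recalled in Appendix~\ref{Sec3}. In those formulas each $W_\pm$ decomposes as a multiplication operator in $|k|$ plus an integral-kernel correction whose kernel is of the schematic form
\begin{equation*}
K_\pm(k,k') \;=\; \frac{\hat\rho(|k|)\,|k|^{-1/2}\,\hat\rho(|k'|)\,|k'|^{-1/2}}{D_\pm(|k|^2)}\cdot(\text{bounded factor}).
\end{equation*}
Since $\disp^{1/2}$ is multiplication by $\sqrt{\coth(\beta|k|/2)}$, a function of $|k|$ alone, it commutes exactly with the multiplication part of $W_\pm$, so the commutators $[\disp^{1/2},W_\pm]$ and $[\disp^{1/2},W_\pm^*]$ reduce to their commutators with the integral-kernel correction.

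I would then verify that the commutator kernel $\bigl(\disp^{1/2}(k)-\disp^{1/2}(k')\bigr)K_\pm(k,k')$ is square-integrable on $\R^3\times\R^3$, yielding a Hilbert--Schmidt and a fortiori compact operator on $\h$. The $|k|^{-1/2}$ singularity of $\disp^{1/2}$ at the origin is absorbed by the factor $\hat\rho(|k|)|k|^{-1/2}$ (bounded near zero by Hypothesis~\ref{Hyp1}\,(1),(2)); decay at infinity is ensured by Hypothesis~\ref{Hyp1}\,(4); and the denominator $D_\pm(|k|^2)$ stays bounded away from zero for small $\lambda$ by Lemma~\ref{Lemma:Zeros}.

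For the identities \eqref{eq: CompEst1} and \eqref{eq: CompEst2}, I would invoke the Bogoliubov-type algebraic identity between $W_+$ and $W_-$ forced by the requirement that $\Phi_{in}$ satisfies the CCR (implicit in Lemma~\ref{lem1.5} together with the CCR \eqref{Eq:CCR} applied to $a_{in},a_{in}^*$). Inserting $\disp^{1/2}$ on both sides and commuting it past $W_\pm$, one obtains as quadratic forms on $\f$
\begin{equation*}
W_\pm^* \disp W_\pm \;=\; \disp^{1/2}(W_\pm^* W_\pm)\disp^{1/2} \;+\; \disp^{1/2}[W_\pm^*,\disp^{1/2}]W_\pm \;+\; W_\pm^*[\disp^{1/2},W_\pm]\disp^{1/2},
\end{equation*}
and the last two terms are of the sandwich form $\disp^{1/2}(\cdot)\disp^{1/2}$ with a compact middle by the first step. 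Substituting the Bogoliubov identity for $W_+^*W_+$ produces the leading $\disp$ of \eqref{eq: CompEst2} together with a compact correction; the analogous substitution for $W_-^*W_-$ produces the pure sandwich \eqref{eq: CompEst1}, using that $W_-$ itself is of integral-kernel type (and vanishes when $\lambda=0$), so that $W_-^*W_-$ is already compact after pulling out the $\disp^{1/2}$ weights.

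The main obstacle will be the uniform control of the kernels at the two singular points $k=0$ and $|k|=1$, and verifying that the commutator contributions really do factor cleanly through $\disp^{1/2}$ on both sides with a compact middle piece (rather than merely being compact as a whole). The former is handled by combining Hypothesis~\ref{Hyp1} with the non-vanishing of $D_\pm$ from Lemma~\ref{Lemma:Zeros}; the latter relies on the smoothing factor $\hat\rho(|k|)|k|^{-1/2}$ built into the rank-one part of $W_\pm$, which sends $\f$ into $\dom(\disp^{1/2})$.
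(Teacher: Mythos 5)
Your skeleton is the paper's: establish that the commutators are Hilbert--Schmidt by an integral-kernel estimate, then derive \eqref{eq: CompEst1}--\eqref{eq: CompEst2} from the algebraic relations of Lemma \ref{lem1.4} (in particular $W_+^*W_+=\one+W_-^*W_--P_++P_-$ and the fact that $W_-$ is Hilbert--Schmidt), commuting $\disp^{1/2}$ through and re-sandwiching the error terms between $\disp^{1/2}$'s, which is legitimate because $\disp\geq 1$. The gap is in your description of the kernels, and it sits exactly at the crux of the first assertion. By Definition \ref{Def2}, the nontrivial part of $W_\pm$ is built from the singular integral operator $G$, whose kernel carries the factor $(k^2-(k')^2+\imath 0)^{-1}$. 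For $W_-$ the two terms in \eqref{Def2f} cancel the diagonal singularity (this is why Lemma \ref{lem1.4} gives a kernel with $|k|+|k'|$ in the denominator and $W_-$ is Hilbert--Schmidt), but for $W_+-\one$ they add up to a kernel proportional to $\hat\rho(k)\ovl{Q(k')}\,(|k||k'|)^{-1/2}(|k|-|k'|)^{-1}$: this is \emph{not} ``decaying functions times a bounded factor'', and $W_+-\one$ is not Hilbert--Schmidt. Hence square-integrability of $\bigl(\disp^{1/2}(k)-\disp^{1/2}(k')\bigr)K_+(k,k')$ cannot be argued from the behaviour at $k=0$ and $|k|\to\infty$ alone; the delicate set is the diagonal $|k|=|k'|$ (not $k=0$ and $|k|=1$, as your ``main obstacle'' paragraph suggests). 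What rescues the estimate is precisely the vanishing of the numerator there, quantified in the paper by the weighted Lipschitz bound
\begin{equation*}
\Big|\frac{\disp(k)^{1/2}-\disp(k')^{1/2}}{|k|-|k'|}\Big|\;\leq\;\const\,\disp(k)^{3/4}\disp(k')^{3/4},
\end{equation*}
which simultaneously tracks the blow-up at the origin; the paper applies it to $g[\disp^{1/2},G_\epsilon]h$ with suitable multiplication weights $g,h$ and lets $\epsilon\to 0$. Without this cancellation estimate your verification step has no reason to succeed, so you should make it explicit.

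Two smaller points. The non-vanishing of $D_\pm$ on $[0,\infty)$ that controls $Q$ is Lemma \ref{lem1.1}(3) from Arai's analysis, not Lemma \ref{Lemma:Zeros}, which concerns the zeros of the analytic continuation off the real axis. And your displayed commutation identity is not exact: the correct expansion is $W_\pm^*\disp W_\pm=\disp^{1/2}W_\pm^*W_\pm\disp^{1/2}+[W_\pm^*,\disp^{1/2}]\disp^{1/2}W_\pm+\disp^{1/2}W_\pm^*[\disp^{1/2},W_\pm]$, i.e.\ your $\disp^{1/2}$ factors sit on the wrong side of the commutators; this is harmless for the conclusion, since boundedness of $\disp^{-1/2}$ lets you rewrite every such term as $\disp^{1/2}(\textrm{compact})\disp^{1/2}$, but it should be stated correctly, as all manipulations are to be read as quadratic forms on $\f$.
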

\begin{proof}
First, assume that $g, h$ are some sufficiently regular multiplication operators on
$L^2(\R^3)$. Recall the Definition of $G_\epsilon$ and $G$ in  \ref{Def2} and 
 \ref{lem1.3}.
On $\f$ we define the operator as a quadratic form
\begin{equation*}
K_\epsilon:= g [ \disp^{1/2},  G_\epsilon]  h .
\end{equation*}
This operator has  the integral kernel
\begin{equation*}
K_\epsilon(k,k') 
= \frac{g(k)}{|k|^{1/2}}
\frac{\disp(k)^{1/2}-\disp(k')^{1/2}}{k^2-(k')^2-\imath \epsilon}
\frac{h(k')}{|k'|^{1/2}}.
\end{equation*}
By definition of $\varrho$ in Definition \ref{Def:IntSyst} we obtain
\begin{align*}
\Big|&\frac{\disp(k)^{1/2}-\disp(k')^{1/2}}{|k|-|k'|}\Big|\\ \nonumber
&\quad\leq \frac{|\varrho(k)-\varrho(k')|}{\big||k|-|k'|\big|}
\disp(k)^{-1/4}\disp(k')^{-1/4}\\ \nonumber
&\quad\leq \const \disp(k)^{3/4}\disp(k')^{3/4}.
\end{align*}
Thus for $\alpha\in \{0,1\}$ we get
\begin{equation*}
\int |K_\epsilon(k,k') |^2d^3k d^3k'
\leq \const \Big\| \frac{\disp^{3/4}g}{|k|^{1/2+\alpha}}\Big\|^2_\h 
\cdot \Big\| \frac{\disp^{3/4}h}{|k|^{3/2-\alpha}}\Big\|_\h^2.
\end{equation*}
Thus $K_\epsilon$ extends to a Hilbert-Schmidt operator for sufficiently
regular $g,h$.
Applying dominated convergence to the integral kernels of $K_\epsilon$,
we obtain $K:=\lim_{\epsilon\to 0}K_\epsilon$
in the Hilbert-Schmidt-norm $\|\cdot\|_{HS}$ and that 
$K$ is an extension of $g [ \disp^{1/2},  G]  h$.
By the definitions of $T^*$ and $W_{\pm}^*$ given in \ref{Def2}
we obtain
\begin{equation*}
\|g [ \disp^{1/2},  T^*]  h\|_{HS}
\leq \const \Big\| \frac{\disp^{3/4}\hat{\rho}  g}{|k|^{\alpha}}\Big\|_\h
\cdot
\Big\| \frac{\disp^{3/4}\hat{\rho}  h}{|k|^{1-\alpha}}\Big\|_\h,
\end{equation*}
and for $g=h=1$
\begin{equation}\label{eq2a}
\|[ \disp^{1/2},  W_{\pm}^*]\|_{HS}
\leq \const \Big\| \frac{\disp^{3/4}\hat{\rho} }{|k|^{1/2}}\Big\|_\h
\cdot
\Big\| \frac{\disp^{3/4}\hat{\rho}}{|k|^{1/2}}\Big\|_\h.
\end{equation}
By Hypothesis \ref{Hyp1} we deduce that right side of \eqref{eq2a} is finite.
The some upper bound holds for $W_{\pm}^*$ is replaced by $W_{\pm}$.
Since $W^*_-W_-$ is compact, see Lemma \ref{lem1.4}, we obtain Equation \eqref{eq: CompEst1}.
Equation \eqref{eq: CompEst2} follows  
analog to Lemma \ref{lem1.4}.
\end{proof}
\begin{proof}[Proof of Lemma \ref{Conv}]
By Lemma \ref{LemNonIntKMS} we obtain
\begin{equation*}
\upsilon_0\big( (\pi_0\circ j)(\W{f})\big)
=\exp\big(-1/4\|\dispo^{1/2}f\|^2_\h\big)
\end{equation*}
for $f\in \C\oplus \f$. Using $\tau_t(\W{f})=\W{w_t(f)}$
and  the CCR  for Weyl operators we obtain 
\begin{align*}
\upsilon_0&\big( (\pi_0\circ j)(\W{f}\tau_t(\W{g})\W{h})\big) \\ \nonumber
&= \upsilon_0\big( (\pi_0\circ j)[\W{f}\W{h}]\big) 
\cdot \exp(-1/4\| \dispo^{1/2}w_t(g)\|^2_\h)\\ \nonumber
&\phantom{= \upsilon_0\big(}\cdot \exp\big( \imath /2 \Im\langle w_t(g)|f-h\rangle_\h\\ \nonumber
&\phantom{= \cdot \exp\big( \imath /2 \Im\langle }
               -1/2 \Re\langle f+h|\dispo  w_t(g)\rangle_\h\big)
\end{align*}
By the Riemann-Lebesgue Lemma we get that  $\wlim_{t\to\infty} e^{\imath t|k|}v(g)
=\wlim_{t\to\infty} \disp^{1/2}e^{\imath t|k|}v(g)=0$ for $g\in \C\oplus\f$. 
From \eqref{Defwt} and Lemma \ref{Lem:Comm} follows that $\lim_{t \to \infty} \langle f\pm h|(1+2\varrho_0)w_t(g)\rangle=0$.
Moreover, for any compact operator $\mathfrak{k}$  and any $u\in \h$ we have
that $\lim_{t\to\infty}\|\mathfrak{k}e^{-\imath  t |k|} u\|_\h=0$.
Thus  Lemma \ref{Lem:Comm} implies
\begin{align*}
\lim_{t \to \infty}&\upsilon_0\big( (\pi_0\circ j)(\W{f}\tau_t(\W{g})\W{h})\big) \\ \nonumber
=&\upsilon_0\big( (\pi_0\circ j)(\W{f}\W{h})\big) \cdot \\ \nonumber
&\lim_{t \to \infty} \exp\big(-\| \disp^{1/2}
( \ovl{W}_+e^{\imath t|k|}v(g)\\ \nonumber
&\phantom{\lim_{t \to \infty} \exp\big(-\| \disp^{1/2}
( \ovl{W}_+}
+W_-e^{-\imath t|k|}\ovl{v(g)})\|^2_\h/4\big)\\ \nonumber
= &\upsilon_0\big( (\pi_0\circ j)(\W{f}\W{h})\big) \exp\big(-1/4\| \disp^{1/2}v(g)\|_\h^2\big)
\end{align*}
As the linear combinations of Weyl operators are dense in $\We(\C\oplus \f)$ 
we infer the convergence in Equation \eqref{eq2}.
\end{proof}
\begin{proof}[Proof of Theorem \ref{Thm5.1}]
Let $\phi :=  (\pi_0\circ j)[B] \Omo\in \Kg$ 
with $\|\phi\|^2_{\Kg} = \omo(B^* B) = 1$.
\begin{align*}
|\langle \Omg|
& (\pi_0\circ j)[A]\Omg\rangle _{\Kg}
-\langle \phi|e^{\imath t\Lg}(\pi_0\circ j)[A]e^{-\imath t\Lg}
\phi \rangle_{\Kg}|
\\ \nonumber
&= |\langle \Omg| e^{\imath t\Lg}
(\pi_0\circ j)[A]e^{-\imath t\Lg}\Omg\rangle _{\Kg}\\ \nonumber
&\phantom{ |\langle \Omg| e^{\imath t\Lg}(\pi_0\circ j)[A]}
-\langle \phi|e^{\imath t\Lg}(\pi_0\circ j)[A]e^{-\imath t\Lg}
\phi \rangle_{\Kg}|\\ \nonumber
&\leq  2\|\phi-\Omg\|_{\Kg}\cdot \|A\|_{\Ag}.
\end{align*}
Next, because of Lemma \ref{Conv}
\begin{align*}
\lim_{t\to\infty}&
\langle \phi|e^{\imath t\Lg}(\pi_0\circ j)[A]
e^{-\imath t\Lg}\phi \rangle_{\Kg}
= \lim_{t\to\infty} \omo(B^*\taug_t(A)B)\\ 
&= \omg(A).
\end{align*}
Hence
\begin{equation*}
|\langle  \Omg |
  (\pi_0\circ j)(A) \Omg \rangle _{\Kg}
-\omg(A)| \leq  2  \|\phi - \Omg\|_{\Kg} 
\cdot  \|A\|_{\Ag}.
\end{equation*}
On the other hand, by Lemma \ref{lem5.1}, we have
\begin{equation*}
\cl  (\pi_0\circ j)[\mathcal{A}] \Omg 
= \cl  (\pi_0\circ j)[\mathcal{A}]'' \Omg 
= \cl  \Mg  \Omg =  \Kg.
\end{equation*}
Therefore, since $\|\phi - \Omg\|_{\Kg}$ can be chosen arbitrarily small,
$\upsilon((\pi_0\circ j)(A)) = \omg(A)$ follows.
\end{proof}
\begin{proof}[Proof of Lemma \ref{lem5.1}]
Obviously, we have $(\pi_0\circ j)[\Ag]''\subset\Mg$.
We need to show that $\Mg$ is contained in the weak closure of $(\pi_0\circ j)[\Ag]$.\\
\indent Let $X\in \Mg$, $\phi\in \Kg$ and $\epsilon>0$.
By the bicommutant-theorem there is a $X'\in \pi_0[\Aobs]$
such that $\|X'\phi-X\phi\|<\epsilon$. Using linearity and
density arguments we may assume that
\begin{equation*}
X'= A\tensor \one \tensor \big(\W{\sqrt{1+\varrho}f\oplus \sqrt{\varrho}\ovl{f}}\big),\quad A\in \mathcal{B}(\Hel)
\end{equation*}
and $\phi=\phi_1\tensor \phi_2 \tensor \phi_3\tensor \phi_4$.
Since $\We(\C)'' =  \mathcal{B}(\Hel)$, the bicommutant-theorem yields
the existence of $W\in \We(\C) $ such that $\|A \phi_1 - W \phi_1\|_{\Kg}<\epsilon$.
Thus
$
\|Y\phi-X\phi\|<2\epsilon
$
for $Y := (\pi_0\circ j)[ W\tensor \W{f} ]$. 
\end{proof}
\begin{proof}[Proof of Theorem \ref{Thm5.2}] Let $\mathcal{H}_1 :=  (\pi_{AW}\circ \gamma)[\Ag] \Omf$ 
and $\mathcal{H}_2 :=  (\pi_0\circ j)[\Ag] \Omg$. Since
$\Omf$ is separating for $(\pi_{AW}\circ \gamma)[\Ag]$, we can define 
$U :  \mathcal{H}_1\rightarrow \mathcal{H}_2$ by
\begin{equation*}
(\pi_{AW}\circ \gamma)[A] \Omf\mapsto (\pi_0\circ j)[A] \Omg,\qquad A\in \Ag.
\end{equation*}
We observe that
\begin{align*}
\langle(&\pi_0\circ j)[A] \Omg | (\pi_0\circ j)[B] \Omg\rangle_{\Kg}
 = \upsilon( \pi_0\circ j)[A^*B]\\\nonumber
& = \omg(A^*B)
= \omf((\pi_{AW}\circ \gamma)[A^*B])\\
& = \langle(\pi_{AW}\circ \gamma)[A] \Omf | (\pi_{AW}\circ \gamma)[B] \Omf \rangle_{\Kf}.
\end{align*}
Therefore, $U$ is an isometric isomorphism. Moreover,
\begin{align*}
e&^{\imath  t \Lg}
 U(\pi_{AW}\circ \gamma)[A] \Omf
 = e^{\imath  t \Lg} (\pi_0\circ j)[A] \Omg\\ \nonumber
&= e^{\imath  t\Lg} (\pi_0\circ j)[A]
 e^{-\imath  t \Lg} \Omg
 = (\pi_0\circ j)[\taug_t(A)] \Omg\\\nonumber
&= U (\pi_{AW}\circ \gamma) [\taug_t(A)] \Omf \\
&= U e^{\imath  t \Lf}(\pi_{AW}\circ \gamma)[A] e^{-\imath  t \Lf} \Omf\\
& = U e^{\imath  t \Lf} (\pi_{AW}\circ \gamma)[A] \Omf.
\end{align*}
That is $e^{\imath  t \Lg} U = U e^{\imath  t \Lf}$ on $\mathcal{H}_1$ . We can
extend $U$ to an isometric map from $\Kf$ onto $\Kg$,
since $\cl  \mathcal{H}_1 =  \cl   (\pi_{AW}\circ \gamma)[\Ag] \Omf = 
 \cl  \pi_{AW}[\We(\h)] \Omf = \Kf$ and 
 $\cl \mathcal{H}_2 = \Kg$. This completes the proof.
\end{proof}
\section{Anharmonic Oscillator}\label{Sec6}
The goal of this section is to prove the property of "return to
equilibrium" for a slightly modified model. We will study the
behavior of dynamics of an anharmonic oscillator
coupled to a heat bath for large times. The starting point is the
automorphism group
\begin{equation*}
t\mapsto \tau^V_t(A) := e^{\imath t H_V}A e^{-\imath t H_V},\ A\in \Ag,
\end{equation*}
where $H_V := \Hg+V$. We call $H_V$ the anharmonic Hamiltonian.
$V$ is a bounded real-valued potential over $\R$. The class of admissible potentials
are Fourier transforms of measures, that is
$
V(q)= \int_\R \nu(d\mu) e^{\imath q  \mu}.
$
More precisely, $\nu$ is a complex-valued Borel measure, such
that $\nu(A)=\ovl{\nu(-A)}$ for any Borel set $A\subset \R$ 
and $-A=\{ -a :  a\in A\}$. Let $|\nu|$ be the absolute value
of $\nu$, we assume that
$
a_i := \int_\R |\nu|(d\mu) |\mu|^i<\infty,\quad i=0, 1, 2
$
and that
\begin{equation}
\label{AssumV}
\kappa > 2(a_0+\tilde{\kappa} a_2).
\end{equation}
The number $\tilde{\kappa}$ depends on $\lambda$ and is
given by
\begin{equation*}
\tilde{\kappa}:=2 \pi \int_\R \frac{\lambda^2  \big|\hat{\rho}(r + \imath  \kappa )^2(r + \imath  \kappa)^2\big|}
{\big|G_\lambda(r + \imath  \kappa) \ovl{G_\lambda(r - \imath  \kappa)}\big|} dr.
\end{equation*}
Note that Condition \eqref{AssumV} only makes sense for fixed $\lambda\not=0$.
This inhibits to prove Fermi's golden rule for the time decay.\\
\indent By means of Equation \eqref{Eq:DefFeldOp} we obtain
$
V= \int_\R \nu(d\mu) \W{\mu\oplus 0}
$
in the sense of operators.\\
Note that $\tau^V_t(A)$ need not to be an element of $\Ag$ for $A\in \Ag$.
So we require an extension to a larger dynamical system. The first step
in this direction is to consider the Hamiltonian $H_P$ and 
$\tau^P_t(A):= e^{\imath t H_P}A e^{-\imath t H_P},\quad A\in \Af$
with $H_P := \Haf+P$ and
$
P := \int_\R  \nu(d\mu)  \W{v(\mu\oplus 0)}\in \mathcal{B}(\Hf).
$
This definition is motivated by the following formal calculations:
\begin{enumerate}
\item $P=\gamma(V)$
\item $\gamma(\tau^V_t(A))= \gamma(e^{\imath t H_V}e^{-\imath t \Hg})\cdot
\tauf_t(\gamma(A))\cdot\gamma(e^{\imath t \Hg}e^{-\imath t H_V})$.
\item Using  Dyson's expansion we get 
\begin{align*}
\gamma&(e^{\imath t H_V}e^{-\imath t \Hg})\\
&= \one + \sum_{n=1}^\infty\int_{0\leq t_n\leq  \ldots \leq t_1\leq t} \unl{dt}  
\gamma\big(\tau_{t_n}(V)\ldots \tau_{t_1}(V)\big)\\
&= \one + \sum_{n=1}^\infty\int_{0\leq t_n\leq  \ldots \leq t_1\leq t}  \unl{dt} 
\tauf_{t_n}(P)\ldots \tauf_{t_1}(P)\\
&=e^{\imath t H_P}e^{-\imath t \Haf}
\end{align*}
Likewise we obtain 
$$\gamma(e^{\imath t \Hg}e^{-\imath t H_V})
=e^{\imath t \Haf}e^{-\imath t H_P}.$$
\item All together we get formally $\gamma(\tau^V_t(A))= \tau^P_t\big(\gamma(A)\big)$
\end{enumerate}
In a second step we define a dynamics $\alpha^Q$ on the $W^*$-algebra
$\Mf$. The automorphism group is given by
\begin{equation}\label{DefAlphaQ}
\taup{t}{X}:= e^{\imath t \LQ}Xe^{-\imath t \LQ},\quad X\in \Mf
\end{equation} 
$\LQ$ denotes the Standard Liouvillean for the
dynamics $\alpha^Q$, it is given by
$
\LQ =  \Lf + Q - \Jg Q \Jg,
$
with
$
Q := \int_\R  \nu(d\mu)  \piaw[\W{v(\mu\oplus 0)}]\in \Mf.
$
$\Jg$ is the modular conjugation corresponding to $\Omf$.
Note that $Q\in \Mf$ and since $\Mf$ is a $W^*$-algebra, we
obtain $\alpha_t^Q(X)\in \Mf$ for $X\in\Mf$.
The definition of $\alpha^Q$ can be motivated in the same way as that 
of $\alpha$ in Section \ref{Sec5}.
Moreover, a $(\alpha^Q, \beta)$-KMS state is given by
$\omp$, where 
$\omp(A) :=  
\langle \Omp | A  \Omp \rangle_{\Kf}
,\quad \Omp:=c e^{-\beta/2  (\Lf + Q)} \Omf.$
$c>0$ is a normalization constant. We remark that
\begin{align}\nonumber
\taup{t}{A} &= \tevf{t}{A}
 + \sum_{n=1}^\infty  \imath^n  \int_{0 \leq  t_n \leq \ldots t_1 \leq  t} d\unl{t} 
[ \tevf{t - t_1}{Q}, \\ \label{CommutatorExp} 
&\qquad
\ldots [ \ldots [ \tevf{t - t_n}{Q}, \tevf{t}{A} ] \ldots ] ],
\end{align}
where $d\unl{t}=dt_1 \cdots dt_n$.
For the formulation of our results we set:
\begin{Def} We define
\begin{enumerate}
\item The observables of the small systems are $(\piaw \circ \gamma)[\We(\C\oplus 0)]$.
\item The analytic observables are $\Mfa:= (\piaw \circ \gamma)[\Ao]$.
\item The analytic states are given by 
      \begin{equation*}
      \mathcal{S}_a:=\{ \mu :  \exists A\in  \Mfa\ \forall X\in \Mf\
      \mu(X) = \langle A \Omf |  X  A \Omf\rangle \} 
      \end{equation*}
\end{enumerate}
\end{Def}
We will prove the following:

\begin{Satz} \label{Lem5.4}
The $W^*$-dynamical system $(\Mf,\alpha^Q,\omp)$ is mixing.
Moreover, we have for $\mu \in \mathcal{S}_a$ and $ A\in \Mfa$
\begin{equation*}
\big| \mu(\taup{t}{A})-\omp(A)\big|
 \leq  \const  \exp(-(\kappa - 2 (a_0 +\tilde{\kappa}  a_2)) t),
\end{equation*}
where $\const$ is some constant depending on $\mu, A$  and $Q$, and for
$\tilde{\kappa}$  defined in \eqref{AssumV}.
\end{Satz}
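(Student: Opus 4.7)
The plan is to combine the Dyson expansion \eqref{CommutatorExp} with the exponential decay from Theorem \ref{Satz:RetEq}, transported to $\Mf$ through the isomorphism $\piaw\circ\gamma$, and then to sum the resulting series via the Maassen-type estimate recalled in the second appendix. By density and linearity it suffices to consider $A = (\piaw\circ\gamma)[\W{g}]$ with $g\in\mathfrak{g}$ and $\mu(\cdot)=\langle A_0\Omf\,|\,\cdot\,A_0\Omf\rangle$ for some $A_0\in\Mfa$. Through the unitary equivalence of Corollary \ref{Cor5.3} and the identification $\Mg=(\pig\circ j)[\Ag]''$ of Lemma \ref{lem5.1}, the free dynamics $\tevfabb{t}$ on $\Mfa$ mirrors $\taug$ on $\Ao$, so Theorem \ref{Satz:RetEq} provides the seed estimate $|\mu(\tevf{t}{A})-\omf(A)|\leq \const\,e^{-\kappa t}$ that drives the $n=0$ term of the expansion.

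Next I apply \eqref{CommutatorExp} to $\taup{t}{A}$ and, in parallel, expand the perturbed KMS vector $\Omp=c\,e^{-\beta/2(\Lf+Q)}\Omf$ in the Araki perturbation series around $\Omf$, so that both $\mu(\taup{t}{A})$ and $\omp(A)$ become double series in $Q$. Since $Q=\int\nu(d\mu)\,\piaw[\W{v(\mu\oplus 0)}]$ is a superposition of Weyl operators and $v$ is real linear in the first slot, each commutator $[\tevf{t-t_j}{\piaw[\W{v(\mu_j\oplus 0)}]},\,\cdot\,]$ collapses through the CCR \eqref{Eq:WeylCCR} to a Weyl translation multiplied by a scalar factor of the form $e^{-\imath\,\Im\langle e^{\imath(t-t_j)|k|}v(\mu_j\oplus 0)\,|\,\cdot\,\rangle_\h}-1$. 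Lemma \ref{Lemma:Decay} bounds this factor by $O(|\mu_j|\,e^{-\kappa(t-t_j)})$, with the explicit contour-shift employed in Lemma \ref{Lemma:Zeros} producing precisely the constant $\tilde{\kappa}$; integrating each vertex against $|\nu|$ with the corresponding $|\mu_j|$-weight brings in the moments $a_0$ and $a_2$.

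The concluding step sums the series. The combinatorial inequality from \cite{Maassen1983} reorganizes the $n$-fold nested integrals into a geometric series; under the smallness hypothesis \eqref{AssumV} its total is bounded by $\const\,e^{-(\kappa-2(a_0+\tilde{\kappa}\,a_2))t}$, giving the announced rate. Mixing follows by density of $\Mfa$ in $\Mf$ (via Lemma \ref{lem5.1} together with $\Mfa=(\piaw\circ\gamma)[\Ao]$) and of $\mathcal{S}_a$ among $\omp$-normal states. The main obstacle I anticipate is the bookkeeping required to verify that the imaginary-time Araki expansion of $\omp$ combines with the $n=0$ Dyson term so as to reproduce $\omf(A)$ modulo errors of the same exponential order; a secondary difficulty is that $\tilde{\kappa}$ is not small when $\lambda$ is not small, so \eqref{AssumV} is not a pure small-coupling condition and one must check applicability of the Maassen estimate across the entire regime rather than by simple perturbation in $\lambda$.
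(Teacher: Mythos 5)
Your treatment of the Dyson expansion itself follows the paper's route: expand $\taup{t}{B}$ by \eqref{CommutatorExp}, collapse the nested Weyl commutators into sine factors, control the two-point phases by the contour-shift decay of Lemma \ref{Lemma:Decay}, and sum with the Maassen estimate of Lemma \ref{LemA1} under \eqref{AssumV}. The genuine gap is the step you yourself flag as an ``obstacle'': the identification of the $t\to\infty$ limit with $\omp(A)$. Your plan is to expand $\Omp = c\,e^{-\beta/2(\Lf+Q)}\Omf$ in an Araki perturbation series and match it against the real-time Dyson series; this matching is never carried out, and it is not a routine bookkeeping matter — the Araki expansion is an imaginary-time expansion of a vector, the Dyson expansion a real-time expansion of the dynamics, and there is no termwise correspondence between them that would reproduce $\omp(A)$ ``modulo errors of the same exponential order''. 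Note also that the $n=0$ term of the Dyson series does \emph{not} tend to $\omf(A)$-type data in the way your sketch suggests: the limit of $\upsilon_f(\taup{t}{B})$ is a new functional $\tilde{\upsilon}(B)$, distinct from $\upsilon_f(B)$, because $\upsilon_f$ is not $\alpha^Q$-invariant. The paper closes this gap without any series comparison: Lemma \ref{Lem5.4b} first shows that $t\mapsto\upsilon_f(\taup{t}{B})$ is exponentially Cauchy, which \emph{defines} $\tilde{\upsilon}(B)$; then in Corollary \ref{Kor1g} one shows that every vector state $\langle\phi|\taup{t}{B}\phi\rangle$ with $\|\phi\|=1$ has the same limit (approximating $\phi$ by $A_0\Omf$, $A_0\in\Mfa$, using cyclicity of $\Omf$ via $(\Mfa)''=\Mf$), and finally one takes $\phi=\Omp$: since $\omp$ is an $(\alpha^Q,\beta)$-KMS state it is $\alpha^Q$-invariant, so $\langle\Omp|\taup{t}{B}\Omp\rangle=\omp(B)$ for all $t$, forcing $\tilde{\upsilon}=\omp$ on $\Mfa$ exactly. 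This exact identification is what turns the quantitative clustering bound of Lemma \ref{Lem5.4b} (with $\upsilon_f(A_0^*A_0)=1$) into the stated estimate $|\mu(\taup{t}{A})-\omp(A)|\leq\const e^{-(\kappa-2(a_0+\tilde{\kappa}a_2))t}$. Without an argument of this kind (or a completed, rigorous version of your double expansion), your proposal does not yet prove the theorem.

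Two smaller points. For the density you invoke, Lemma \ref{lem5.1} concerns $(\pi_0\circ j)[\Ag]''=\Mg$; what is actually needed is $(\Mfa)''=\Mf$, which rests on the density of $v(\g)$ in $\f$ (Lemma \ref{Lemma:Density}), since $\Mfa$ only uses the \emph{analytic} Weyl generators. And the mixing statement for all $\omp$-normal states and all $C\in\Mf$ does not follow from ``density of $\mathcal{S}_a$'' alone: the paper's Corollary \ref{Kor1g} needs the commutant trick ($D\in\Mf'$, $\LQ\Omp=0$, strong approximation $C_n\to C$, cyclicity of $\Omp$ for $\Mf'$) to pass from analytic observables and vectors $A_0\Omf$ to the general case; some such argument must be supplied. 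Your remark about $\tilde{\kappa}$ not being small is not itself a problem: Lemma \ref{LemA1} is purely combinatorial and needs no smallness, \eqref{AssumV} only ensures the final exponent is positive.
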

\subsection{Auxiliary Statements}
The proof of Theorem \ref{Lem5.4} is splitted in three parts.
The first is
\begin{Lemma} \label{Lem5.4b}
For $A,B,C\in \Mfa$ we have
\begin{align}\label{Eq:AnhDecayEst}
|\upsilon_f(A\taup{t}{B}C&)-\upsilon_f(AC) 
\tilde{\upsilon}(B)|\\ \nonumber
&\leq \const \exp(-(\kappa-2(a_0+\tilde{\kappa} a_2))t),
\end{align}
where $\tilde{\upsilon}(B):=\lim_{t\to \infty}\upsilon_f(\taup{t}{B})$.
\end{Lemma}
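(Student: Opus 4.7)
The plan is to substitute the Dyson expansion \eqref{CommutatorExp} for $\taup{t}{B}$ into the left-hand side of \eqref{Eq:AnhDecayEst} and estimate the resulting series term by term, combining the analyticity-based decay of scalar products (Lemma \ref{Lemma:Decay}) with the Maassen-type bound recalled in the second appendix. First I would reduce, by linearity and by the definition of $\Mfa$, to the case where $A=(\piaw\circ\gamma)[\W{f}]$, $B=(\piaw\circ\gamma)[\W{g}]$ and $C=(\piaw\circ\gamma)[\W{h}]$ with $f,g,h\in\mathfrak{g}$. Writing $Q=\int\nu(d\mu)\,(\piaw\circ\gamma)[\W{\mu\oplus 0}]$ and recalling from the proof of Lemma \ref{Lemma:Density} that $v(\mu\oplus 0)=\mu\,|k|^{-1/2}\ovl{Q}\in\mathcal{R}$, every $\tevf{t-t_j}{Q}$ that enters the $n$th Dyson term is a $\sigma$-weak integral of Weyl operators whose form factors are the time-translates $e^{\imath(t-t_j)|k|}v(\mu_j\oplus 0)$, each of them in the analytic class covered by Lemma \ref{Lemma:Decay}.

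Next I would unfold the nested commutators by repeated application of the CCR \eqref{Eq:WeylCCR}, which gives $[\W{u_1},\W{u_2}]=-2\imath\sin\bigl(\Im\langle u_1|u_2\rangle/2\bigr)\W{u_1+u_2}$. Iterating this $n$ times turns the $n$-fold commutator into a sum of ordered products of $n+1$ Weyl operators, each product carrying $n$ sine factors whose arguments are imaginary parts of scalar products between pairs of the time-translated form factors $v(\mu_i\oplus 0)$ and $v(g)$. Because $\upsilon_f$ acts as a quasi-free state on these Weyl operators (via the Araki--Woods form $\pi_{AW}$ and Definition \ref{Def:IntSyst}), $\upsilon_f$ of each ordered product is an explicit Gaussian in the form factors. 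The upshot is that each $n$th-order term of $\upsilon_f(A\,\taup{t}{B}\,C)$ splits, before integration, into a product of explicit scalar-product factors labelled by the pairings of the form factors attached to $A,B,C$ and to the $n$ internal $Q$'s.

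Then I would apply Lemma \ref{Lemma:Decay} to each scalar product that carries a \emph{large} time argument: any factor coupling $v(g)$ (time $t$) or some $v(\mu_j\oplus 0)$ (time $t-t_j$) to $v(f)$ or $v(h)$ (time $0$), as well as any factor coupling two internal $v(\mu_i\oplus 0)$'s at distinct times, is bounded by $\const\,|\mu_i|\,|\mu_j|\,e^{-\kappa s}$ with $s$ the pertinent time difference. Integrating against $|\nu|(d\mu_1)\cdots|\nu|(d\mu_n)$ replaces the $|\mu_i|$'s by the moments $a_0,a_1,a_2$ of $|\nu|$. The remaining $n$-fold integral over the time simplex $0\le t_n\le\cdots\le t_1\le t$ is exactly of the form to which the Maassen-type estimate in the second appendix applies: it collapses the nested exponential decays into a factor bounded by $\bigl(a_0+\tilde\kappa a_2\bigr)^n/n!$ times $e^{-(\kappa-2(a_0+\tilde\kappa a_2))t}$, so that condition \eqref{AssumV} guarantees both convergence of the Dyson series and the claimed positive decay rate.

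The last step is to identify the \emph{non-decaying} contributions. These are exactly the terms in which no scalar-product pairing links $v(g)$ (or any internal $v(\mu_i\oplus 0)$) to $v(f)$ or $v(h)$; such terms are $t$-independent and, together with the ordinary $\upsilon_f(AC)$-factor coming from the $\{f,h\}$-pairings, resum to $\upsilon_f(AC)\,\tilde\upsilon(B)$, where $\tilde\upsilon(B)$ is recognised as the same Dyson-series sum applied to $A=C=\one$ (so its existence is part of the same argument). Subtracting them leaves only pairings in which $v(g)$ or some internal $v(\mu_j\oplus 0)$ is tied to $v(f)$ or $v(h)$, all of which carry the exponential factor, giving \eqref{Eq:AnhDecayEst}. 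The main obstacle is the combinatorial book-keeping of this separation: one must carefully track, in the quasi-free expansion of $n+3$ Weyl operators under $\upsilon_f$, exactly which pairings contribute to $\upsilon_f(AC)\tilde\upsilon(B)$ and which generate a genuinely decaying scalar product, and then verify that the Maassen bound is uniform enough in $n$ to absorb the latter.
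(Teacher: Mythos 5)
Your plan is essentially the paper's own proof: reduce to Weyl generators $A=\piaw[\W{f}]$, $B=\piaw[\W{g}]$, $C=\piaw[\W{h}]$ with $f,g,h\in v(\g)$, expand $\taup{t}{B}$ by the commutator Dyson series \eqref{CommutatorExp}, collapse the nested commutators via the Weyl CCR into scalar sine prefactors $c_n$ times a \emph{single} Weyl operator $\W{e^{\imath t|k|}u_n}$ (not a sum of ordered products of $n+1$ Weyl operators, as you write --- the CCR makes everything collapse to one, which only simplifies matters), evaluate with the quasi-free state, extract decay from Lemma \ref{Lemma:Decay}, and control the simplex integrals with the Maassen estimate, Lemma \ref{LemA1}. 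One structural remark: the pairing book-keeping you flag as the main obstacle is not actually needed. Since $\omf$ is Gaussian on Weyl operators, the three-point expectation factorizes exactly as $e^{\Delta_n}\,\omf(\W{f}\W{h})\,\omf(\W{u_n})$, where $\Delta_n$ collects precisely the cross terms between $f,h$ and $e^{\imath t|k|}u_n$; the decaying part is isolated by the elementary bounds $|e^{\Delta_n}-1|\le|\Delta_n|e^{|\Re\Delta_n|}$ and $\omf(\W{f}\W{h})\,\omf(\W{u_n})\,e^{|\Re\Delta_n|}\le 1$, so the only combinatorics left is the one already packaged in Lemma \ref{LemA1}.

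The one genuine gap is your treatment of $\tilde{\upsilon}(B)$. At finite $t$ the non-decaying contributions resum to $\upsilon_f(AC)\,\upsilon_f(\taup{t}{B})$, not to $\upsilon_f(AC)\,\tilde{\upsilon}(B)$: the Dyson sum with $A=C=\one$ still depends on $t$ through the integration simplex $0\le t_n\le\cdots\le t_1\le t$. To reach the stated estimate you must additionally prove that this quantity is exponentially Cauchy, namely $|\upsilon_f(\taup{s}{B})-\upsilon_f(\taup{t}{B})|\le \const\, e^{-(\kappa-2(a_0+\tilde{\kappa}a_2))t}$ for $s>t$, which both establishes the existence of $\tilde{\upsilon}(B)$ and supplies the convergence rate. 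The paper does this by a separate tail estimate on the Dyson series (restricting to $t_1\ge t$ and using the bound \eqref{Eq:Estc} with $i=1$), and it is exactly this step that produces the factor $2$ in the final decay rate $\kappa-2(a_0+\tilde{\kappa}a_2)$; your parenthetical remark that the existence of $\tilde{\upsilon}(B)$ is ``part of the same argument'' must be fleshed out into this tail estimate, though the tools you have assembled do suffice for it.
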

The second part is
%
%
\begin{Lemma}
$(\Mfa)''=\Mf$.
\end{Lemma}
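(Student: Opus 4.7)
The plan is to prove the two inclusions separately. The inclusion $(\Mfa)'' \subseteq \Mf$ is immediate: $\gamma(\Ao) \subseteq \Af$ gives $\Mfa = (\piaw\circ\gamma)[\Ao] \subseteq \piaw[\Af] \subseteq \Mf$, and the bicommutant of any subset of a von Neumann algebra lies inside it. So the content is the reverse inclusion, for which by the bicommutant theorem it suffices to show that $\Mfa$ is strongly dense in $\piaw[\Af]$; then $\Mf = \piaw[\Af]'' \subseteq (\Mfa)''$.

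First I would verify that $\Mfa$ is a unital $*$-subalgebra of $\Mf$: the Weyl relation \eqref{Eq:WeylCCR}, $\W{g}^* = \W{-g}$, and the fact that $\mathfrak{g}$ is a real vector space containing $0$ (since $v$ is real linear and $\mathcal{R}$ is a real vector space) show that $\Ao$ is $*$-closed under products up to scalars. Then the task reduces to approximating each $\piaw[\gamma(\W{c\oplus h})] = \piaw[\W{v(c\oplus h)}]$, for arbitrary $c\oplus h \in \C\oplus\f$, in the strong operator topology by elements of $\piaw[\gamma(\Ao)] = \linhull\{\piaw[\W{v(g)}] : g\in\mathfrak{g}\}$; the result then extends by linearity and by norm closure of the strong closure to all of $\piaw[\Af]$.

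The key step uses Lemma \ref{Lemma:Density}: given $h \in \C\oplus\f$, pick $(g_n)\subset\mathfrak{g}$ with $v(g_n) \to v(h)$ in the norm $\|\cdot\|'_+ = \|\cdot\|_\h + \||k|^{-1/2}\cdot\|_\h$. Recall
\begin{equation*}
\piaw[\W{f}] = \W{(1+\varrho)^{1/2}f \oplus \varrho^{1/2}\ovl{f}},
\end{equation*}
so strong convergence of these Weyl operators on $\Kf$ follows from convergence in the $\h\oplus\h$-norm of the one-particle vectors. The bound $(1+\varrho(k))^{1/2} + \varrho(k)^{1/2} \leq \const(1 + |k|^{-1/2})$ on $\R^3$, which comes from $\varrho(k) = (e^{\beta|k|}-1)^{-1}$ behaving like $(\beta|k|)^{-1}$ near the origin and decaying exponentially at infinity, yields
\begin{equation*}
\|(1+\varrho)^{1/2} f\|_\h + \|\varrho^{1/2} f\|_\h \leq \const \|f\|'_+,
\end{equation*}
so $v(g_n)\to v(h)$ in $\|\cdot\|'_+$ transports to convergence of the Araki--Woods one-particle vectors in $\h\oplus\h$, and hence $\piaw[\W{v(g_n)}] \to \piaw[\W{v(h)}]$ strongly.

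The only mildly delicate point is the control of the infrared singularity of $\varrho^{1/2}$ near $k=0$, but the auxiliary norm $\|\cdot\|'_+$ was designed to absorb exactly that factor $|k|^{-1/2}$, so Lemma \ref{Lemma:Density} is precisely what is needed. Putting the pieces together, every $\piaw[\W{f}]$ with $f \in \f$ lies in the strong closure of $\Mfa$, hence $\piaw[\Af] \subseteq (\Mfa)''$, and taking bicommutants gives $\Mf = \piaw[\Af]'' \subseteq (\Mfa)'' \subseteq \Mf$, as required.
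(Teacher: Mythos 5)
Your proof is correct and follows essentially the same route as the paper: both reduce the statement to Lemma \ref{Lemma:Density}, use that $\|\cdot\|'_+$-convergence of the arguments gives convergence of the Araki--Woods one-particle vectors $\sqrt{1+\varrho}\,f\oplus\sqrt{\varrho}\,\ovl{f}$ in $\h\oplus\h$ and hence strong convergence of the Weyl operators, and conclude via the bicommutant.
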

%
%
\begin{proof}
First, we observe
$
\Mfa=\linhull\{ \W{\sqrt{1+\varrho}f\oplus \sqrt{\varrho}\ovl{f}} :  f\in v(\g)\}.
$
Since $v(\g)$ is dense in $\f$ by Lemma \ref{Lemma:Density}, there exists
for every $f\in \f$ a sequence $(f_n)_n \subset v(\g)$,
so that 
$$\sqrt{1+\varrho}f_n\oplus \sqrt{\varrho}\ovl{f}_n\to
\sqrt{1+\varrho}f\oplus \sqrt{\varrho}\ovl{f},\quad n\to \infty,$$
where the limit is taken with respect to the norm of $\h\oplus \h$.
It follows, that
$$\slim_{n\rightarrow\infty}\W{\sqrt{1+\varrho}f_n\oplus \sqrt{\varrho}\ovl{f}_n}=
\W{\sqrt{1+\varrho}f\oplus \sqrt{\varrho}\ovl{f}}.$$
We conclude that
$
(\Mfa)''= 
\linhull \{ \W{\sqrt{1+\varrho}f\oplus \sqrt{\varrho}\ovl{f}} :  f\in \f\}''= \Mf.
$
\end{proof}

The third part is

\begin{Kor}\label{Kor1g}
For every $\omp$-normal state $\mu$ over $\Mf$ 
and every $C\in \Mf$ we obtain
$\lim_{t\to\infty}\mu(\taup{t}{C})= \omp(C).$
Hence $(\Mf,\alpha^Q,\omp)$ is mixing.
\end{Kor}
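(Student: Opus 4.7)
The aim is to upgrade the exponentially fast convergence from Theorem \ref{Lem5.4}, available on $\mathcal{S}_a \times \Mfa$, to all $\omp$-normal states and all observables in $\Mf$. The algebraic input is the preceding Lemma $(\Mfa)'' = \Mf$, which via Kaplansky's theorem provides the $\sigma$-strong density of the bounded part of $\Mfa$ in the bounded part of $\Mf$.

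My plan proceeds in two density steps. Step one enlarges the class of states while keeping the observable in $\Mfa$. Every $\omp$-normal state $\mu$ admits a vector representation $\mu(X) = \sum_n p_n \langle\psi_n|X\psi_n\rangle$ with $\psi_n \in \Kf$ and $\sum p_n\|\psi_n\|^2 < \infty$. The combination of $(\Mfa)''=\Mf$ with cyclicity of $\Omf$ for $\pi_{AW}[\Af]\subset \Mf$ implies that $\Mfa\Omf$ is norm-dense in $\Kf$; after truncating the series, each $\psi_n$ may be replaced by $B_n\Omf$ with $B_n \in \Mfa$, producing a finite positive combination $\mu^a = \sum_{n=1}^N p_n \langle B_n\Omf|\,\cdot\,B_n\Omf\rangle$ in the linear span of $\mathcal{S}_a$ with $\|\mu - \mu^a\| < \varepsilon$ in the predual norm. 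Applying Theorem \ref{Lem5.4} termwise and combining with the triangle inequality yields $\mu(\taup{t}{A}) \to \omp(A)$ for every $\omp$-normal $\mu$ and every $A \in \Mfa$.

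Step two extends the class of observables from $\Mfa$ to $\Mf$. By Kaplansky's theorem, for every $C \in \Mf$ there is a bounded net $\{C_\beta\}\subset \Mfa$ with $\|C_\beta\|\le \|C\|$ and $C_\beta \to C$ in the $\sigma$-strong$^*$ topology. For fixed $\beta$, step one gives $\mu(\taup{t}{C_\beta}) \to \omp(C_\beta)$ as $t \to \infty$, and $\omp(C_\beta) \to \omp(C)$ by normality of $\omp$. The remaining piece of the three-term splitting
\begin{equation*}
\mu(\taup{t}{C}) - \omp(C) = \mu(\taup{t}{C - C_\beta}) + \bigl(\mu(\taup{t}{C_\beta}) - \omp(C_\beta)\bigr) + \bigl(\omp(C_\beta) - \omp(C)\bigr)
\end{equation*}
is the first term, for which uniform control in $t$ is needed.

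The main obstacle is exactly this interchange of the $t\to\infty$ and $\beta$ limits, since $\sigma$-strong$^*$ convergence of $C_\beta$ only gives pointwise vanishing at each fixed $t$ and not uniformly along the orbit $\{\mu\circ\taup{t}{\cdot}:t\ge0\}$. To resolve it I would first enlarge Lemma \ref{Lem5.4b}: by sesquilinear continuity in $\phi,\psi\in\Kf$ together with the uniform bound $\|\taup{t}{A}\|=\|A\|$, the convergence in that lemma extends from $\phi,\psi\in\Mfa\Omf$ to arbitrary $\phi,\psi\in\Kf$, yielding $\langle\phi|\taup{t}{A}\psi\rangle\to\omp(A)\langle\phi|\psi\rangle$ for all $\phi,\psi\in\Kf$ and all $A\in\Mfa$. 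A standard density argument for bounded families of normal functionals — invoking the $\sigma$-weak density of $\Mfa$ in $\Mf$ together with normality of the limiting functional $\omp$ — then upgrades this convergence to all of $\Mf$, delivering the mixing property for $(\Mf,\alpha^Q,\omp)$.
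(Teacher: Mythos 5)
Your step one (enlarging the class of states while the observable stays in $\Mfa$) and your extension of Lemma \ref{Lem5.4b} to matrix elements $\langle\phi|\taup{t}{A}\psi\rangle$ for arbitrary $\phi,\psi\in\Kf$ are fine, and they match the first half of the paper's argument. The problem is the last step, which is exactly the obstacle you yourself flagged and then did not actually overcome. Pointwise convergence of the uniformly bounded family $\mu\circ\alpha^Q_t$ on a $\sigma$-weakly (or $\sigma$-strongly) dense $*$-subalgebra, together with normality of the limit functional $\omp$, does \emph{not} imply convergence on all of $\Mf$; density in norm would suffice, but $\Mfa$ is only weakly dense. Normality of $\omp$ controls the term $\omp(C_\beta)-\omp(C)$, not the term $\mu(\taup{t}{C-C_\beta})$, and for the latter you need a bound that is uniform in $t$ and vanishes as $C_\beta\to C$ strongly. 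That such a bound is not automatic can be seen already in the commutative model $\Mf=L^\infty[0,1]$: densities $f_j= n_j\chi_{A_{n_j}}$ with $A_n$ a union of $n^2$ intervals of total length $1/n$, $\sum_j 1/n_j<1$, give normal states converging on the $\sigma$-weakly dense subalgebra $C[0,1]$ to the (normal) Lebesgue state, while $\int f_j\chi_B=1\not\to|B|$ for $B=\bigcup_j A_{n_j}$. So the "standard density argument for bounded families of normal functionals" you invoke is not a valid principle, and as written your proof of mixing for general $C\in\Mf$ has a genuine gap.

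The paper closes this gap with a specific structural trick that your argument never uses: take the approximating vectors from $\Mf'\Omp$ rather than from $\Mfa\Omf$. For $D\in\Mf'$ and $X\in\Mf$ one has, since $\taup{t}{X}$ commutes with $D$ and $\LQ\Omp=0$,
\begin{equation*}
\langle D\Omp\,|\,\taup{t}{X}\,D\Omp\rangle
=\langle D^*D\,\Omp\,|\,e^{\imath t\LQ}X\,\Omp\rangle ,
\qquad\text{hence}\qquad
|\langle D\Omp|\taup{t}{X}D\Omp\rangle|\le \|D^*D\,\Omp\|\,\|X\Omp\|
\end{equation*}
uniformly in $t$. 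Applying this to $X=C-C_n$ with $C_n\in\Mfa$, $C_n\to C$ strongly, gives exactly the uniform-in-$t$ control needed to interchange the two limits; then cyclicity of $\Omp$ for $\Mf'$ (because $\Omp$ is separating for $\Mf$) lets one pass from vectors $D\Omp$ to arbitrary $\phi\in\Kf$, and the standard-form/KMS property guarantees that every $\omp$-normal state has a vector representative. To repair your proof you would have to incorporate this (or some equivalent uniform estimate, e.g.\ domination of the orbit $\{\mu\circ\alpha^Q_t\}$ by a fixed normal state); the invariance $\LQ\Omp=0$ and the commutant $\Mf'$ are the essential ingredients, and they are absent from your argument.
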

\subsection{Proofs.}
In the following $\const$ denotes a constant independent of
$t>0$. The value of $\const$ may change from line to line.
In this subsection the norms and scalar products 
without subscript belong to  $\Kg$ or to $\mathcal{B}(\Kg)$.
\begin{proof}[Proof of Lemma \ref{Lem5.4b}.]
First, we prove Estimate \eqref{Eq:AnhDecayEst} 
for 
$A:=\piaw [\W{f}]$, $B:=\piaw[\W{g}]$,$C:=\piaw[\W{h}]$
for $ f, g, h\in v(\g).$
Note, that $e^{\imath \mu q}=\W{\mu\oplus 0}$ and
let $v_k:=v(\mu_k\oplus 0)$. Recalling the statements of Diagram \eqref{Eq:Diag2} and \eqref{Diag3} 
we get
\begin{equation*}
\tevf{t - t_k}{P}=  \int_\R  \nu(d\mu_k)  \piaw[\W{ e^{\imath (t-t_k)|k|v_k}}].
\end{equation*}
Employing the CCR for Weyl operators we obtain
\begin{equation*}
[\W{f}, \W{g}] = 
(-2\imath)  \sin\big((1/2)\Im \langle  f | g \rangle_\h \big) \W{f+g}.
\end{equation*}
The commutator expression in Equation \eqref{CommutatorExp}
equals
\begin{align*} 
c_n\cdot& \piaw\big(\W{e^{\imath t|k|} u_n}\big)
= 
[ \piaw\big(\W{e^{\imath  (t-t_1) |k|} v_1}\big), [\ldots\\ \nonumber
&\ldots[ \piaw\big(\W{e^{\imath (t-t_n) |k|} v_n}\big), 
\piaw\big(\W{e^{\imath  t |k|} g}\big) ]\ldots ] ]
\end{align*}
with
\begin{align*}
c_n(\unl{t}, \unl{\mu}) &:=  
(-2 \imath)^n  \prod_{k=1}^n \sin\Big(\Im  \langle  e^{-\imath  t_k |k|} v_k 
| u_{k,n} \rangle_\h/2  \Big)
\\ \nonumber
u_{k,n}(\unl{t},\unl{\mu}) &:=  \sum_{m=k+1}^n e^{ -\imath  t_m |k|} v_m + g,
\quad u_{n,n}(\unl{t},\unl{\mu}) := g\\ \nonumber
u_n(\unl{t},\unl{\mu}) &:=u_{0,n}(\unl{t}, n, \unl{\mu}).
\end{align*}
Now and in the sequel we omit the arguments of $c_n, u_{k,n},  u_n$.
Let $\nu(d\unl{\mu}):= \nu(d\mu_1)\tensor \ldots \tensor  \nu(d\mu_n)$
the $n$-fold product measure of $\nu$. In the following we will use the
same symbol for different $n$. The equation in \eqref{Eq:AnhDecayEst} reads
\begin{align}\label{Eq:Sum1}
\upsilon_f&(A \taup{t}{B} C)
= \omf(\W{f} \tau_t^f(\W{g}) \W{h}) \\ \nonumber
&\phantom{=}
+ \sum_{n = 1}^\infty \imath^n 
\int_{0 \leq  t_n  \leq \ldots t_1 \leq  t} d\unl{t} \\ \nonumber
&\phantom{=}
\cdot \int \nu(d\unl{\mu}) 
c_n\cdot \omf\Big(\W{f} \W{e^{\imath t|k|} u_n} \W{h}\Big).
\end{align}
Moreover, for the expectation in $\omf$ in the line before we find
\begin{align}\label{Eq:Delta1}
\omf&\Big(\W{f} \W{e^{\imath t|k|} u_n} \W{h}\Big)\\ \nonumber
&=
 e^{\Delta_n} \omf\big(\W{f} \W{h}\big) \omf\big(\W{u_n}\big),
\end{align}
with
\begin{align*}\nonumber
\Delta_n(t,\unl{t},\unl{\mu}) :=& 
 -(1/2) \Re \langle  f + h
 | \disp e^{\imath  t |k|} u_n \rangle_\h\\ 
&-(\imath /2) \Im \langle f - h | 
e^{\imath  t |k|} u_n\rangle_\h.
\end{align*}
%
Note, that
$ |e^{\Delta_n} - 1| 
\leq  |\Delta_n|  e^{|\Re \Delta_n|}$ 
and 
\begin{equation}\label{Eq:Delta2}
\Big|\omf\big(\W{f} \W{h}\big)
 \omf(\W{ u_n})
 e^{|\Re \Delta_n|}\Big| \leq  1.
\end{equation}
Using  Equation \eqref{Eq:ExpDec}  we obtain an exponentially fast 
decay with rate $\kappa$ for 
\begin{equation*}
\omf(\W{f} \tau_t^f(\W{g}) \W{h})-\omf(\W{f} \W{h})\omf(\tau_t^f(\W{g})).
\end{equation*}
Moreover, an explicit expression for $\upsilon_f(\taup{t}{B})$ is obtained 
by setting $f=0=h$ in \eqref{Eq:Sum1}. Combining this with \eqref{Eq:Delta1}
and \eqref{Eq:Delta2} we obtain for some constant
\begin{align}\label{Eq:Sum2}
|\upsilon_f&(A \taup{t}{B} C) - \upsilon_f(A C) 
\upsilon_f(\taup{t}{B})|\\ \nonumber
&\leq  \const e^{-\kappa  t} 
 + \sum_{n = 1}^\infty  \int_{0 \leq  t_n \leq \ldots t_1 \leq  t} d\unl{t} 
\int |\nu|(d\unl{\mu}) 
\big|c_n \Delta_n\big|.
\end{align}
Here, $|\nu|$ is the absolute value of the measure $\nu$.
The constant depends only on $f, g, h$.
Recalling Lemma \ref{Lemma:Decay} and 
the definition of $\Delta_n$ and $u_n$
\begin{equation} \label{Eq:EstDelta}
|\Delta_n|
 \leq \const \big(\sum_{i=1}^n e^{-(t - t_i) \kappa}  |\mu_i| 
+ e^{-t \kappa}\big).
\end{equation}
for some constant depending on $f, g, h$. 
Furthermore, estimate \ref{LemA1} yields
\begin{equation}\label{Eq:Estc}
|c_n|  \leq  \const e^{-\kappa  t_i} |\mu_i| 
\prod_{k = i+1}^{n} (1 + \tilde{\kappa}  \mu_k^2),\ i=1,2,\ldots,n
\end{equation}
for some constant. Note, that $\tilde{\kappa}$ is independent of $f, g, h$.
We insert Estimate \eqref{Eq:EstDelta} in \eqref{Eq:Sum2}, depending
on the summands in \eqref{Eq:EstDelta} we use a different $i$ in
Estimate \eqref{Eq:Estc}. Thus
\begin{align*}
|\upsilon_f&(A \taup{t}{B} C)-\upsilon_f(A C) \upsilon_f(\taup{t}{B})|\\ \nonumber
&\leq  \const  e^{-\kappa  t} \Big(1 + 
e^{-\kappa  t} \sum_{n = 1}^\infty  
\int_{0 \leq  t_n \leq \ldots t_1 \leq  t} d\unl{t}\\ \nonumber
& \phantom{ \const  e^{-\kappa  t} \Big(1 +}
\int |\nu|(d\unl{\mu})  \big( \sum_{i = 1}^n \mu_i^2  
\prod_{k = i+1}^{n}(1 + \tilde{\kappa}  \mu_k^2)
 + 1\big)\Big)\\ \nonumber
&\leq \const e^{-\kappa  t}\Big( 1  + 
\sum_{n = 1}^\infty  
\frac{t^n}{n!} 
\int |\nu| (d\unl{\mu}) \prod_{k = 1}^{n} (1 + \tilde{\kappa}  \mu_k^2)\Big)
\\ \nonumber%
&\leq 
\const \exp(-t (\kappa - a_0 - \tilde{\kappa}  a_2))
\end{align*}
To finish the proof we need to show that
the convergence $\lim_{t\rightarrow \infty}\upsilon_f(\taup{t}{B})$ 
is exponentially fast. Therefore we shall
compare $\upsilon_f(\taup{t}{B})$ and $\upsilon_f(\taup{s}{B})$ for $0<t<s$.
Again, recall the $\upsilon_f(\taup{t}{B})$ is calculated in
\eqref{Eq:Sum1} ( for $f=0=h$). Using that $\omf $ is $\tauf$-invariant 
we obtain
\begin{align*}
|\upsilon_f(\taup{s}{B}) &- \upsilon_f(\taup{t}{B})|
\leq   \sum_{n=1}^\infty 
\int_{0 \leq  t_n \leq \ldots \leq  t_1 \leq  s} d\unl{t}\\ 
&\one[t_1 \geq  t] 
\int |\nu|(d\unl{\mu}) |c_n| |\omf(\W{u_n})|.
\end{align*}
Since $\W{u_n}$ is a unitary we get $|\omf(\W{u_n})|\leq 1$.
Employing Estimate \eqref{Eq:Estc} for $i=1$ we get
\begin{align*}
\big|&\upsilon_f(\taup{s}{B}) - \upsilon_f(\taup{t}{B})\big|
\\ \nonumber
&\leq \sum_{n=1}^\infty \const 2^n 
\int_{0\leq t_n\leq \ldots \leq t_1\leq s }d\unl{t}\ \one[t_1\geq t] 
(a_0+\tilde{\kappa} a_2)^{n-1}e^{-\kappa t_1}\\ \nonumber
&=  \sum_{n=1}^\infty \const2^n \int_t^s
\frac{(r(a_0+\tilde{\kappa} a_2))^{n-1}}{(n-1)!}e^{-\kappa r} dr\\ \nonumber
&\leq \const \int_t^\infty\exp(-(\kappa-2(a_0+\tilde{\kappa} a_2))r)dr\\ \nonumber
&\leq \const \exp(-(\kappa-2(a_0+\tilde{\kappa} a_2))t).
\end{align*}
We define $\tilde{\upsilon}(B) := \lim_{t\to \infty} \upsilon_f(\taup{t}{B})$ 
using the Cauchy-criterion.
\end{proof}
\begin{proof}[Proof of Corollary \ref{Kor1g}.]
Let now $\phi \in \Kf,\ \|\phi\|=1$ and $A,  B\in \Mfa$, so that $\|A\Omg\|=1$.
\begin{align*}
|\langle \phi  &|  \taup{t}{B}  \phi\rangle
 -  \tilde{\upsilon}(B) |\\ \nonumber
&\leq  |\langle  \phi |  \taup{t}{B}  \phi \rangle
 - \langle  A \Omf | \taup{t}{B} A \Omf \rangle|\\ \nonumber
&\quad+ |\omf(A^* \taup{t}{B} A) - 
 \tilde{\upsilon}(B) |\\ \nonumber
&\leq 2\|A \Omf-\phi\| \cdot \|B\| + |\omf(A^* \taup{t}{B} A)
 - \tilde{\upsilon}(B)|.
\end{align*}
We obtain directly
\begin{align*}
\limsup&_{t\to\infty}|\langle  \phi |  \taup{t}{B}  \phi \rangle
 -  \tilde{\upsilon}(B)| \\ \nonumber
&\leq 2  \|B\|  \inf\{\|A \Omf - \phi\| |\ 
A\in (\piaw\circ \gamma)[\Ao],\\ \nonumber
&\phantom{\leq 2  \|B\|  \inf\{\|A} \|A \Omf\| = 1\} = 0.
\end{align*}
Choosing $\phi =  \Omp$ we obtain by time-invariance of KMS states,
that $\omp  =  \tilde{\upsilon}$ over $\Mfa$.
Assume now $C\in \Mf,\  (C_n)_n \in\Mfa$ and $D\in \Mf'$,
so that $C = \slim_{n \to\infty}C_n$. Since $\LQ \Omp=0$ we have that
\begin{align*}
\langle  D \Omp &|  \taup{t}{C - C_n}  D\Omp \rangle
 = \langle  D^* D \Omp  |  \taup{t}{C - C_n}  \Omp \rangle\\ \nonumber
 =& \langle  D^* D \Omp |  e^{\imath  t  \LQ}(C - C_n)  \Omp\rangle.
\end{align*}
Hence,
\begin{align*}
&\limsup_{t\to \infty}| \omp(C) - \langle D \Omp |  \taup{t}{C} D \Omp \rangle |
\\ \nonumber
&\leq \limsup_{t\to \infty} 
|\omp(C_n) - \langle  D \Omp 
|  \taup{t}{C_n}  D \Omp \rangle|\\ \nonumber
&\quad+ \limsup_{t\to \infty} 
|\omp(C - C_n) - \langle  D \Omp | 
 \taup{t}{C - C_n}  D \Omp \rangle|\\ \nonumber
&\leq (1 + \| D^* D\|)\cdot \|(C - C_n) \Omp\| .
\end{align*}
For $n\to \infty$ we obtain
$\omp(C) = \lim_{t\to\infty}\langle  D \Omp | 
 \taup{t}{C}  D \Omp \rangle.$
Since $\Omp$ is separating for $\Mf$, it
is cyclic for $(\Mf)'$. An approximation argument yields
$
\omp(C) = \lim_{t\to\infty} \langle  \phi |  \taup{t}{C}  \phi\rangle,
$
for all $\phi \in \Kf$. Note, that any normal state $\mu$ over $\Mf$ has a
vector representative $\phi \in \Kf$, since $\omp$ is a KMS state.
\end{proof}
\bigskip
\noindent \textbf{Acknowledgments}\\
This paper is part of the author's PhD requirements.
I am grateful to Volker Bach and Edgardo Stockmeyer for many useful discussions
and helpful advice. The main part of this work 
was done during the author's stay at the Institut for Mathematics at
the University of Mainz. The work has been partially supported by the
DFG (SFB/TR 12).

\appendix
\section{A summary of results and definitions in \cite{Arai1981a,Arai1981c}} \label{Sec3}
In this section we recall definitions and statements, that
are made in \cite{Arai1981a,Arai1981c}. The most important
result is the formula for the asymptotic creation and
annihilation operators given in Lemma \ref{lem1.5}. It is
the starting point of our analysis.
\begin{Def}\label{Def1}
For $z\in \C\setminus [0, \infty)$ we get
\begin{align*} 
D(z)& := -z + 1 + \lambda^2 \||k|^{-1} \hat{\rho}\|^2_\h
 + \lambda^2  \int d^3k  \frac{\hat{\rho}(k)^2}{z - k^2}\\ \nonumber
D_\pm (r)& :=  \lim_{\epsilon  \rightarrow  0+}  D(r + \imath  \epsilon),\quad r\in [0, \infty)\\ \nonumber
Q(k)& := -\lambda  \frac{\hat{\rho}(k)}{D_+(k^2)} \\ 
Q_\pm(k) & :=  (1/2) (|k|^{1/2} \pm |k|^{-1/2}) Q(k). 
\end{align*}
\end{Def}
\noindent Furthermore, we define the following operators on $\h$.
\begin{Def}\label{Def2}
\begin{align} \label{Def2a}
(G_{\epsilon} g)(k) &:=  \int \frac{g(k')}{(|k|  |k'|)^{1/2}(k^2 - k'^2 + \imath  \epsilon)} d^3k'. \\
\label{Def2b}
G &:=  \lim_{\epsilon \rightarrow  0+} G_{\epsilon}\\ 
\label{Def2c}
T g&:=  g + \lambda  |k|^{1/2} Q G |k|^{1/2}  \hat{\rho}  g\\
\label{Def2d}
T^* g&:=  g - \lambda  |k|^{1/2} \hat{\rho} G |k|^{1/2} \ovl{Q} g \\
\label{Def2e}
W_+ g&:=  (1/2) \big\{|k|^{-1/2} T^* |k|^{1/2} + |k|^{1/2} T^* |k|^{-1/2}  \big\} g \\ 
\label{Def2f}
W_- g&:=  (1/2) \big\{|k|^{-1/2} T^* |k|^{1/2} - |k|^{1/2} T^* |k|^{-1/2}  \big\} g
\end{align}
\end{Def}
For the functions $D$ and $D_\pm$ we summarize some
results in
\begin{Lemma}  \label{lem1.1}
\begin{enumerate}
\item $D$ is analytic in $\C  \setminus  [0, \infty)$,
\item $D_{\pm}(s) :=  \lim_{\epsilon \to  0+}  D(s + \imath  \epsilon)$ 
        exists and is continuous for $s\in [0, \infty)$,
\item $\inf_{s\in [0, \infty)}  |D_{\pm}(s)| > 0$,
\item $|D(z) + z| < c_1$ and $|D(z)| > c_2$ for all $z\in\C \setminus [0, \infty)$ 
      and real constants $c_1$ and $c_2$.
\end{enumerate}
\end{Lemma}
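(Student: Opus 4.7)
The plan is to reduce all four claims to standard facts about Cauchy-type integrals. Using rotational invariance and evenness of $\hat\rho$ (Hypothesis \ref{Hyp1}(2),(5)), I would pass to spherical coordinates and substitute $u=r^2$ to rewrite
\begin{equation*}
\int_{\R^3} \frac{\hat\rho(k)^2}{z-k^2}\,d^3k = 2\pi \int_0^\infty \frac{\hat\rho(\sqrt u)^2 \sqrt u}{z-u}\,du.
\end{equation*}
The density $u\mapsto \hat\rho(\sqrt u)^2\sqrt u$ lies in $L^1(\R_+)$ and, by Hypothesis \ref{Hyp1}(3)--(4), inherits an analytic extension to a complex neighborhood of $[0,\infty)$. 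Claim (1) is then immediate from differentiation under the integral sign, and for (2) either Sokhotski--Plemelj suffices directly, or, more robustly, contour deformation into the strip afforded by Hypothesis \ref{Hyp1}(3)--(4) produces an analytic continuation of $D$ across $[0,\infty)$, of which $D_\pm$ are the boundary values.

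For (3), I would compute explicitly
\begin{equation*}
\Im D_\pm(s) = \mp 2\pi^2\lambda^2 \hat\rho(\sqrt s)^2 \sqrt s, \qquad s>0,
\end{equation*}
which is nonvanishing by Hypothesis \ref{Hyp1}(1). At $s=0$ the two $\lambda^2$ contributions to $D_\pm(0)$ cancel exactly, since $\||k|^{-1}\hat\rho\|_\h^2 = 4\pi\int_0^\infty\hat\rho(r)^2\,dr$ coincides, after the substitution $r=\sqrt u$, with the remaining principal-value piece, giving $D_\pm(0)=1$. Continuity of $|D_\pm|$ on $[0,\infty)$, combined with this non-vanishing and the asymptotic $\Re D_\pm(s)\sim -s$ at infinity, delivers the uniform positive infimum.

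For the upper bound in (4), write $D(z)+z - (1+\lambda^2\||k|^{-1}\hat\rho\|_\h^2) = \lambda^2 I(z)$ with $I(z)$ the Cauchy integral above. When $\operatorname{dist}(z,[0,\infty))\geq 1$ one has $|I(z)|\leq C$ trivially; for $z$ nearer the positive axis, shift the contour into the strip of Hypothesis \ref{Hyp1}(3)--(4), where $|z-u|$ is uniformly bounded below, giving the uniform bound. For $|D(z)|>c_2$, exploit
\begin{equation*}
\Im D(z) = -\Im z \cdot \Bigl(1 + \lambda^2\int \frac{\hat\rho(k)^2}{|z-k^2|^2}\,d^3k\Bigr),
\end{equation*}
so $|D(z)|\geq |\Im z|$ off the real axis; combined with (3) on the boundary, continuity of the extension, and $|D(z)|\geq |z|/2$ for $|z|$ large, a compactness argument supplies $c_2>0$. \textbf{The main obstacle} is obtaining this uniform lower bound in the transition regime where $z$ is simultaneously close to $[0,\infty)$ and of moderate modulus, where neither the $|\Im z|$ bound nor the asymptotic $|D(z)|\sim |z|$ applies, and one must patch them via continuity of the analytic extension across the cut; this step may implicitly require $\lambda$ to be small enough to exclude zeros of that extension in a complex neighborhood of $[0,\infty)$, in the same spirit as Lemma \ref{Lemma:Zeros}.
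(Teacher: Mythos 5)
First, a point of comparison: the paper itself contains no proof of Lemma \ref{lem1.1}; Appendix \ref{Sec3} simply imports it from \cite{Arai1981a,Arai1981c}. So your argument can only be measured against the standard route, which it essentially follows, and its key computations are correct: the reduction to a one-dimensional Cauchy transform, the Plemelj formula $\Im D_{\pm}(s)=\mp 2\pi^{2}\lambda^{2}\hat{\rho}(\sqrt{s})^{2}\sqrt{s}$, the exact cancellation giving $D_{\pm}(0)=1$, the identity $\Im D(z)=-\Im z\,\bigl(1+\lambda^{2}\int \hat{\rho}(k)^{2}|z-k^{2}|^{-2}d^{3}k\bigr)$, and the contour shift permitted by Hypothesis \ref{Hyp1}(3)--(4) -- the same device the paper uses in its proof of Lemma \ref{Lemma:Zeros}. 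Two details deserve care: the continuation is across $(0,\infty)$ only, since $u\mapsto\hat{\rho}(\sqrt{u})^{2}\sqrt{u}$ has a branch point at $u=0$, so continuity of $D_{\pm}$ at $s=0$ and the bound on the Cauchy integral for $z$ near the origin must come from the direct H\"older/dominated-convergence estimate (which works because the density vanishes like $\sqrt{u}$); and for $|D(z)+z|\le c_{1}$ near the cut one really does need the deformation or a Plemelj-type cancellation, since putting absolute values inside the integral produces a logarithmic divergence as $\Im z\to 0$.

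The one substantive defect is your closing hedge. No smallness of $\lambda$ is needed, and seeking nonvanishing of the \emph{analytic continuation} in a complex neighborhood of $[0,\infty)$ is the wrong target: by Lemma \ref{Lemma:Zeros} that continuation has zeros (resonances) at distance of order $\lambda^{2}$ from the real axis near $\pm1$, so such a neighborhood statement is delicate and, more importantly, irrelevant -- only the first-sheet boundary values enter. The transition regime you worry about is already covered by the pieces you assembled: on $\{\,\Im z\ge 0,\ |z|\le R\,\}$ the function equal to $D$ for $\Im z>0$ and on $[-R,0)$, and to $D_{+}$ on $[0,R]$, is continuous by your claim (2) and zero-free -- off the real axis by your $\Im D$ identity, on $[0,R]$ by claim (3), and on the negative half-axis by the elementary estimate $D(x)\ge 1-x\ge 1$ for $x\le 0$ (since $k^{-2}+(x-k^{2})^{-1}\ge 0$ there), a case your sketch omits but which is needed for the compactness argument; compactness then yields a positive minimum, the mirror argument with $D_{-}$ handles $\Im z\le 0$, and $|D(z)|\ge |z|-c_{1}$ finishes $|z|>R$. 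With that patch your proof is complete for every $\lambda$, consistent with the lemma carrying no smallness assumption.
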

Let $M_\alpha(\R^3) = \{ f :  \|f\|_\alpha =  \| |k|^{\alpha} f\|_{\h} < \infty\}$, 
for $\alpha \in \R$. For the operators introduced in Definition \ref{Def2}
we have:
\begin{Lemma} \label{lem1.2}
\begin{enumerate}
\item $G_\epsilon$ is bounded on $\h$, uniformly for $\epsilon>0$.
\item $G  :=  \slim_{\epsilon \to  0+}  G_\epsilon$ exists as an operator on $\h$.
\item $G $ is bounded on $\h$ and $M_{-1/2} (\R^3)$.
\item $G^* = -G$, i.e $G$ is skew-symmetric on $\h$.
\end{enumerate}
\end{Lemma}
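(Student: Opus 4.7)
The plan is to exploit the rotational invariance of the kernel to reduce $G_\epsilon$ to a one-dimensional convolution-type operator on $L^2(0,\infty)$, and then apply Fourier analysis supplemented by a weighted $L^2$-estimate.

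First I observe that the kernel $(|k||k'|)^{-1/2}(|k|^2-|k'|^2+\imath\epsilon)^{-1}$ is rotation invariant in $k$ and $k'$ separately, so $G_\epsilon$ annihilates the orthogonal complement of radial functions in $\h$ and preserves the radial subspace. For radial $g(k)=g_0(|k|)$, integrating out the angular variable and substituting $\sigma:=|k|^2,\ \sigma':=|k'|^2$ a direct calculation gives
\[
(G_\epsilon g)(\sqrt{\sigma})\,=\,\frac{2\pi}{\sigma^{1/4}}\int_0^\infty \frac{h(\sigma')\,d\sigma'}{\sigma-\sigma'+\imath\epsilon},\qquad h(\sigma):=\sigma^{1/4}g_0(\sqrt{\sigma}),
\]
and the assignment $g_0\mapsto h$ is (up to the constant $\sqrt{2\pi}$) an isometry from the radial subspace of $\h$ onto $L^2((0,\infty),d\sigma)$.

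For (1) and (2), I extend $h$ by zero to $\R$ and view the operator $T_\epsilon h(\sigma):=\int_\R h(\sigma')(\sigma-\sigma'+\imath\epsilon)^{-1}d\sigma'$ as convolution with $k_\epsilon(\sigma)=(\sigma+\imath\epsilon)^{-1}$. A residue computation yields $\widehat{k_\epsilon}(\xi)=-2\pi\imath\,e^{-\epsilon\xi}\,\one_{\xi>0}$, hence $\|\widehat{k_\epsilon}\|_\infty\le 2\pi$ uniformly in $\epsilon>0$. Plancherel then gives uniform $L^2$-boundedness of $T_\epsilon$, and via the isometry uniform boundedness of $G_\epsilon$ on $\h$, proving (1). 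Since $\widehat{k_\epsilon}(\xi)\to-2\pi\imath\,\one_{\xi>0}$ pointwise while remaining uniformly bounded, dominated convergence on the Fourier side yields $T_\epsilon\to T_0$ strongly on $L^2(\R)$; transferring back gives the strong limit $G=\slim_{\epsilon\to 0+}G_\epsilon$, proving (2). Item (4) is a direct kernel calculation: the kernel $K(k,k')=[(|k||k'|)^{1/2}(|k|^2-|k'|^2+\imath 0)]^{-1}$ satisfies $\overline{K(k',k)}=-K(k,k')$ because $\overline{|k'|^2-|k|^2+\imath 0}=-(|k|^2-|k'|^2+\imath 0)$, and Fubini produces $\langle Gg,f\rangle_\h=-\langle g,Gf\rangle_\h$.

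For the bound on $M_{-1/2}$ in (3), the same change of variables reduces the desired inequality $\||k|^{-1/2}Gg\|_\h\le\const\,\||k|^{-1/2}g\|_\h$ to the weighted $L^2$-estimate
\[
\int_0^\infty |(T_0 h)(\sigma)|^2\,\sigma^{-1/2}\,d\sigma\,\le\,\const\int_0^\infty |h(\sigma)|^2\,\sigma^{-1/2}\,d\sigma.
\]
Writing $T_0=\imath\pi H-\imath\pi I$ in terms of the Hilbert transform $H$ on $\R$, this reduces to the boundedness of $H$ on $L^2(\R,|\sigma|^{-1/2}d\sigma)$; since $|\sigma|^{-1/2}$ lies in the Muckenhoupt $A_2$ class on $\R$, the standard Calder\'on--Zygmund weighted theory delivers the bound. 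The principal obstacle is precisely this weighted estimate; items (1), (2) and (4) are routine once the reduction to one dimension has been carried out.
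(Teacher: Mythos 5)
Your proposal cannot be matched against an internal argument, because the paper never proves Lemma \ref{lem1.2}: it is quoted in Appendix \ref{Sec3} verbatim from Arai's papers, so the only ``paper proof'' is a citation. On its own terms your argument is correct and self-contained, and the route is attractive: restricting to the radial (s-wave) subspace (which $G_\epsilon$ preserves while annihilating its orthogonal complement), substituting $\sigma=|k|^2$ so that $G_\epsilon$ becomes, up to a unitary, convolution with $(\sigma+\imath\epsilon)^{-1}$ followed by restriction to $(0,\infty)$, reading off the uniform multiplier bound from $-2\pi\imath\,e^{-\epsilon\xi}\one_{\xi>0}$, getting the strong limit by dominated convergence on the Fourier side, and reducing the $M_{-1/2}$-bound to the boundedness of the Hilbert transform on $L^2(\R,|\sigma|^{-1/2}d\sigma)$ via the Muckenhoupt $A_2$ property of $|\sigma|^{-1/2}$. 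Three points should be tightened. (i) In item (4) you cannot literally apply Fubini to the $+\imath 0$ kernel, which is only a distribution; either check $\ovl{K_\epsilon(k',k)}=-K_\epsilon(k,k')$ at fixed $\epsilon>0$, conclude $G_\epsilon^*=-G_\epsilon$ there, and pass to the limit using the strong convergence of (2), or note that the reduced operator has Fourier multiplier $-2\pi\imath\one_{\xi>0}$, whose complex conjugate is its negative. (ii) For the $M_{-1/2}$ estimate you should add one line on non-radial $g$: since $G$ kills the non-radial part and the angular-averaging projection is also a contraction for the radial weight $|k|^{-1}d^3k$, the radial computation suffices; the same remark, plus density of $\h\cap M_{-1/2}$ in $M_{-1/2}$, justifies speaking of ``boundedness on $M_{-1/2}$'' at all. (iii) The Plemelj decomposition should read $T_0=\pi H-\imath\pi I$ for the standard Hilbert transform with multiplier $-\imath\,\sgn\xi$, not $\imath\pi H-\imath\pi I$; this is harmless since only weighted boundedness of $H$ is used. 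With these small repairs the proof is complete, and it is arguably more transparent than the explicit computations one is referred to in \cite{Arai1981a,Arai1981c}.
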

Given a bounded operator $A$ on $\h$ we denote by $\ovl{A}$ an operator acting on $g\in \h$ by
means of $(\ovl{A} g)(k) :=  \ovl{(A \ovl{g})(k)}$. The bar is of course the complex
conjugation.
\begin{Lemma} \label{lem1.3}
\begin{enumerate}
\item \label{lem1.3a} $T$ and $T^*$ (see Definition \ref{Def2}) are bounded on 
       $M_\alpha(\R^3)$ for $\alpha = 1/2, 0, -1$.
\item \label{lem1.3b} $T^*$ is the adjoint of $T$.
\item \label{lem1.3c} For a rotation invariant function $h$ on $\R^3$, we have $T^* h T = \ovl{T}^* h \ovl{T}$
\item \label{lem1.3d} Furthermore, if  $h Q\in \h $, then $T^* h Q = \ovl{T}^* h \ovl{Q}$.
\item \label{lem1.3e} $T^* Q = 0$ and $\|Q\|_\h=1$
\end{enumerate}
\end{Lemma}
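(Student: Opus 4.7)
The five items of Lemma \ref{lem1.3} describe algebraic and analytic properties of the scattering-type operators $T$ and $T^*$ on $\h$. I would prove them in the order (a), (b), (e), (d), (c), since the sum rule $T^*Q=0$ and the normalization $\|Q\|_\h=1$ in (e) provide the key identities that feed into the rotation-invariant statements (c) and (d).

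For (a), starting from $Tg = g + \lambda |k|^{1/2}Q\cdot G(|k|^{1/2}\hat\rho\,g)$, I would estimate $\||k|^\alpha Tg\|_\h$ for $\alpha\in\{1/2,0,-1\}$. By Lemma \ref{lem1.1}(iii)--(iv) we have $|D_+(k^2)|\ge c_2$ and $D(z)\sim -z$ at infinity, so $|Q(k)|\le \const\,\hat\rho(k)(1+|k|^2)^{-1}$; combining this with the boundedness of $G$ on $\h$ and on $M_{-1/2}(\R^3)$ from Lemma \ref{lem1.2}(i)--(iii) gives each $\alpha$-estimate, and the same argument with $Q$ replaced by $\ovl Q$ handles $T^*$. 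For (b), I would pull the multiplication operators $|k|^{1/2}Q$ and $|k|^{1/2}\hat\rho$ across the inner product (the first producing $\ovl Q$ by complex conjugation) and then apply $G^*=-G$ from Lemma \ref{lem1.2}(iv); the resulting expression is exactly the defining formula of $T^*$.

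For (e), both identities rest on Plemelj--Sokhotski analysis of $D_\pm$. Passing to spherical coordinates, substituting $s=r^2$, and using the jump $\Im D_+(s) = -2\pi^2\lambda^2\hat\rho(\sqrt s)^2\sqrt s$ recasts $\|Q\|_\h^2$ as $\pi^{-1}\int_0^\infty \Im(1/D_+(s))\,ds$. I would evaluate this by closing the contour along the boundary of $\C\setminus[0,\infty)$ in the upper half-plane: $1/D$ is analytic there (by Lemma \ref{lem1.1} for small $\lambda$), the contribution along the negative real axis is real, and the large semicircle contributes a residue $-i\pi$ via $D(z)\sim -z$, yielding $\|Q\|_\h=1$. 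For $T^*Q=0$, I would compute $(T^*Q)(k) = Q(k) - \lambda\hat\rho(k)\int |Q(k')|^2/(k^2-k'^2+i0)\,d^3k'$ and apply an analogous contour argument to $\oint dz/[D(z)(k^2-z)]$, the pole at $z=k^2$ being handled by the $+i0$ prescription; the integral evaluates to $-1/D_+(k^2)$, producing the desired cancellation.

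For (d) with rotation invariant $h$, I would expand $T^*(hQ)-\ovl T^*(h\ovl Q) = h(Q-\ovl Q) -\lambda|k|^{1/2}\hat\rho\,(G-\ovl G)(|k|^{1/2}h|Q|^2)$. The Sokhotski jump $(G-\ovl G)$ acts on rotation invariant arguments as the constant $-4\pi^2 i$ by a direct three-dimensional delta function computation, while $Q-\ovl Q = 2i\lambda\hat\rho\,\Im(1/D_+)$; substituting $\Im D_+=-2\pi^2\lambda^2\hat\rho^2|k|$ shows that the two contributions cancel identically. Statement (c) then follows by expanding $T^*hT-\ovl T^*h\ovl T$ into its order-$\lambda^0$ piece (trivial), two order-$\lambda$ cross terms (handled by (d) applied to $h$ and $\ovl h$), and an order-$\lambda^2$ piece that reduces to iterated applications of (d) and a final Sokhotski computation. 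The main obstacle is the contour argument in (e): the zero-freeness of $D$ in the upper half-plane and the residue at infinity must be extracted carefully, and the same Plemelj machinery underlies (d). Once (e) is in hand, (c) and (d) are structural consequences of elementary but careful manipulations with the Cauchy kernels.
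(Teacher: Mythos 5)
You should know at the outset that the paper does not prove Lemma \ref{lem1.3} at all: Appendix \ref{Sec3} only quotes it from \cite{Arai1981a,Arai1981c}, so your argument can only be measured against Arai's original proofs, not against anything in this text. That said, most of your sketch is sound. Items (\ref{lem1.3a}) and (\ref{lem1.3b}) are correct as described: the bound $|Q(k)|\leq \mathrm{const}\,\hat{\rho}(k)(1+|k|^2)^{-1}$ from Lemma \ref{lem1.1}, the mapping properties of $G$ from Lemma \ref{lem1.2}, and $G^*=-G$ together with conjugation of the multiplication operators do the job. Your contour computation for (\ref{lem1.3e}) is also correct: with $\Im D_+(s)=-2\pi^2\lambda^2\hat{\rho}(\sqrt{s})^2\sqrt{s}$ one gets $\|Q\|^2_{\h}=\pi^{-1}\int_0^\infty \Im (1/D_+(s))\,ds$, and the Cauchy integral of $1/D$ over the boundary of the cut plane (using $D(z)\sim -z$ at infinity and $|D|\geq c_2$ from Lemma \ref{lem1.1}) gives both $\|Q\|_{\h}=1$ and, applied to $1/\big(D(z)(k^2-z)\big)$, the value $-1/D_+(k^2)$ needed for $T^*Q=0$. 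Item (\ref{lem1.3d}) works as well, up to a sign slip: $Q-\ovl{Q}=-2\imath\lambda\hat{\rho}\,\Im(1/D_+)$, not $+$; with the correct sign the two contributions in your displayed difference do cancel, since $G-\ovl{G}$ acts as the constant $-4\pi^2\imath$ on rotation invariant arguments.

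The genuine gap is in (\ref{lem1.3c}). Your plan to dispose of the two ``order-$\lambda$ cross terms'' by invoking (\ref{lem1.3d}) cannot work as stated: (\ref{lem1.3d}) is an identity between the two \emph{vectors} $T^*(hQ)$ and $\ovl{T}^*(h\ovl{Q})$, whereas the cross terms of $T^*hT-\ovl{T}^*h\ovl{T}$ are \emph{operators}, e.g.\ $h|k|^{1/2}\big(Q\,G-\ovl{Q}\,\ovl{G}\big)|k|^{1/2}\hat{\rho}$ acting on an arbitrary $g$, and these are not separately zero. Moreover, a splitting by powers of $\lambda$ is illusory, because $Q$ itself depends on $\lambda$ through $D_+$ (indeed $Q-\ovl{Q}=O(\lambda^3)$), so the naive order-by-order cancellation fails. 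Writing $G=\ovl{G}+\Delta$ with $\Delta=G-\ovl{G}$ the Sokhotski jump, the actual cancellation pairs the $(Q-\ovl{Q})$-terms of the first-order part with the $\ovl{G}\,\Delta$ and $\Delta\,\ovl{G}$ pieces of the $\lambda^2$ term, and the $\Delta\,(\cdot)\,\Delta$ piece with the remaining first-order jump terms; rotation invariance of $h$, $\hat{\rho}$ and $Q$ enters exactly there, through $\Delta u=-4\pi^2\imath\,u$ for rotation invariant $u$ and $\Im D_+(k^2)=-2\pi^2\lambda^2\hat{\rho}(k)^2|k|$. The mechanism you name is the right one, and this kernel-level bookkeeping does close the argument, but as written the reduction of (\ref{lem1.3c}) to (\ref{lem1.3d}) is a gap.
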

The next algebraic relations ensure that
the incoming creation- and annihilation operators  fulfill
the CCR.
\begin{Lemma} \label{lem1.4}
The operators $W_+$ and $W_-$ defined in Definition \ref{Def2}
are bounded on $M_\alpha(\R^3)$ for $\alpha = -1/2, 0$ and 
fulfill
\begin{gather*}
W_+^* W_+ - W_-^* W_- + P_{+}- P_{-} = \one,\\
W_+ W_+^* - \ovl{W}_- \ovl{W}_-^* = \one,\\
\ovl{W}_+^* W_- - \ovl{W}_-^* W_+ + P_{+-} - P_{-+} = 0,\\
W_- W_+^*-\ovl{W}_+ \ovl{W}_-^* = 0,
\end{gather*}
where
\begin{gather*}
P_{\pm} f =  \langle  Q_\pm | f \rangle_\h   Q_\pm,\qquad
P_{+-} f =  \langle  Q_- | f \rangle_\h   \ovl{Q}_+ \\
P_{- +} f =  \langle Q_+ | f\rangle_\h   \ovl{Q}_- .
\end{gather*}
Furthermore, $W_-$ is a Hilbert-Schmidt operator with integral kernel
\begin{equation*}
W_-(k,k') =  \frac{\lambda  \hat{\rho}(k) \ovl{Q(k')}}{2 (|k| |k'|)^{1/2} (|k| + |k'|)}.
\end{equation*}
\end{Lemma}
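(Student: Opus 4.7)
The plan splits into three parts—boundedness, the Hilbert–Schmidt kernel formula for $W_-$, and the four algebraic relations. The crucial first simplification is to push the outer $|k|^{\pm 1/2}$ factors through $T^*=\one-\lambda |k|^{1/2}\hat{\rho}\,G\,|k|^{1/2}\ovl{Q}$. Since the $|k|^{\pm 1/2}$ are multiplication operators, they can be combined with the multiplicative factors framing $G$ inside $T^*$, giving
\begin{align*}
|k|^{-1/2}T^*|k|^{1/2} &= \one - \lambda\,\hat{\rho}\,G\,(|k|\ovl{Q}),\\
|k|^{1/2}T^*|k|^{-1/2} &= \one - \lambda\,(|k|\hat{\rho})\,G\,\ovl{Q}.
\end{align*}
Substituted into \eqref{Def2e}–\eqref{Def2f} this yields
\begin{align*}
W_+ &= \one - \tfrac{\lambda}{2}\bigl[\hat{\rho}\,G\,(|k|\ovl{Q}) + (|k|\hat{\rho})\,G\,\ovl{Q}\bigr],\\
W_- &= \tfrac{\lambda}{2}\bigl[(|k|\hat{\rho})\,G\,\ovl{Q} - \hat{\rho}\,G\,(|k|\ovl{Q})\bigr].
\end{align*}

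From the representation of $W_-$ I compute the integral kernel using \eqref{Def2a}: factoring $\hat{\rho}(k)\ovl{Q(k')}$ leaves $(|k'|-|k|)G_\varepsilon(k,k')$, and the prefactor $|k|-|k'|$ cancels the Cauchy singularity in $G_\varepsilon$. In the limit $\varepsilon\to 0$ this delivers the smooth kernel
\begin{equation*}
W_-(k,k') = \frac{\lambda\,\hat{\rho}(k)\,\ovl{Q(k')}}{2\,(|k||k'|)^{1/2}\,(|k|+|k'|)}.
\end{equation*}
The Hilbert–Schmidt bound then follows from the elementary inequality $|k||k'|/(|k|+|k'|)^2\leqslant 1/4$, together with $\hat\rho\in\h$ and $\|Q\|_\h=1$ from Lemma \ref{lem1.3}(\ref{lem1.3e}). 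Inserting weights $|k|^{\alpha}$ on either side gives boundedness of $W_-$ on $M_\alpha$ for $\alpha\in\{-1/2,0\}$. The analogous calculation for $W_+-\one$ produces the Hilbert-transform-type kernel $-(\lambda/2)\hat{\rho}(k)\ovl{Q(k')}/[(|k||k'|)^{1/2}(|k|-|k'|+i0)]$; boundedness on $M_0$ and $M_{-1/2}$ follows by reducing to radial functions (using rotation invariance of $\hat{\rho}$ and hence of $Q$) and invoking $L^2$-boundedness of the Hilbert transform on the half-line with the appropriate weights, together with the decay of $\hat{\rho}$ from Hypothesis \ref{Hyp1}(4).

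For the four algebraic relations I would use the decomposition $W_\pm=(A\pm B)/2$ with $A=|k|^{-1/2}T^*|k|^{1/2}$ and $B=|k|^{1/2}T^*|k|^{-1/2}$. Then $W_+^*W_+-W_-^*W_- = (A^*B+B^*A)/2$, and each of the stated identities reduces to the combinatorial expansion of such $A,B$ cross products. At this point I invoke the structural identities of Lemma \ref{lem1.3}: $T^*hT=\ovl{T}^*h\ovl{T}$ for rotation-invariant $h$ collapses the "diagonal" contributions, while $T^*hQ=\ovl{T}^*h\ovl{Q}$ together with $T^*Q=0$ and its complex conjugate $\ovl{T}^*\ovl{Q}=0$ show that the remaining pieces are rank-one and match exactly $P_\pm$ and $P_{\pm\mp}$, with the normalization $\|Q\|_\h=1$ producing the identity $\one$ on the right-hand side of the first two relations.

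The main obstacle is this last step: carrying out the algebra requires careful bookkeeping of the six $|k|^{\pm 1/2}$ weights that surround each $T^*$, and the exact cancellations rely on the pole structure of $D_+^{-1}$ hidden inside $Q=-\lambda\hat{\rho}/D_+(k^2)$ in conjunction with the Cauchy kernel of $G$. In effect, $T$ and $T^*$ are wave operators for a half-line scattering problem rather than unitaries, and the rank-one projection terms encode precisely their "scattering defect"; proving the identities amounts to verifying that this defect is organized exactly into the four combinations $P_+-P_-$, $0$, $P_{+-}-P_{-+}$, $0$ required by the CCR preservation of the Bogoliubov transform $f\mapsto a^*(W_+f)+a(\ovl{W}_-f)$.
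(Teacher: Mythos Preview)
The paper does not prove this lemma: it sits in Appendix~\ref{Sec3}, which is explicitly a ``summary of results and definitions in \cite{Arai1981a,Arai1981c}'', and Lemma~\ref{lem1.4} is stated there without proof. There is therefore no in-paper argument to compare your proposal against; you are effectively reconstructing Arai's proof.

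On the substance of your sketch: the kernel computation for $W_-$ is correct---the factor $|k|-|k'|$ cancels the Cauchy denominator of $G$ and yields exactly the stated kernel. One small slip in the Hilbert--Schmidt estimate: the inequality $|k||k'|\leqslant(|k|+|k'|)^2/4$ turns $|W_-(k,k')|^2$ into a multiple of $|\hat\rho(k)|^2|Q(k')|^2/(|k|^2|k'|^2)$, so what you actually need is $|k|^{-1}\hat\rho\in\h$ and $|k|^{-1}Q\in\h$, not merely $\hat\rho\in\h$ and $\|Q\|_\h=1$. Both do hold (the first by Hypothesis~\ref{Hyp1} and the definition of $R$ in \eqref{Eq:DefHg}, the second because $|D_+|$ is bounded below by Lemma~\ref{lem1.1}), but you should say so.

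For the four algebraic identities your decomposition $W_\pm=(A\pm B)/2$ is the right starting point, but the reduction you describe lands on products of the form $TT^*$ (e.g.\ $A^*B=|k|^{1/2}TT^*|k|^{-1/2}$), whereas the identities you cite from Lemma~\ref{lem1.3} concern $T^*hT$ and $T^*hQ$. Bridging that gap requires either the explicit resolvent identity for $D_\pm^{-1}$ that underlies the definition of $T$, or a separate argument relating $TT^*$ to $\ovl T^*\ovl T$ via the structure of $G$; this is where Arai's actual work lies, and as you yourself acknowledge, your proposal stops at a plan for this step rather than a proof.
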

The starting point of our work is the following result:
\begin{Lemma} \label{lem1.5}
Let $\Ho$ be $\Hg$ with $\lambda=0$.
The asymptotic creation- and annihilation- operators $a^{\#}_{in}(f)$ 
exist for $f\in M_0(\R^3)\cap M_{-1/2}(\R^3)$,
\begin{equation*}
a^{\#}_{in}(f)  =  \slim_{t \to -\infty} e^{\imath  t  \Hg } 
  e^{-\imath  t  \Ho }  a^{\#}(f) e^{\imath  t  \Ho } e^{-\imath  t  \Hg },
\end{equation*}
$\dom(a^{\#}_{in}(f))\supset \dom(\Hg)$
for
\begin{align*} 
a_{in}(f) 
 &=  \langle  Q_- | \ovl{f} \rangle_\h  A^* + \langle  Q_+ |  \ovl{f} \rangle_\h  A
 + a^*(W_- \ovl{f}) + a(\ovl{W}_+ f)\\ \nonumber
a^*_{in}(f) 
 &=  \langle  \ovl{f} | Q_- \rangle_\h  A + \langle  \ovl{f} |  Q_+ \rangle_\h  A^*
 + a(W_- \ovl{f}) + a^*(\ovl{W}_+ f).
\end{align*}
Moreover, $a_{in}(f)$ and $a^*_{in}(g)$ fulfill the CCR for $f,g\in \h$.
\end{Lemma}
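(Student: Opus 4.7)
The plan is to exploit the fact that $\Hg$ is quadratic in creation and annihilation operators on $\Fock_b[\C\oplus\h]$, so the Heisenberg evolution acts as a (real-linear) symplectic transformation on the one-particle space $\C\oplus\h$, and then show that this one-parameter family converges strongly on a suitable dense set as $t\to -\infty$. Concretely, I would first establish existence by Cook's method. Set $U(t):=e^{\imath t\Hg}e^{-\imath t\Ho}$ and consider $U(t)a^{\#}(f)U(t)^*$. On the dense core $\mathcal{D}:=\dom(\Hg)\cap\mathcal{F}_{\mathrm{fin}}$ of vectors of finite particle number, one computes
\begin{equation*}
\frac{d}{dt}\bigl(U(t)a^{\#}(f)U(t)^*\psi\bigr)=\imath\,U(t)\bigl[\Hg-\Ho,\,a^{\#}(f)\bigr]U(t)^*\psi,
\end{equation*}
and since $\Hg-\Ho=H_I+R\,q^2\tensor\one$ is quadratic, the commutator is again linear in annihilation/creation operators. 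Integrability in $t$ then reduces to decay estimates of matrix elements $\langle\varphi,\,e^{-\imath t h_{ph}}g\rangle$ and $\langle\varphi,\,e^{-\imath t H_{osc}}h\rangle$ paired against the form factor $|k|^{-1/2}\hat{\rho}$; Hypothesis \ref{Hyp1}(4) supplies the $L^1$ in $t$ bound via stationary phase / Riemann--Lebesgue applied to rotation-invariant $\hat{\rho}\in L^2$ with enough regularity.

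Once the strong limits exist on $\mathcal{D}$, the second step is to identify them. Since the evolution preserves the bilinear form in creation/annihilation operators, the limiting maps are given by a real-linear operator $S:\C\oplus\h\to\h$ of the form $S(c\oplus h)=\alpha(c)+\ovl{W_+^*}h+\ovl{W_-^*}\ovl{h}$ for some coefficients. To compute the coefficients, I would solve the stationary (Lippmann--Schwinger type) equation associated with the quadratic generator: the off-diagonal coupling $q\cdot\Phi(|k|^{-1/2}\hat{\rho})$ produces a resolvent-type formula, and the boundary value of the resolvent at $k^2+\imath 0$ yields precisely the operators $G$, $T^*$ of Definition \ref{Def2}; the zero-mode of the particle is carried by the functions $Q_\pm$ introduced in Definition \ref{Def1}, accounting for the $A,A^*$ contributions. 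A direct comparison of the resulting formula with the ansatz in the statement determines all coefficients, giving the asserted expressions for $a_{in}(f)$ and $a^*_{in}(f)$.

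For the CCR, I would simply insert the explicit formulas into $[a_{in}(f),a^*_{in}(g)]$, expand into sixteen terms, and use the algebraic identities of Lemma \ref{lem1.4} together with $[A,A^*]=\one$ and $[a(u),a^*(v)]=\langle u|v\rangle_\h$. The relations $W_+^*W_+-W_-^*W_-+P_+-P_-=\one$, $W_-W_+^*=\ovl{W}_+\ovl{W}_-^*$, etc., are exactly what is needed for all unwanted cross terms to cancel and for the diagonal term to collapse to $\langle f|g\rangle_\h$.

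The main obstacle I expect is the Cook estimate: the commutator $[H_I,a^{\#}(f)]$ contains an unbounded factor $q$ on the particle side, so one cannot simply bound the integrand uniformly. The remedy is to keep the conjugation by $e^{\pm\imath t\Ho}$ explicit and exploit oscillations in both tensor factors simultaneously, using the analytic continuation of $\hat{\rho}$ in Hypothesis \ref{Hyp1}(3)--(4) to deform contours and produce the needed integrable decay; bounding everything uniformly on $\dom(\Hg)$ (rather than only on a finite-particle core) is the technical heart of the argument, and this is precisely the point at which the explicit analysis of \cite{Arai1981a,Arai1981c} enters.
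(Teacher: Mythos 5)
You should first note that this paper does not prove Lemma \ref{lem1.5} at all: Appendix \ref{Sec3} explicitly quotes it from \cite{Arai1981a,Arai1981c}, so there is no in-paper argument to match. Arai's own derivation is not Cook's method on the interaction but an explicit solution of the Heisenberg (quantum Langevin) equations of motion for this exactly solvable quadratic model: one Laplace-transforms the coupled equations for $q(t)$, $p(t)$, $a(k,t)$, inverts using the boundary values $D_\pm$ of the function $D(z)$ of Definition \ref{Def1}, and reads off the asymptotics; the operators $T$, $W_\pm$, $Q_\pm$ of Definition \ref{Def2} are literally the coefficients of that explicit solution. Your plan (Cook for existence, stationary scattering theory to identify the limit, Lemma \ref{lem1.4} for the CCR) is a legitimate alternative skeleton, and the CCR step is exactly what the identities of Lemma \ref{lem1.4} are designed for.

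As a proof, however, your sketch has concrete gaps. First, the displayed derivative is wrong as written: differentiating $U(t)a^{\#}(f)U(t)^*$ with $U(t)=e^{\imath t\Hg}e^{-\imath t\Ho}$ gives $\imath e^{\imath t\Hg}[\Hg-\Ho,\,a^{\#}(e^{\mp\imath t|k|}f)]e^{-\imath t\Hg}$, i.e.\ the commutator must contain the freely evolved argument; your formula conjugates the unevolved commutator by the full $U(t)$, which carries no oscillation and would never be integrable in $t$. Second, even with the correct integrand, the scalar $\langle |k|^{-1/2}\hat{\rho}\,|\,e^{\imath t|k|}f\rangle_\h$ decays by Riemann--Lebesgue but is not $L^1$ in $t$ for general $f\in M_0(\R^3)\cap M_{-1/2}(\R^3)$; you need a dense class of regular $f$, the relative bound $\|a^{\#}(f)\psi\|\leq \const(\|f\|_0+\|f\|_{-1/2})\|(\Haf+1)^{1/2}\psi\|$ to pass to the stated class, and a bound on $\|(\Hosc+1)^{1/2}e^{-\imath t\Hg}\psi\|$ uniform in $t$ (to control the factor $q$), none of which you establish. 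Third, the identification step --- that the limit has precisely the coefficients $\langle Q_\mp|\ovl f\rangle$, $W_-\ovl f$, $\ovl{W}_+f$ --- is asserted by "direct comparison" but is exactly the substantive computation that constitutes the lemma; without carrying out the Lippmann--Schwinger analysis (which reproduces Arai's $D_+$, $G$, $T^*$) the statement is not derived but presupposed. So the strategy is sound, but as it stands it is an outline whose hard analytic core is deferred rather than a proof.
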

\section{}\label{Sec7}
The next lemma was originally proved in \cite{Maassen1983}.
\begin{Lemma} \label{LemA1}
Let  $f,f_1,\ldots, f_n$ be vectors in a Hilbert space $\h$,
and real numbers $0 \leq  t_1 \leq  t_2 \leq\ldots t_n$
and $\lambda_1, \lambda_2, \ldots, \lambda_n\in \R$
and $\alpha, \beta,  \gamma > 0$, such that
\begin{align*}
|\Im  \langle f_k |  f\rangle | &\leq   \alpha  |\lambda_k| e^{-t_k  \gamma},\quad k=1,\ldots,n\\ 
|\Im  \langle f_k |  f_j\rangle | &\leq  \beta  |\lambda_k \lambda_j|\cdot e^{-(t_k - t_j)  \gamma}, \quad n\ge k > j \ge  1.
\end{align*}
Then for $j=1,\ldots,n$ we get
\begin{align} \label{EqA.1}
A(j)&:=\Big|\prod_{k=j}^n  \sin\big( \sum_{m=k+1}^{n} \Im \langle f_k | f_m\rangle+ 
\Im \langle f_k | f\rangle \big)\Big|\\ \nonumber
&\leq  e^{-\gamma  t_j}  |\lambda_j| \alpha  \prod_{k = j+1}^{n}(1 + \beta  \lambda_k^2). 
\end{align}
We use the convention that $\sum_{m=n+1}^{n}(*)=0$ and $\prod_{m=n+1}^{n}(*)=1$.
\end{Lemma}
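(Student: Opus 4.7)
The plan is to prove the estimate by strong reverse induction on $j$, writing $S_k:=\sum_{m=k+1}^n\Im\langle f_k|f_m\rangle+\Im\langle f_k|f\rangle$, so that $A(j)=\prod_{k=j}^n|\sin S_k|$. The base case $j=n$ is immediate: the single factor $|\sin(\Im\langle f_n|f\rangle)|$ is bounded by $|\Im\langle f_n|f\rangle|\leq \alpha|\lambda_n|e^{-\gamma t_n}$ via $|\sin x|\leq |x|$ and the first hypothesis, and the empty product over $k>n$ equals $1$.

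For the inductive step at $j$, suppose the estimate holds for all $j'>j$ and split $A(j)=|\sin S_j|\cdot A(j+1)$. Bound the leading factor by
\[
|\sin S_j|\leq |S_j|\leq \alpha|\lambda_j|e^{-\gamma t_j}+\sum_{m=j+1}^n\beta|\lambda_j\lambda_m|e^{-\gamma(t_m-t_j)}
\]
via the triangle inequality and both hypotheses. The crucial idea is to estimate $A(j+1)$ by a \emph{different} upper bound in each summand. For the $\alpha$-term, use the trivial bound $A(j+1)\leq 1$. For the $m$-th cross-term, use the chain
\[
A(j+1)\leq A(m)\leq \alpha|\lambda_m|e^{-\gamma t_m}\prod_{k=m+1}^n(1+\beta\lambda_k^2),
\]
where the first inequality follows from dropping the intermediate factors $|\sin S_k|\leq 1$ for $j+1\leq k\leq m-1$, and the second from the inductive hypothesis applied at index $m$. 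Combining and using $e^{-2\gamma(t_m-t_j)}\leq 1$ (which is just $t_m\geq t_j$) yields
\[
A(j)\leq \alpha|\lambda_j|e^{-\gamma t_j}\Bigl[1+\sum_{m=j+1}^n\beta\lambda_m^2\prod_{k=m+1}^n(1+\beta\lambda_k^2)\Bigr].
\]
The induction closes via the elementary telescoping identity $\prod_{k\in I}(1+x_k)=1+\sum_{m\in I}x_m\prod_{k>m,\,k\in I}(1+x_k)$ for any finite ordered index set $I$ (proved by grouping nonempty subsets of $I$ according to their minimum element); applied with $x_k=\beta\lambda_k^2$ and $I=\{j+1,\ldots,n\}$, it identifies the bracketed expression above with $\prod_{k=j+1}^n(1+\beta\lambda_k^2)$.

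The main obstacle is selecting the correct asymmetric upper bound on $A(j+1)$ in the inductive step. Any uniform choice---either the inductive hypothesis at $j+1$ throughout or the trivial bound $A(j+1)\leq 1$ throughout---fails to produce the required shape; the former overcounts higher-order subsets and the latter loses the exponential decay factor. The essential pairing is that the $m$-th cross-term, which already carries a factor $|\lambda_m|$, must be combined with the inductive hypothesis evaluated at the matching index $m$ (supplying a second $|\lambda_m|$) to produce $\lambda_m^2$; only then does the leftover factor $\prod_{k=m+1}^n(1+\beta\lambda_k^2)$ slot into the telescoping identity and reconstitute $\prod_{k=j+1}^n(1+\beta\lambda_k^2)$ exactly.
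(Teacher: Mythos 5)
Your proof is correct and follows essentially the same route as the paper's: a descending induction in which the cross-term indexed by $m$ is paired with the bound $A(j+1)\leq A(m)$ and the inductive estimate at $m$, while the single $\alpha$-term is paired with $A(j+1)\leq 1$, after which the telescoping identity $1+\sum_{m}\beta\lambda_m^2\prod_{k>m}(1+\beta\lambda_k^2)=\prod_{k>j}(1+\beta\lambda_k^2)$ closes the induction. The only cosmetic differences are that you apply $|\sin x|\leq|x|$ to the whole argument $S_j$ instead of using $|\sin(x+y)|\leq|\sin x|+|\sin y|$ factorwise, and that you make explicit the final product identity and the harmless slack factor $e^{-2\gamma(t_m-t_j)}\leq 1$, both of which the paper leaves implicit.
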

\begin{proof}
First we remark that for $x, y\in \R$
\begin{gather*}
|\sin(x + y)| \leq  |\sin(x)| + |\sin(y)|,
\quad |\sin(x)| \leq  |x|,\\ |\sin(x)| \leq  1
\end{gather*}
We proceed by induction for $j$. We assume $j\leq n-1$ and that $A(j+1),\ldots,A(n)$ obey Estimate \eqref{EqA.1}.
Since $0\leq A(i)\leq A(i+1)\leq 1$ for all $i=1,\ldots,n$ we get
\begin{align}\label{EqA.2}
A(j)
&\leq A(j+1)
\cdot  \Big(  \sum_{m = j+1}^{n} \big|\sin(\Im \langle f_j | f_m\rangle)\big|\\ \nonumber
&\phantom{\leq A(j+1) \cdot  \Big(  \sum_{m = j+1}^{n}  \big|\sin(\Im \langle } 
+\big|\sin(\Im \langle f_j | f\rangle)\big| \Big)\\ \nonumber
&\leq 
   \sum_{m = j+1}^{n} \big|\sin(\Im \langle f_j | f_m\rangle)\big| A(m)
+\big|\sin(\Im \langle f_j | f\rangle)\big| \\ \nonumber
&\leq   \sum_{m = j+1}^{n} \Big( \beta  |\lambda_m \lambda_j|\cdot e^{-(t_j - t_m)  \gamma}\Big)\\ \nonumber
&\qquad 
\Big( e^{-\gamma  t_m} |\lambda_m| \alpha 
    \prod_{k = m+1}^{n}(1 + \beta  \lambda_k^2)\Big)
+ \alpha  |\lambda_j| e^{-t_j  \gamma}.
\end{align}
Since the right-hand side (r.h.s) of Equation \eqref{EqA.2} is less
than the (r.h.s) of Equation  \eqref{EqA.1}, we obtain Lemma \ref{LemA1}.
\end{proof}


\begin{thebibliography}{1}
\bibitem{Arai1981a}
A.~Arai.
\newblock On a Model of a Harmonic Oscillator coupled to a 
Quantized,
Massless, Scalar Field I.
\newblock {\em J.~Math.~Phys.}, 22:2539--2548, 1981.
%
\bibitem{Arai1981c}
A.~Arai.
\newblock On a Model of a Harmonic Oscillator coupled to a 
Quantized,
Massless, Scalar Field II.
\newblock {\em J.~Math.~Phys.}, 22:2549--2552, 1981.
%
\bibitem{ArakiWoods1963}
H.~Araki, E.~Woods.
\newblock Representations of the Canonical Commutation Relations
          describing a non-relativistic infinite free Bose gas.
\newblock {\em J.~Math. Phys.}, 4:637--662, 1963.
%
\bibitem{Attal2006}
S.~Attal, A.~Joye, C.~A.~Pillet.
\newblock Open Quantum Systems I, The Hamiltonian approach.
\newblock Lecture Notes in Mathematics, Springer-Verlag, 2006.
%
\bibitem{BachFroehlichSigal2000}
V. Bach, J. Fröhlich, I.~M.~Sigal. 
\newblock Return to Equilibrium.
\newblock{\em J.~Math.~Phys.}, 41:3985--4060, 2000.
%
%
\bibitem{Berezin1966}
F.~A.~Berezin.
\newblock The method of second quantization.
\newblock Academic Press, New York, 1966.
%
\bibitem{BotvichMaassen2009}
H.~Maassen, D.~Botvich.
\newblock A {G}alton-{W}atson estimate for {D}yson series.
\newblock  { \em Ann. Henri Poincar\'e}, 10: 1141--1158, 2009.
%
\bibitem{BratteliRobinson1987}
O.~Bratteli, D.~Robinson.
\newblock Operator Algebras and Quantum Statistical Mechanics 1.
\newblock Text and Monographs in Physics, Springer-Verlag, 1987.
%
\bibitem{BratteliRobinson1996}
O.~Bratteli, D.~Robinson.
\newblock Operator Algebras and Quantum Statistical Mechanics 2.
\newblock Text and Monographs in Physics, Springer-Verlag, 1996.
%
\bibitem{DerezinskiJaksic2001}
J.~Derezinski, V.~Jaksic.
\newblock Spectral theory of Pauli-Fierz operators.
\newblock {\em Journ. Func. Analysis}, 180 : 241-327, 2001.
%
\bibitem{DerezinskiJaksic2003}
J.~Derezinski, V.~Jaksic.
\newblock Return to Equilibrium for {P}auli-{F}ierz Operators.
\newblock {\em Ann. Henri Poincare}, 4 : 739--793, 2003.
%
\bibitem{FidaleoLiverani1999}
F.~Fidaleo, C.~Liverani.
\newblock Ergodic Properties {F}or {A} {Q}uantum {N}on {L}inear {D}ynamics.
\newblock {\em J. Stat. Phys.}, 97 : 957-1009, 1999.
%
%
\bibitem{FroehlichMerkliSigal2004}
J.~Fr{\"{o}}hlich, M.~Merkli, I.~M.Sigal.
\newblock Ionization of Atoms in a Thermal Field.
\newblock {\em Journal of Statistical Physics}, 116: 311--359, 2004.
%
\bibitem{HaagHugenholzWinnink1967}
R.~Haag, N.~Hugenholz, M.~Winnink.
\newblock On the Equilibrium States in Quantum Statistical Mechanics.
\newblock {\em Commun. Math. Phys.}, 5: 215--236, 1967.
%
%
\bibitem{JaksicPillet1996a}
V.~Jak\v{s}i{\'c}, C.~A.~Pillet.
\newblock On a Model for Quantum Friction. {I}{I}: {F}ermi's Golden Rule 
         and Dynamics at Positive Temperature.
\newblock{ \em Commun. Math. Phys.}, 176:619--643, 1996
%
\bibitem{JaksicPillet1996b}
V.~Jak\v{s}i{\'c}, C.~A.~Pillet.
\newblock On a Model for Quantum Friction {I}{I}{I}: Ergodic 
         Properties of the Spin-Boson System.
\newblock{ \em Commun. Math. Phys.}, 178:627--651, 1996
%
\bibitem{Koenenberg2009b}
M.~Könenberg.
\newblock An Infinite Level Atom coupled to a Heat Bath.
\newblock In preparation.
%
\bibitem{Maassen1983}
H.~Maassen.
\newblock Return to Thermal Equilibrium by the Solution of
a Quantum Langevin Equation.
\newblock {\em J. Stat. Phys.}, 34: 239--262, 1984.
%
\bibitem{Merkli2001}
M.~Merkli.
\newblock Positive Commutators in Non-Equilibrium Statistical Quantum Mechanics.
\newblock {\em Commun.~Math.~Phys.}, 223: 327--362, 2001.  
%
\bibitem{ReedSimonII1980}
M.~Reed, B.~Simon.
\newblock Methods of Modern Mathematical Physics: {I}{I}. {F}ourier Analysis 
         and Self-Adjointness
\newblock Academic Press, 1980.
%
\bibitem{Spohn1997}
H.~Spohn.
\newblock Asymptotic completeness for Rayleigh scattering
\newblock {\em J. Math. Phys.}, 38: 2281--2296, 1997.
%
\end{thebibliography}
\end{document}